\crefname{figure}{Figure}{Figures}
\crefname{theorem}{Theorem}{Theorems}
\crefname{lemma}{Lemma}{Lemmas}
\crefname{corollary}{Corollary}{Corollaries}
\crefname{section}{Section}{Sections}
\crefname{appendix}{Appendix}{Appendices}
\crefname{remark}{Remark}{Remarks}
\crefname{claim}{Claim}{Claims}
\crefname{conjecture}{Conjecture}{Conjectures}
\crefname{observation}{Observation}{Observations}
\newcommand{\matin}{M} 
\newcommand{\mattar}{{M'}} 
\newcommand{\exposed}{isolated\xspace}
\newcommand{\ps}{S} 
\DeclareMathOperator{\oH}{\overrightarrow{H}}
\newcommand{\BigO}{\mathcal{O}}
\definecolor{tableblue}{rgb}{0.1,0.1,0.7}
\definecolor{tablecell}{rgb}{0.1,0.7,0.7}
\title{Flipping odd matchings in geometric and combinatorial settings}
\titlerunning{Flipping odd matchings in geometric and combinatorial settings}
\author{Oswin Aichholzer}{Graz University of Technology}{oswin.aichholzer@tugraz.at}{https://orcid.org/0000-0002-2364-0583}{}
\author{Sofia Brenner}{University of Kassel}{sbrenner@mathematik.uni-kassel.de}{https://orcid.org/0009-0006-8512-2569}{German Research Foundation (DFG) grant 522790373}
\author{Joseph Dorfer}{Graz University of Technology}{joseph.dorfer@tugraz.at}{https://orcid.org/0009-0004-9276-7870}{Austrian Science Foundation (FWF) grant 10.55776/DOC183}
\author{Hung P. Hoang}{TU Wien}{phoang@ac.tuwien.ac.at}{https://orcid.org/0000-0001-7883-4134}{Austrian Science Foundation (FWF) project Y 1329 START-Programm}
\author{Daniel Perz}{University of Perugia}{daniel.perz@unipg.it}{https://orcid.org/0000-0002-6557-2355}{MUR PRIN project 2022ME9Z78 - ``NextGRAAL:~Next-generation algorithms for constrained GRAph visuALization''}
\author{Christian Rieck}{University of Kassel}{christian.rieck@mathematik.uni-kassel.de}{https://orcid.org/0000-0003-0846-5163}{German Research Foundation (DFG) grant 522790373}
\author{Francesco Verciani}{University of Kassel}{francesco.verciani@mathematik.uni-kassel.de}{https://orcid.org/0009-0000-0536-8646}{German Research Foundation (DFG) grant 522790373}
\authorrunning{O.~Aichholzer, S.~Brenner, J.~Dorfer, H.~Hoang, D.~Perz, C.~Rieck, and F.~Verciani}
\keywords{Odd matchings, reconfiguration, flip graph, geometric, combinatorial, connectivity, \NP-hardness, \FPT}
\begin{document}

    \maketitle

    \begin{abstract}
        We study the problem of reconfiguring \emph{odd matchings}, that is, matchings that cover all but a single vertex. Our reconfiguration operation is a so-called \emph{flip} where the unmatched vertex of the first matching gets matched, while consequently another vertex becomes unmatched.
        We consider two distinct settings: the \emph{geometric setting}, in which the vertices are points embedded in the plane and all occurring odd matchings are crossing-free, and a \emph{combinatorial setting}, in which we consider odd matchings in general graphs.
        
        For the latter setting, we provide a complete polynomial time checkable characterization of graphs in which any two odd matchings can be reconfigured into each another. 
        This complements the previously known result that the flip graph is always connected in the geometric setting~\cite{oddmatchings}. 
        In the combinatorial setting, we prove that the diameter of the flip graph, if connected, is linear in the number of vertices. Furthermore, we establish that deciding whether there exists a flip sequence of length $k$ transforming one given matching into another is \NP-complete in both the combinatorial and the geometric settings. To prove the latter, we introduce a framework that allows us to transform partial order types into general position with only polynomial overhead. 
        Finally, we demonstrate that when parameterized by the flip distance $k$, the problem is fixed-parameter tractable (\FPT) in the geometric setting when restricted to convex point sets.
    \end{abstract}
    \pagebreak
    \section{Introduction}
\label{sec:introduction}

In many areas of discrete mathematics and theoretical computer science, one is interested not only in finding individual solutions to combinatorial problems, but also in understanding how these solutions relate to each other---specifically, whether it is possible to gradually transform one solution into another through a sequence of small, valid changes (so-called~\emph{flips}), while maintaining feasibility at each step.
This framework is known as \emph{reconfiguration}. 
It~provides insight into the structure of the solution space and has applications in various areas. 
For example in optimization~\cite{lin1973effective}, as every optimization problem can be reformulated as a reconfiguration problem by introducing an appropriate threshold on the objective functions~\cite{ItoDHPSUU11}, as well as combinatorial enumeration and robustness~\cite{avis1996reverse,mütze2024combinatorialgraycodesanupdated}.
Moreover,~many classical results in combinatorics naturally fall within this framework, e.g., the 5-color theorem~\cite{MR30735} or Vizing's theorem~\cite{assadi2024vizingstheoremnearlineartime,vizing1965critical}.
We also refer to the following surveys~\cite{Nishimura18,Heuvel13}.
A~prominent example is the reconfiguration of triangulations of a convex $n$-gon; since each binary tree corresponds to the geometric dual of such a triangulation, this problem is equivalent to performing rotations on binary trees~\cite{HURTADO1999179,MR928904}.

This leads to structural and algorithmic questions: 
(1) Can any structure of a certain class be transformed into any other using the given flip operation?
(2) What is the worst-case number of flips required to perform such a transformation?
(3) Given two specific structures, how can we compute the minimum number of flips needed to reconfigure one into the other?
These questions can also be rephrased in terms of the flip graph, where vertices represent the structures and edges correspond to flips: 
(1) Is the flip graph \emph{connected}? 
(2) What is the \emph{diameter} of the flip graph? 
(3) What is the computational \emph{complexity} of finding shortest paths between vertices in the flip graph?

A central example is the reconfiguration of \emph{matchings} in graphs, particularly \emph{perfect matchings}, which are sets of pairwise non-adjacent edges covering every vertex.
In this setting, a small local change corresponds to removing a set of edges and inserting another along an alternating cycle, while maintaining a valid matching throughout. 

We study the reconfiguration of \emph{odd matchings}, which are matchings that cover all but one vertex in a graph.
This variant arises naturally in graphs without perfect matchings due to parity constraints, or in point sets of odd cardinality.
In such cases, we refer to the graph or point set as being of odd \emph{order}, where the order is the number of vertices or points.
Each flip adds an edge between the unmatched vertex (which we also refer to as \emph{\exposed}) and another vertex $v$ and removes the matching edge $vw$ that was previously incident to~$v$. 
This~way, $w$ becomes the new \exposed vertex.
We consider two distinct settings:
In the \emph{geometric setting}, the vertices are points in the plane, and matchings consist of straight lines between points that do not share endpoints and that do not cross.
Each such \emph{geometric flip} yields a (plane) odd matching.
In the \emph{combinatorial setting}, we consider odd matchings in an input graph $G$, and a \emph{combinatorial flip} consists of removing and adding an edge of $G$ such that the result is again an odd matching.
For an illustration of both settings, see~\cref{fig:different-settings}.

\paragraph*{Problem description} 
We study the problem \textsc{Flipping Odd Matchings}, which concerns reconfiguring one odd matching into another through a sequence of valid flip operations. 
We examine the problem in both the geometric and the combinatorial setting with their corresponding flip operations. 
The associated decision problem asks whether, given two odd matchings $\matin$ and $\mattar$ and an integer $k \in \mathbb{N}$, there exists a sequence of at most $k$ flip operations that transforms $\matin$ into $\mattar$ within the respective setting.

\begin{figure}[htb]
    \centering
    \includegraphics[]{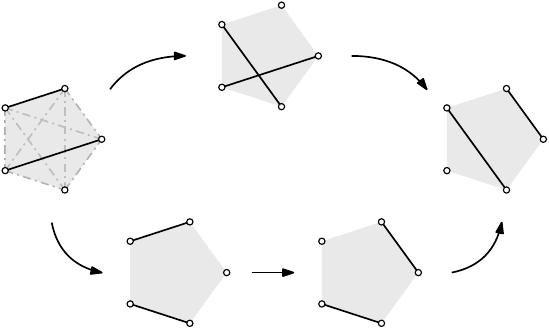}
    \caption{This figure illustrates the two distinct settings and their corresponding flip types.
    The odd matching on the left must be reconfigured into the odd matching on the right.
    The clockwise flip sequence represents a combinatorial reconfiguration, while the counterclockwise sequence is geometric.
    Dotted-dashed lines in the initial matching indicate additional edges present in the graph.}
    \label{fig:different-settings}
\end{figure}

\paragraph*{Our contribution}
In this paper, we present a number of novel results on reconfiguring odd matchings in geometric and combinatorial settings.
In particular, we provide a complete characterization of graphs in which each two odd matchings can be flipped into each other.
Moreover, if the corresponding (combinatorial) flip graph is connected, its diameter is linear in the number of vertices of the input graph.
Furthermore, we show that the decision variant of \textsc{Flipping Odd Matchings} is \NP-complete in both considered settings, but fixed-parameter tractable in the case of convex point sets when parameterized in the flip distance.

A detailed summary of our results is given in~\cref{tab:results}.

\begin{table}[htb]
\caption{Current status of reconfiguring odd matchings; new results are highlighted in light blue.}
\centering
\begin{tabular}{c >{\raggedright}p{3.3cm} p{2.7cm} p{3.2cm} p{2.25cm} }
	& \sffamily\textbf{Setting} & \sffamily\textbf{Connectedness} & \sffamily\textbf{Diameter} & \sffamily\textbf{Complexity}
	\\\hhline{-----}
    \parbox[t]{3mm}{\multirow{5}{*}{\rotatebox[origin=c]{90}{\sffamily\textbf{Geometric}\hspace*{0.1cm}}}} & \multirow{3}{*}{Convex} & \multirow{3}{*}{Yes~\cite{oddmatchings}} & \cellcolor{tablecell!25} & \cellcolor{tablecell!25}
    \\
     & & & \multirow{-2}{*}{\cellcolor{tablecell!25}\shortstack[l]{at most $\nicefrac{3n}{2}-\BigO(1)$,\\ \textcolor{tableblue}{\cref{cor:radius}}}} & \cellcolor{tablecell!25}\\
    & & & at least $n-2$~\cite{oddmatchings} & \multirow{-3}{*}{\cellcolor{tablecell!25}\shortstack[l]{\FPT,\\ \textcolor{tableblue}{\cref{thm:FPT}}}}\\
    \hhline{~|----}
	& \multirow{2}{*}{General position} & \multirow{2}{*}{Yes~\cite{oddmatchings}} & \multirow{2}{*}{$\Omega(n)$ and $\BigO(n^2)$~\cite{oddmatchings}} & \cellcolor{tablecell!25}\\
    & & & & \multirow{-2}{*}{\cellcolor{tablecell!25}\shortstack[l]{\NP-complete,\\ \textcolor{tableblue}{\cref{thm:general_hard}}}}\\
    \hhline{-----}
	\parbox[t]{3mm}{\multirow{6}{*}{\rotatebox[origin=c]{90}{ \sffamily\textbf{Combinatorial}\hspace*{0.2cm}}}} & \multirow{2}{*}{Rectangular grids} & \cellcolor{tablecell!25} & \cellcolor{tablecell!25} & \cellcolor{tablecell!25}\\
    & & \multirow{-2}{*}{\cellcolor{tablecell!25}\shortstack[l]{Yes,\\ \textcolor{tableblue}{\cref{rem:grid}}}} & \multirow{-2}{*}{\cellcolor{tablecell!25}\shortstack[l]{$\Theta(n)$,\\ \textcolor{tableblue}{\cref{thm:linear}}}} & \multirow{-2}{*}{\cellcolor{tablecell!25}\shortstack[l]{\NP-complete,\\ \textcolor{tableblue}{\cref{thm:grid_hard}}}}\\
    \hhline{~|----}
	& \multirow{2}{*}{Planar graphs} & \cellcolor{tablecell!25} & \cellcolor{tablecell!25} & \cellcolor{tablecell!25} \\
    & & \multirow{-2}{*}{\cellcolor{tablecell!25}\shortstack[l]{Characterization,\\ \textcolor{tableblue}{\cref{thm:reconfigurable}}}} & \multirow{-2}{*}{\cellcolor{tablecell!25}\shortstack[l]{if connected: $\Theta(n)$,\\ \textcolor{tableblue}{\cref{thm:linear}}}} & \multirow{-2}{*}{\cellcolor{tablecell!25}\shortstack[l]{\NP-complete,\\ \textcolor{tableblue}{\cref{thm:grid_hard}}}}\\
    \hhline{~|----}
	& \multirow{2}{*}{General graphs} & \cellcolor{tablecell!25} & \cellcolor{tablecell!25} & \cellcolor{tablecell!25}\\
    & & \multirow{-2}{*}{\cellcolor{tablecell!25}\shortstack[l]{Characterization,\\ \textcolor{tableblue}{\cref{thm:reconfigurable}}}} & \multirow{-2}{*}{\cellcolor{tablecell!25}\shortstack[l]{if connected: $\Theta(n)$,\\ \textcolor{tableblue}{\cref{thm:linear}}}} & \multirow{-2}{*}{\cellcolor{tablecell!25}\shortstack[l]{\NP-complete,\\ \textcolor{tableblue}{\cref{thm:grid_hard}}}}
\end{tabular}
\label{tab:results}
\end{table}

All technical details and proofs for statements marked with $(\star)$ are given in the appendix.
\pagebreak
\paragraph*{Related work}
In the geometric setting, it was recently shown in~\cite{oddmatchings} that for a set of $n$ points in general position in the plane, the flip graph of odd matchings is connected, with diameter bounded between $\Omega(n)$ and $\BigO(n^2)$. 
In contrast, for perfect matchings in the geometric setting, the connectedness of the flip graph remains an open question when flips are restricted to alternating cycles of sublinear length. 
However, if no such restriction is imposed, the flip graph is known to be connected~\cite{articlematchings_2}. 
Recent work shows that finding shortest flip sequences is \NP-hard for point sets in general position, when the flip size is restricted to 4-cycles~\cite{binucci2025flippingmatchingshard}.

In the combinatorial setting, the connectivity of the flip graph for perfect matchings has been studied in~\cite{bonamy2019perfectmatchingreconfigurationproblem}.
The authors showed that the problem is \PSPACE-complete for several graph classes, including split graphs and bipartite graphs of bounded bandwidth with maximum degree five. 
In contrast, they proved that it can be solved in polynomial time for strongly orderable graphs, outerplanar graphs, and cographs. 
In these positive cases, they also provided flip sequences of linear length.

Odd matchings have also been studied in the combinatorial setting, particularly in the special case of the sliding block puzzle known as \emph{Gourds}~\cite{hamersma2020gourds,kakimura2024reconfiguration}. 
Here, the underlying graph is a triangular grid graph, where matching edges are assigned distinct colors, and in each flip, the added edge inherits the color of the removed edge.
The connectedness of the flip graph is fully characterized when the underlying graph is a solid triangular grid graph~\cite{hamersma2020gourds}.
For~triangular grid graphs with holes, sufficient conditions for connectedness are known, but a complete characterization remains open~\cite{kakimura2024reconfiguration}.

We refer to~\cite{BOSE200960} for a survey of additional results and open questions on the reconfiguration of planar graphs in both the combinatorial and geometric settings.
Reconfiguration in general is also closely linked to \emph{Gray codes}~\cite{mütze2024combinatorialgraycodesanupdated}, which aim to list all feasible configurations in an order such that each configuration differs from the previous one only by a single flip. 
Such an ordering corresponds to identifying a Hamiltonian cycle in the flip graph, providing a compact and systematic traversal of the solution space~\cite{hhmmn-gtgpcp-99,articlematchings,HURTADO1999179,rivera2001hamilton}.
    \section{Preliminaries and basic observations}
\label{sec:prelim-obs}

\subparagraph*{Geometric and combinatorial settings.} 
We consider odd matchings in two different settings. 
In the \emph{geometric setting}, let $\ps$ be a set of $n$ points embedded in the plane.
A line segment between two points of $\ps$ is an \emph{edge}.
A \emph{matching} on $\ps$ is a set of edges whose endpoints are pairwise distinct, said to be \emph{plane} if no two edges cross.
An \emph{odd matching} $M$ on $\ps$ is a plane matching on $\ps$ where exactly one point in $\ps$ is not an endpoint of any edge of $M$. 
A~\emph{geometric flip} in an odd matching removes one edge and inserts another one so that the result remains an plane matching.
In essence, an odd matching on a point set $\ps$ is simply a plane odd matching on the complete graph with vertex set $\ps$.

In the \emph{combinatorial setting}, we consider odd matchings on a fixed given graph~$G$. 
A~\emph{matching} $M$ in a simple graph $G=(V,E)$ is a set of pairwise disjoint edges, that is, no two edges in $M$ share a common vertex. 
It is called \emph{odd} if exactly one vertex of $G$ is not incident to an edge in $M$.
A \emph{combinatorial flip} in an odd matching $M$ removes one edge and replaces it with another from $G$, preserving the property of being an odd matching.
In both settings, the new edge has to be incident to the \exposed vertex of the previous odd matching.

The \emph{flip graph} of a graph $G$ (or point set $\ps$) has as its vertices the odd matchings of~$G$ (resp., $\ps$), with edges connecting odd matchings that differ in a single flip.
We say that an odd matching $M$ can be \emph{reconfigured} to another odd matching $M'$ if there exists a flip sequence transforming $M$ into $M'$, i.e., there is a path between them in the flip graph. 
A~graph or point set is said to be \emph{reconfigurable} if its flip graph is connected, meaning any two odd matchings can be reconfigured into one another. 
The \emph{flip distance} $d(M, M')$ between two matchings $M$ and $M'$ is the length of a shortest path between them in the flip graph. 
The~\emph{diameter} of the flip graph is the maximum flip distance over all pairs of odd matchings. 
Its \emph{radius} is defined as $\min_{M} \max_{M'} d(M, M')$, and this minimum is attained by odd matchings that minimize their farthest distance to any other odd matching.
Note that the diameter is at least the radius and at most twice the radius.

\subparagraph*{Union and symmetric difference of matchings in both settings.} 
Let $M$ and $M'$ be odd matchings.
An \emph{$M$-alternating path} is a path that alternates edges in~$M$ and not in~$M$; $M$-alternating cycles are defined analogously.
Then $M \cup M'$ is a disjoint union of
\begin{itemize}
    \item a path of even length (possibly zero) that connects the \exposed vertex in $M$ to the \exposed vertex in $M'$ and alternates edges of $M'$ and $M$,
    \item cycles of even length alternating edges of $M$ and $M'$, and
    \item edges that lie in both $M$ and $M'$, which we call \emph{happy edges}.
\end{itemize}

The \emph{symmetric difference} $M \triangle M'$ is the set of edges that appear in only one of the two odd matchings.
In other words, it is obtained from $M \cup M'$ by removing all happy edges.

\subparagraph*{Flipping along alternating paths.} 
We consider the combinatorial setting.
Let $P$ be an \mbox{$M$-alternating} path from the \exposed vertex $v_M$ to some vertex $v$, where the last edge belongs to $M$. 
We can flip the matching along $P$, hence replacing $M \cap P$ by the edges in $P \setminus M$; the new \exposed vertex is $v$.  
Next, we consider $M$-alternating paths to cycles in $M \triangle M'$.

\begin{lemma}\label{lem:alternating_paths}
Let $M$ and $M'$ be odd matchings of a graph $G$. Suppose $M$ can be reconfigured to $M'$. For every cycle $C$ in $M \triangle M'$, there exists a vertex $v \in V(C)$ such that $v$ has an $M$-alternating path to the \exposed vertex $v_{M}$.%
\end{lemma}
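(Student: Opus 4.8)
The plan is to argue by contradiction, so suppose there is a cycle $C$ in $M \triangle M'$ none of whose vertices is joined to $v_M$ by an $M$-alternating path. As a component of a symmetric difference of two matchings, $C$ is an alternating cycle of even length at least four; its $M$-edges form a perfect matching of $V(C)$, all of whose edges lie inside $C$. In particular every vertex of $C$ is matched by $M$, so $v_M \notin V(C)$. I would then reduce everything to the single assertion that \emph{no flip sequence starting at $M$ ever alters the matching on $V(C)$}. Granting this, every odd matching reachable from $M$ agrees with $M$ on $V(C)$; since $M$ can be reconfigured to $M'$ by hypothesis, $M'$ agrees with $M$ on $V(C)$. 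But the $M$-edges of $C$ do not belong to $M'$, so $M$ and $M'$ do differ on $V(C)$ --- a contradiction.

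To control which edges a flip may touch, for an odd matching $N$ with \exposed vertex $v_N$ let $R(N)$ denote the set of vertices reachable from $v_N$ by $N$-alternating paths. A flip from $N$ adds an edge $v_N u$ and removes the matching edge $uw$; as $v_N-u-w$ is itself an $N$-alternating path, the vertices $v_N$, $u$, and $w$ all lie in $R(N)$, and hence both edges that the flip modifies have both endpoints in $R(N)$. The heart of the argument is the invariance claim that a single flip does not change this set, that is, $R(N') = R(N)$ where $N' = N - uw + v_N u$. I would prove this through matching-edge contraction: letting $G/N$ be the graph obtained by contracting every edge of $N$, the standard correspondence (a path in $G/N$ lifts to an $N$-alternating path from the \exposed vertex, and conversely) identifies $R(N)$ with the set of original vertices in the connected component of $v_N$ in $G/N$. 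Passing from $N$ to $N'$ only regroups the three vertices $v_N,u,w$ --- from the singleton $\{v_N\}$ and the pair $\{u,w\}$ to the pair $\{v_N,u\}$ and the singleton $\{w\}$ --- while leaving all other super-vertices and all edges incident to vertices outside $\{v_N,u,w\}$ untouched. In both contractions these three vertices remain mutually connected through the edges $v_N u$ and $uw$, and their adjacencies to the rest of the graph are identical, so the component of $v_N$ in $G/N$ and the component of $w$ in $G/N'$ contain exactly the same original vertices; thus $R(N') = R(N)$.

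Finally I would assemble the pieces: along any flip sequence $M = M_0, M_1, \dots$ the reachable set is constant and equal to $R(M)$, and every flip modifies only edges whose endpoints lie in $R(M)$. By the contradiction hypothesis $V(C) \cap R(M) = \emptyset$, so no flip ever adds or deletes an edge incident to a vertex of $C$; consequently the matching on $V(C)$ remains $M|_{V(C)}$ throughout the sequence, which is precisely the reduced assertion above. The main obstacle is the invariance of $R(N)$: its conceptual core is the contraction correspondence, and the delicate point is to verify that a flip merely regroups the three vertices $v_N,u,w$ without changing their mutual connectivity or their links to the remainder of $G$. Once this is in place, the equality of the two components --- and hence the whole lemma --- follows immediately.
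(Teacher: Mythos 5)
Your high-level plan is sound, and the invariance claim at its heart is in fact true: flips only touch edges whose endpoints are alternating-reachable from the currently isolated vertex, and that reachable region does not change under flips, so a cycle disjoint from it could never be altered. The problem is your justification of the crucial invariance step. There is no ``standard correspondence'' between $N$-alternating reachability and connectivity in the matching-contraction $G/N$; the claimed identification is false. Concretely, take $V = \{v_N, x_1, y_1, x_2, y_2\}$, matching $N = \{x_1y_1,\, x_2y_2\}$, and non-matching edges $v_Nx_1$ and $x_1x_2$. In $G/N$ the three super-vertices form a path, so everything lies in the component of $v_N$. But the only $N$-alternating paths from $v_N$ are $v_N x_1$ and $v_N x_1 y_1$: after traversing the matching edge $x_1y_1$ you stand at $y_1$, which has no non-matching edge, and the edge $x_1x_2$ cannot be used because you arrive at $x_1$ along a non-matching edge. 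Hence $x_2, y_2$ lie in the component of $v_N$ in $G/N$ but not in $R(N)$. This is exactly the phenomenon that forces Edmonds' blossom algorithm to contract odd cycles rather than matching edges: matching-edge contraction overestimates alternating reachability. Since the identification of $R(N)$ with a component of $G/N$ fails, your derivation of $R(N') = R(N)$ (and with it the whole contradiction) has a genuine gap.

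The invariance can be repaired, but the repair essentially \emph{is} Berge's theorem: a vertex $x$ is reachable from $v_N$ by an $N$-alternating path ending in a matching edge if and only if $G-x$ has a perfect matching (flip along the path for one direction; for the other, take the symmetric difference of $N$ with a perfect matching of $G-x$ and extract the path joining $v_N$ to $x$). That set therefore depends only on $G$ and not on the particular odd matching, and $R(N)$ is this set together with its neighborhood, so it too is matching-independent. Once you are invoking Berge anyway, the paper's direct argument is much shorter and avoids the invariant entirely: since $M$ can be reconfigured to $M'$, any edge $e = vw \in M \cap C$ is eventually removed by some flip, at which moment one endpoint, say $v$, becomes isolated, so $G-v$ has a perfect matching; then $M \setminus \{e\}$ is not a maximum matching of $G-v$, whose only exposed vertices are $w$ and $v_M$, and Berge's theorem yields an augmenting --- hence $M$-alternating --- path from $w \in V(C)$ to $v_M$. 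I recommend either adopting that argument or replacing your contraction step by the Berge-based characterization above.
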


\begin{proof}
Let $C$ be a cycle in $M \triangle M'$. 
The edges in $M \cap C$ are eventually flipped to reconfigure $M$ into $M'$. 
After flipping such an edge $e=vw$, one of its endpoints, say~$v$, becomes \exposed in the corresponding intermediate matching.
Consequently, $G-v$ has a perfect matching. 
Hence, $M \setminus \{e\}$~is not maximal in $G-v$. 
By~Berge's theorem~\cite{MR94811}, there exists an $M$-alternating path from $w$ to~$v_M$. 
\end{proof}

Consider a cycle~$C$ in $M \triangle M'$ and assume that there exists an $M$-alternating path~$P$ from $v_M$ to $v \in C$. 
Then $M \cap C$ can be replaced by $M' \cap C$ by flipping the edges along~$P$, around~$C$, and then restoring the edges on~$P$. 
After this, $v_M$ is again the \exposed vertex. 
When applying the described flipping procedure in the special case that $v_M$ is adjacent to a vertex $v \in C$, we say that we \emph{switch}~$C$.
If $C$ has $k$ edges, then it takes $k+1$ flips to switch~$C$. 
The additional flip occurs because we need to first place the \exposed vertex on the cycle. 
This also holds for the geometric setting if $C$ is planar; see~\cref{fig:mats} for an illustration.

\begin{figure}[htb]
	\centering
	\includegraphics[page=1]{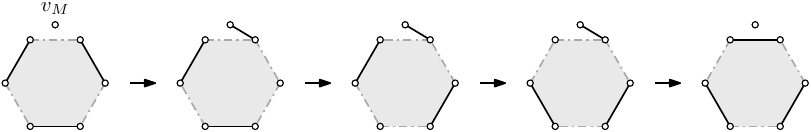}
	\caption{How to switch an alternating cycle.}
	\label{fig:mats}
\end{figure}

However, this does not hold in general in the geometric setting.
Suppose we wish to switch the cycle illustrated in~\cref{fig:non-plane-alternating-cycle}. 
The only feasible flips initially remove the blue edge ``closest'' to $v_M$. 
However, regardless of how this edge is removed, a red edge cannot be added immediately in the next flip, as doing so would cause it to cross a remaining blue edge.

\begin{figure}[htb]
	\centering
	\includegraphics[page=2]{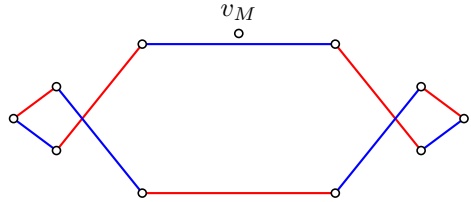}
	\caption{A non-plane alternating cycle in the geometric setting.}
	\label{fig:non-plane-alternating-cycle}
\end{figure}
    \section{Reachability}
\label{sec:flipgraph}

We consider the question of whether two odd matchings of a given graph can be reconfigured into each other.
For the geometric setting this has been proven to be always possible~\cite{oddmatchings}. 
In~contrast, the situation in the combinatorial setting is fundamentally different, as there exist simple examples where reconfiguration is not possible, as illustrated in~\cref{fig:counterexample-graph-reconfiguration}.

\begin{figure}[htb]
    \centering
    \includegraphics[]{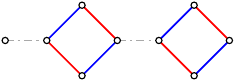}
    \caption{A connected graph with two odd matchings (red and blue, respectively) that cannot be reconfigured into each another. Specifically, the rightmost cycle cannot be reconfigured. This example can be extended for any odd $n$ by increasing the size of the rightmost cycle.}
    \label{fig:counterexample-graph-reconfiguration}
\end{figure}

In this section, we characterize the graphs that are reconfigurable, that is, for which the flip graph is connected.
Further, we give a simple polynomial-time algorithm to decide whether a graph is reconfigurable.

\begin{lemma}\label{lem:redundantedges}
Let $G$ be a graph of odd order, and consider an edge $e$.
\begin{enumerate}
    \item Suppose that $e$ is not contained in any odd matching of $G$. Then~$G$ is reconfigurable if and only if $G-e$ is reconfigurable. 
    \item Suppose that $e=uv$ is contained in every odd matching of~$G$. Then~$G$ is reconfigurable if and only if $G-u-v$ is reconfigurable.
\end{enumerate}
\end{lemma}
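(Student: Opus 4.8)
The plan is to prove both parts by showing that the flip graph of $G$ is isomorphic to the flip graph of the reduced graph, after which connectivity transfers for free. The crucial preliminary step in each case is to pin down exactly which edges a flip can add or remove, since a flip is by definition a single edge swap incident to the \exposed vertex and must produce an odd matching at every step.

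For the first part I would begin by observing that, since $e$ lies in no odd matching, the odd matchings of $G$ and of $G-e$ form exactly the same set, so the two flip graphs already share the same vertex set. It then suffices to show that $e$ is never involved in any flip of $G$. The edge $e$ cannot be removed by a flip, since removal would require $e$ to lie in the current odd matching, which it does not; and $e$ cannot be added, since the outcome of a flip is again an odd matching and no odd matching contains $e$. Hence every flip of $G$ uses only edges of $G-e$, and conversely every flip of $G-e$ is a flip of $G$ because $G-e \subseteq G$. The two flip graphs therefore coincide, and $G$ is reconfigurable if and only if $G-e$ is.

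For the second part I would set up the natural map $\phi\colon M \mapsto M \setminus \{uv\}$ between the odd matchings of $G$ and those of $G-u-v$. Because $uv$ lies in every odd matching of $G$, the vertices $u$ and $v$ are always matched to each other, so deleting $uv$ yields an odd matching of $G-u-v$ with the same \exposed vertex, while adding $uv$ back to any odd matching of $G-u-v$ recovers an odd matching of $G$; thus $\phi$ is a bijection that preserves \exposed vertices. The key step is to argue that no flip of $G$ touches $u$ or $v$: the \exposed vertex is never $u$ or $v$, and any flip whose added edge is incident to $u$ or $v$ would remove the unique matching edge at that vertex, namely $uv$, producing a matching without $uv$ and contradicting that every odd matching contains it. Consequently every flip of $G$ acts only on edges with both endpoints in $V \setminus \{u,v\}$, which are precisely the edges of $G-u-v$, and these correspond exactly to the flips of $G-u-v$ under $\phi$; the converse direction is immediate since adjoining the fixed edge $uv$ turns any flip of $G-u-v$ into a flip of $G$. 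Hence $\phi$ is an isomorphism of flip graphs, and the equivalence follows.

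The main obstacle, though a modest one, is exactly this last claim in the second part: ruling out flips that reroute the matching through $u$ or $v$. Here I would lean precisely on the hypothesis that $uv$ belongs to \emph{every} odd matching, combined with the fact that a flip is determined by a single vertex newly matched to the \exposed vertex together with the deletion of that vertex's current matching edge. Once it is established that any such flip would destroy the forced edge $uv$ and therefore cannot occur, both parts reduce to the elementary observation that isomorphic flip graphs, and in the first part literally identical ones, are connected simultaneously.
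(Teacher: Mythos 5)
Your proposal is correct and follows essentially the same route as the paper, which simply asserts that $e$ never appears in any flip sequence (respectively, is never flipped) and concludes the equivalence from there. You merely spell out the details the paper leaves implicit---that the flip graphs are literally identical in part~(1), and that deleting the forced edge $uv$ gives a flip-graph isomorphism in part~(2) because no flip can ever involve $u$ or $v$.
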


\begin{proof}
If $e$ is not contained in any odd matching of $G$, then clearly $G$ is reconfigurable if and only if $G-e$ is, as $e$ never appears in a flip sequence in $G$. 
Now suppose that $e=uv$ is contained in every odd matching of $G$.
As no reconfiguration sequence in $G$ ever flips $e$, it is easy to see that $G$ is reconfigurable if and only if $G-u-v$ is reconfigurable. 
\end{proof}

\begin{theorem}\label{thm:reconfigurable}
Let $G$ be a graph of odd order. Then $G$ is reconfigurable if and only if for every edge $e = uv \in E(G)$, one of the following holds: 
\begin{enumerate}
    \item the edge $e$ is either contained in all or in no odd matchings of $G$, or
    \item at least one of $G-u$ and $G-v$ contains a perfect matching.
\end{enumerate}
\end{theorem}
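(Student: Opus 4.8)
The plan is to prove the two directions separately. For necessity I would argue the contrapositive: if some edge $e=uv$ satisfies neither condition, then $G$ is not reconfigurable. The key observation is that any single flip that deletes a matching edge leaves one of its endpoints \exposed. Concretely, if $e=uv$ lies in some odd matching $M$ but in no odd matching $M'$, then along any hypothetical flip sequence from $M$ to $M'$ the edge $e$ must be removed at least once; immediately after the flip that first removes $e$, either $u$ or $v$ is the \exposed vertex, so the corresponding intermediate odd matching is a perfect matching of $G-u$ or of $G-v$. Since by assumption neither $G-u$ nor $G-v$ has a perfect matching, no such flip exists, and $M$ cannot reach $M'$. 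Hence $G$ is not reconfigurable.

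For sufficiency, assume every edge satisfies condition~1 or condition~2, and proceed by induction on $|E(G)|$, using \cref{lem:redundantedges} to peel off ``trivial'' edges. If some edge $e$ lies in no odd matching, I delete it; if some edge $e=uv$ lies in every odd matching, I delete both $u$ and $v$. In either case I first check that the hypothesis is inherited by the smaller graph -- this is routine, because such a deletion induces a bijection on odd matchings that preserves the \exposed vertex, so both the ``in all or no odd matching'' status and the existence of the relevant perfect matchings carry over. By \cref{lem:redundantedges}, reconfigurability of the reduced graph implies that of $G$, closing the induction.

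It remains to treat the irreducible case, in which every edge lies in some but not all odd matchings; this is the heart of the argument. Here condition~1 fails for every edge, so condition~2 holds for every edge: for each $e=uv$ at least one of $G-u$, $G-v$ has a perfect matching. Given odd matchings $M$ and $M'$, I would first flip along the even \mbox{$M$-alternating} path of $M\cup M'$ joining $v_M$ to $v_{M'}$, fixing that path and moving the \exposed vertex to $v_{M'}$; the residual discrepancy is then a disjoint union of even alternating cycles, which I switch one at a time. To switch a cycle $C$, pick an edge $e=uv\in C$ of the current matching; condition~2 gives an odd matching $N$ in which, say, $u$ is \exposed, and applying the decomposition of \cref{sec:prelim-obs} to the current matching and $N$ produces an alternating path from the current \exposed vertex to $u\in V(C)$ -- exactly the ingredient required by the switching procedure there. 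Switching $C$ (flip along the path, rotate around $C$, undo the path) toggles only the edges of $C$ and restores the \exposed vertex, so the cycles are handled independently until $M'$ is reached.

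The main obstacle I anticipate is making the individual cycle switch rigorous when the alternating path obtained from $N$ meets $C$ before its endpoint. I would resolve this by truncating the path at its first vertex $v^{*}$ on $C$: a short parity check -- using that the \emph{unique} matching edge at $v^{*}$ already lies on $C$ -- shows that the truncated path enters $C$ along a non-matching edge, so it can be completed by one cycle edge into a flippable alternating path that performs the switch. Since undoing the path toggles every off-cycle edge an even number of times, the net effect is exactly the replacement of $M\cap C$ by $M'\cap C$, independently of whatever happy edges or other cycles the path happens to cross; this is what guarantees that the cycles can be switched in any order without interference.
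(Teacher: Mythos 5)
Your proposal is correct and takes essentially the same route as the paper's proof: reduce via \cref{lem:redundantedges} to the case where every edge lies in some but not all odd matchings, prove necessity by noting that the flip removing $e=uv$ must expose $u$ or $v$ and hence would yield a perfect matching of $G-u$ or $G-v$, and prove sufficiency by first flipping along the path of $M\triangle M'$ and then switching the cycles one by one using an $M$-alternating path from $v_M$ into each cycle --- which the paper obtains from Berge's theorem applied to $M\setminus\{e\}$ in $G-u$, and which you obtain equivalently from the symmetric-difference decomposition with a perfect matching of $G-u$. Your additional care (checking that the hypotheses are inherited under the reductions of \cref{lem:redundantedges}, and truncating the alternating path at its first vertex on the cycle) merely makes explicit details that the paper leaves implicit.
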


\begin{proof}
By~\cref{lem:redundantedges}, we may assume that every edge of $G$ is contained in some, but not all odd matchings of~$G$. 
First assume that for every edge $e = uv \in E(G)$, $G-u$ or $G-v$ contains a perfect matching. 
Let $M$ and $M'$ be odd matchings in~$G$ with \exposed vertices $v_M$ and $v_{M'}$, respectively.
Flipping edges along the path in $M \triangle M'$ maps $v_M$ to $v_{M'}$, so assume $v_M = v_{M'}$. 
Then $M \triangle M'$ is a collection of cycles. 
Let~$C$ be a cycle in $M \triangle M'$ and let $e = uv \in M$ be an edge on $C$. 
By assumption, $G-u$ or $G-v$ contains a perfect matching, say $G-u$. 
By applying Berge's theorem~\cite{MR94811} to $M \setminus \{e\}$ in $G-u$, there exists an $M$-alternating path from $v_M$ to $v$. 
Using this alternating path to switch $C$ yields a matching~$M_1$ with $|M_1 \triangle M'| < |M \triangle M'|$. 
Inductively, this yields a flip sequence from $M$ to~$M'$.

Now assume that $G$ contains an edge $e=uv$ such that $G-u$ and $G-v$ do not have a perfect matching.
In other words, $u$ and $v$ are covered in all odd matchings.
By assumption, there exists an odd matching~$M_1$ containing $e$ and an odd matching~$M_2$ not containing~$e$.
To reconfigure $M_1$ to $M_2$, we eventually need to perform a flip involving $e$.
However, this isolates $u$ or $v$, which is a contradiction.
Hence,~$G$ is not reconfigurable.
\end{proof}

\begin{remark}
Let $G$ be a factor-critical graph, i.e., $G-v$ has a perfect matching for every $v \in V(G)$. Clearly, $G$ then satisfies condition (2) in~\cref{thm:reconfigurable}, and hence $G$ is reconfigurable.
\end{remark}

\begin{restatable}[$\star$]{remark}{rectangularGridReconfigurable}\label{rem:grid}
    For any odd values $w,h\in \mathbb{N}$ the $w\times h$ square grid graph, i.e., the Cartesian product $P_w \Box P_h$, is reconfigurable.
\end{restatable}

There is a straightforward polynomial-time algorithm to check the conditions of the characterization (\cref{thm:reconfigurable}), i.e., deciding whether a graph is reconfigurable.

\begin{restatable}[$\star$]{theorem}{theoremCharacterizationAlgorithm}\label{thm:check-reconfigurability}
    Given a graph $G$, we can determine whether the flip graph of odd matchings in~$G$ is connected in $\BigO(n^{4.5})$ time. In other words, the reconfigurability of a graph can be checked in polynomial time.
\end{restatable}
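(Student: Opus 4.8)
The plan is to turn the structural characterization of \cref{thm:reconfigurable} into an algorithm that, for every edge $e=uv\in E(G)$, checks whether condition~(1) or condition~(2) of that theorem holds; by \cref{thm:reconfigurable}, $G$ is reconfigurable exactly when no edge violates both. Each condition can be phrased purely in terms of \emph{maximum-matching sizes} of $G$ and of small vertex- and edge-deletions, so the whole test reduces to a bounded number of maximum-matching computations, each of which I would perform with a fast algorithm running in $\BigO(\sqrt{n}\,m)=\BigO(n^{2.5})$ time (the Micali--Vazirani algorithm, using $m=\BigO(n^2)$). Throughout, write $\nu(H)$ for the size of a maximum matching of a graph $H$. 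First I would compute $\nu(G)$: if $\nu(G)<(n-1)/2$, then $G$ has no odd matching at all, the flip graph is empty and hence (vacuously) connected, and we report reconfigurability; otherwise $\nu(G)=(n-1)/2=:\nu$ and every maximum matching is an odd matching.

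Next I would reformulate the two conditions as deficiency comparisons. For condition~(2), note that $G-w$ has even order $n-1$, so it has a perfect matching if and only if $\nu(G-w)=\nu$; thus condition~(2) for $e=uv$ is simply ``$\nu(G-u)=\nu$ or $\nu(G-v)=\nu$''. I would therefore precompute $\nu(G-w)$ for every vertex $w$, marking those with $\nu(G-w)=\nu$ (these are exactly the vertices that may be left \exposed, i.e.\ the set $D(G)$ of the Gallai--Edmonds decomposition). For condition~(1), observe that $G-u-v$ has $n-2$ vertices, so $\nu(G-u-v)\le\nu-1$, with equality precisely when $e$ lies in some maximum matching (given such a matching, deleting $e$ yields a matching of $G-u-v$ of size $\nu-1$, and conversely); hence $e$ lies in \emph{no} odd matching iff $\nu(G-u-v)\le\nu-2$. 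Dually, $e$ lies in \emph{every} maximum (equivalently, odd) matching iff deleting the edge drops the matching number, i.e.\ $\nu(G-e)<\nu$, since otherwise some maximum matching would avoid $e$. Therefore condition~(1) for $e=uv$ holds iff $\nu(G-e)<\nu$ or $\nu(G-u-v)\le\nu-2$.

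With these reformulations the algorithm is immediate: for each edge $e=uv$ I would first test condition~(2) by a table lookup of the precomputed values $\nu(G-u)$ and $\nu(G-v)$; only if it fails would I compute the two auxiliary quantities $\nu(G-e)$ and $\nu(G-u-v)$ to test condition~(1). If some edge fails both conditions I report that $G$ is not reconfigurable, and otherwise I report that it is; correctness is exactly \cref{thm:reconfigurable}. For the running time, the preprocessing costs $\nu(G)$ plus the $n$ computations $\nu(G-w)$, i.e.\ $\BigO(n\cdot n^{2.5})=\BigO(n^{3.5})$, and the main loop performs at most two further matching computations per edge, i.e.\ $\BigO(m\cdot n^{2.5})=\BigO(n^{2}\cdot n^{2.5})=\BigO(n^{4.5})$, which dominates.

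The main obstacle is not algorithmic depth but rather making the deficiency equivalences airtight, in particular the characterization ``$e$ is in every maximum matching $\iff\nu(G-e)<\nu$'' and its counterpart ``$e$ is in no maximum matching $\iff\nu(G-u-v)\le\nu-2$'', together with the boundary case $\nu(G)<(n-1)/2$. One could shave the bound by replacing the per-edge matching computations with a single Gallai--Edmonds decomposition that classifies all \emph{allowed} and \emph{forced} edges at once, but the stated $\BigO(n^{4.5})$ already follows from the straightforward per-edge approach above.
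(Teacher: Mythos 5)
Your proposal is correct and follows essentially the same route as the paper: reduce the characterization of \cref{thm:reconfigurable} to $\BigO(n^2)$ per-edge tests, each implemented as a constant number of $\BigO(n^{2.5})$ maximum-matching computations on $G-e$, $G-u-v$, $G-u$, and $G-v$, for a total of $\BigO(n^{4.5})$. Your explicit deficiency equivalences (e.g.\ $e$ lies in every odd matching iff $\nu(G-e)<\nu$) and the precomputation of $\nu(G-w)$ for all $w$ are minor refinements of the same argument, not a different approach.
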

    \section{Radius and diameter of the flip graph}
\label{sec:easy_setting}

We now study the diameter of the flip graphs in both combinatorial and geometric settings.

\subsection{Linear diameter of the flip graph in general graphs}
In this section, we show that the diameter of the flip graph in the combinatorial setting is linear in the order $n$ of the input graph. 
Given an input odd matching $M$ and a target matching $M'$ in a graph $G$ of order~$n$, such that it is possible to reconfigure $M$ to $M'$, the idea is to switch the cycles in $M \triangle M'$, traversing them in a tree-like fashion. In order to do so, we use an auxiliary tree structure. 
The proof proceeds in two steps. 
In the first part, we construct an auxiliary graph for the cycle traversal. 
In the second part, we prove that there exists a flip sequence of length linear in $n$ reconfiguring $M$ into $M'$. 
In the following, we always assume that the \exposed vertices of $M$ and $M'$ coincide.
This will later be achieved by first flipping the path in $M \triangle M'$ between these two vertices.

\paragraph*{Construction of the auxiliary graph}
We construct an auxiliary directed graph $\oH$ that will be used to construct the flip sequence. 
At any step of the construction, let $H$ denote the underlying undirected graph of~$\oH$. 

We initialize $\oH$ by introducing, for each vertex $v$ of $G$, two corresponding vertices in~$\oH$: an \emph{in-vertex} $v_{\text{in}}$ and an \emph{out-vertex} $v_{\text{out}}$; for an illustration see~\cref{fig:linear-bound-aux-graph}.
For every edge $uv$ in~$M'$, we add the arcs $(u_{\text{in}}, v_{\text{out}})$ and $(v_{\text{in}}, u_{\text{out}})$ to $\oH$.
Note that $v_{M, \text{out}}$ remains \exposed throughout the process, so we refer to the in-vertex $v_{M, \text{in}}$ simply as $v_M$.

\begin{figure}[htb]
    \begin{minipage}[b]{0.5\textwidth}
        \includegraphics[page=1]{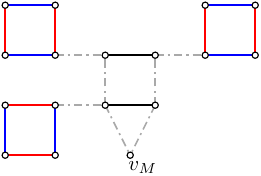}
        \subcaption{}
        \label{fig:linear-bound-aux-graph-1}
    \end{minipage}
    \begin{minipage}[b]{0.5\textwidth}
        \includegraphics[page=2]{figures/linear-bound-aux-graph.pdf}
       \subcaption{}
        \label{fig:linear-bound-aux-graph-2}
    \end{minipage}
    \caption{(a) A graph $G$ with the initial and target matching $M$ and $M'$, respectively, and (b)~the initialization of the auxiliary graph $\oH$.}
    \label{fig:linear-bound-aux-graph}
\end{figure}

Let $t$ be the number of cycles in $M \triangle M'$. In the following, we modify $\oH$ by adding arcs. To this end, we iteratively construct auxiliary odd matchings $M_0, \dots, M_t$ in $G$. 
Set~$M_0 := M$.
Suppose $M_i$, for some $i \in \{0, \dots, t-1\}$, is the last constructed matching. 
We now \emph{process} $M_i$ to construct $M_{i+1}$.
To this end, we choose an arbitrary cycle $K$ in $M_i \triangle M'$.
By~\cref{lem:alternating_paths}, $G$ contains an $M_i$-alternating path $P$ from $v_M$ to $K$.
Let $u_{C_i}$ be the first vertex on a cycle~$C_i$ in $M_i \triangle M'$ that we encounter when traversing $P$ starting from $v_M$.
We~now traverse $P$ from~$u_{C_i}$ towards $v_M$. 
Suppose that $e = ww'$ is the current edge on the path and assume that $w'$ is closer to $v_M$ on $P$ than $w$.
If $e \in M_i$, we simply move on to the next edge in $P$.
If $e \notin M_i$, we add the arc $(w_{\text{out}}, w'_{\text{in}})$ to~$\oH$.
If there is no directed path from $w'_{\text{in}}$ to $v_M$ in~$\oH$, we traverse the next edge on $P$.
Otherwise, we stop the traversal and define $M_{i+1}$ as the matching obtained from $M_i$ by replacing $M_i \cap C_i$ by $M' \cap C_i$.
One such iteration is visualized in~\cref{fig:linear-bound-aux-graph-iteration}.

\begin{figure}[htb]
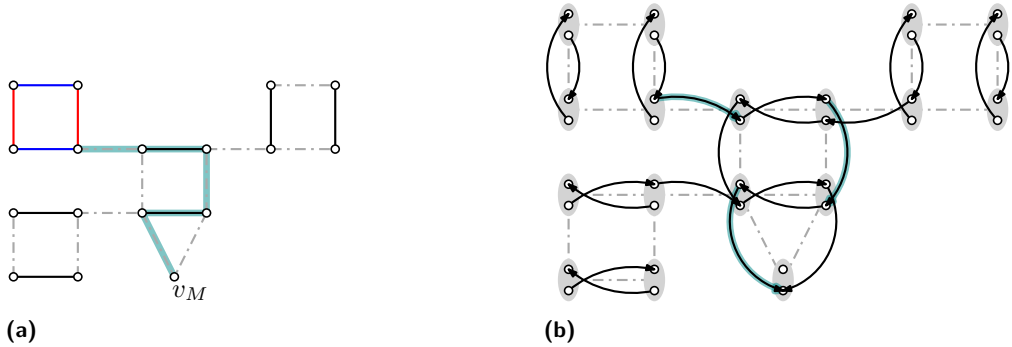

    \begin{minipage}[b]{0.5\textwidth}
        \includegraphics[page=3]{figures/linear-bound-aux-graph.pdf}
        \subcaption{}
        \label{fig:linear-bound-aux-graph-iteration-1}
    \end{minipage}
    \begin{minipage}[b]{0.5\textwidth}
        \includegraphics[page=4]{figures/linear-bound-aux-graph.pdf}
       \subcaption{}
        \label{fig:linear-bound-aux-graph-iteration-2}
    \end{minipage}
    \caption{Iteration of the construction of $\oH$, at the moment of inclusion of the top-left cycle. \linebreak
    (a) The auxiliary matching $M_i$ with the highlighted alternating path connecting the cycle to $v_M$.
    (b) The corresponding state of $\oH$ and the highlighted corresponding arcs added to $\oH$.
    }
    \label{fig:linear-bound-aux-graph-iteration}
\end{figure}

We now collect some observations on the structure of $\oH$ and the matchings $M_0, \dots, M_t$.

\begin{restatable}[$\star$]{lemma}{lemObsAuxGraph}\label{lem:propertiesh}
\begin{itemize}
    \item We have $M_t = M'$ and for all $i \in \{0, \dots, t\}$, $v_M$ is \exposed in $M_i$.
    \item The arcs of $\oH$ always join an in-vertex and an out-vertex. In particular, $H$ is bipartite.
\item Each out-vertex in $\oH$ has at most one outgoing arc.
\end{itemize}
\end{restatable}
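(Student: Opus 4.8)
The plan is to prove each of the three bullet points of \cref{lem:propertiesh} in turn, reading off the required properties directly from the construction of $\oH$ and the matchings $M_0, \dots, M_t$.

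\textbf{The first bullet.} I would argue by induction on $i$ that $v_M$ is \exposed in every $M_i$ and that $M_i \triangle M'$ consists only of cycles. The base case $M_0 = M$ holds because we assumed at the outset that the \exposed vertices of $M$ and $M'$ coincide, so $M \triangle M'$ contains no alternating path and is a union of cycles. For the inductive step, $M_{i+1}$ is obtained from $M_i$ by replacing $M_i \cap C_i$ with $M' \cap C_i$ along a cycle $C_i$ of $M_i \triangle M'$; switching an even alternating cycle preserves the property of being an odd matching and leaves the \exposed vertex unchanged, so $v_M$ stays \exposed. Each such step removes exactly one cycle from the symmetric difference (the cycle $C_i$ now agrees with $M'$), so after processing all $t$ cycles we obtain $M_t \triangle M' = \emptyset$, i.e.\ $M_t = M'$.

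\textbf{The second bullet.} This is immediate from how arcs are introduced. The initialization adds, for each edge $uv \in M'$, the arcs $(u_{\text{in}}, v_{\text{out}})$ and $(v_{\text{in}}, u_{\text{out}})$, each of which runs from an in-vertex to an out-vertex. During the processing of $M_i$, whenever we add an arc we add $(w_{\text{out}}, w'_{\text{in}})$, which runs from an out-vertex to an in-vertex. Hence every arc of $\oH$ joins an in-vertex to an out-vertex, and taking $\{v_{\text{in}} : v \in V(G)\}$ and $\{v_{\text{out}} : v \in V(G)\}$ as the two sides shows that $H$ is bipartite.

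\textbf{The third bullet.} I would distinguish the two sources of outgoing arcs at an out-vertex $w_{\text{out}}$. The only arcs \emph{leaving} an out-vertex are those of the form $(w_{\text{out}}, w'_{\text{in}})$ added during processing; the initialization arcs all \emph{enter} out-vertices. So it suffices to show that the processing never adds two arcs out of the same $w_{\text{out}}$. An arc $(w_{\text{out}}, w'_{\text{in}})$ is added for an edge $e = ww'$ with $e \notin M_i$, where $w'$ is closer to $v_M$ on the current alternating path; thus $w$ is the endpoint of a non-matching edge directed away from $v_M$. The key point is that after adding such an arc we continue the traversal only if there is \emph{no} directed path from $w'_{\text{in}}$ to $v_M$ in $\oH$, and otherwise we stop; combined with the acyclicity enforced by this stopping rule, this guarantees that each out-vertex becomes the tail of at most one arc over the whole construction. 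I expect this last point to be the main obstacle: one must track carefully, across different iterations $i$, that the same out-vertex $w_{\text{out}}$ is never selected twice as the tail of a newly added arc, which requires relating the stopping condition to the acyclic structure of $\oH$ built so far.
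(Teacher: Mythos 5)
Your first two bullets are correct and coincide with the paper's own proof: $v_M$ stays \exposed because each $M_{i+1}$ arises from $M_i$ by switching an alternating cycle; each iteration removes at least one cycle (namely $C_i$) from the symmetric difference, so $M_t \triangle M' = \emptyset$; and the in/out bipartition of $H$ is immediate from the two ways arcs are created.

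For the third bullet, however, you have not given a proof: you correctly isolate the crux (that the same out-vertex is never selected twice as the tail of a newly added arc, across different iterations) and then explicitly defer it (``I expect this last point to be the main obstacle''). Your appeal to ``acyclicity enforced by this stopping rule'' does not do the work---the stopping rule tests for a directed path from the \emph{head} $w'_{\text{in}}$ to $v_M$; it neither enforces acyclicity nor says anything directly about the tail $w_{\text{out}}$. The missing idea, which is exactly what the paper's proof supplies, is the invariant: if an outgoing arc is added at $v_{\text{out}}$ while processing $M_i$, then by the end of that iteration there is a directed path from $v_{\text{out}}$ to $v_M$ in $\oH$. This holds because the arcs added during a single traversal chain together into one directed path: every $M_i$-edge on the traversed segment of $P$ precedes the first cycle vertex $u_{C_i}$, hence cannot lie in $M_i \triangle M'$, hence lies in $M'$ and carries an initialization arc; and the iteration stops precisely when the head of this growing path reaches $v_M$. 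Given the invariant, a second outgoing arc at $v_{\text{out}}$ is impossible: in any later traversal, before the non-matching edge with tail $v$ is processed, the stopping check at the preceding in-vertex---joined to $v_{\text{out}}$ by the initialization arc of the intervening $M'$-edge---would already have found a directed path to $v_M$ and halted the traversal (for the boundary case of the first traversed edge, whose tail is the cycle vertex $u_{C_j}$, note that such a vertex is matched by a cycle edge not in $M'$ and therefore cannot have served as a tail in an earlier iteration). Without this invariant, your treatment of the third bullet is a statement of intent rather than an argument.
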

\medskip
In particular, \cref{lem:propertiesh} implies the following:

\begin{corollary}\label{cor:forest}
The graph $H$ is a forest. 
Further, let $\mathcal{C}$ be the connected component of $H$ containing $v_M$. 
In $\oH$, there is a directed path from every vertex in $\mathcal{C}$ to $v_M$. 
Moreover, for every cycle $K$ in $M \triangle M'$,  $\mathcal{C}$ contains the vertex $u_{K,\text{out}}$.
\end{corollary}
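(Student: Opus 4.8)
The plan is to derive all three assertions from a single structural fact: in $\oH$, \emph{every} vertex has out-degree at most one. For out-vertices this is exactly the third bullet of \cref{lem:propertiesh}. For in-vertices it is immediate from the construction, since the only arcs leaving an in-vertex $v_{\text{in}}$ are the initialization arcs $(v_{\text{in}}, u_{\text{out}})$ with $uv \in M'$, and $M'$ being a matching there is at most one such arc. Hence $\oH$ is a functional digraph: from every vertex the forward walk obtained by repeatedly following the unique outgoing arc is uniquely determined, and it is either a simple directed path ending at a sink or a walk that eventually enters a directed cycle. In particular $H$ is a pseudoforest, so to prove that $H$ is a \emph{forest} it suffices to show that $\oH$ has no directed cycle.

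For acyclicity I would induct on the insertion of arcs. After initialization $\oH$ carries only arcs $(u_{\text{in}}, v_{\text{out}})$ directed from in-vertices to out-vertices, so every out-vertex is a sink and no directed cycle can exist. Each later step inserts a single processing arc $(w_{\text{out}}, w'_{\text{in}})$, which is the unique outgoing arc of $w_{\text{out}}$; thus $w_{\text{out}}$ was a sink beforehand, and a newly created directed cycle would require an already existing directed path from $w'_{\text{in}}$ into $w_{\text{out}}$. Since $w_{\text{out}}$ admits only one possible in-arc (the initialization arc from the $M'$-partner of $w$), this forces a very restricted configuration that I would exclude using the stopping rule of the traversal, namely that a processing arc is appended only while its head does not yet reach $v_M$. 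I expect this to be the \textbf{main obstacle}: the delicate case is a processed edge $ww'$ that itself lies in $M'$, for then the initialization arc $(w'_{\text{in}}, w_{\text{out}})$ would close a directed $2$-cycle with $(w_{\text{out}}, w'_{\text{in}})$; ruling out that the traversal ever reaches such an edge without having halted earlier is exactly where the interplay between the $M'$-arcs and the alternating path $P$ must be controlled.

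Granting acyclicity, the second assertion is clean. The vertex $v_M = v_{M,\text{in}}$ is a sink: being an in-vertex no processing arc leaves it, and being unmatched in $M'$ it carries no initialization arc either. Since $H$ is a forest, the component $\mathcal{C}$ is a tree, and an acyclic functional digraph on a tree has exactly one sink, toward which every vertex's unique forward walk terminates. As $v_M$ is a sink of $\mathcal{C}$, it is \emph{the} sink of $\mathcal{C}$, so there is a directed path from every vertex of $\mathcal{C}$ to $v_M$.

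Finally, for the cycle-containment claim I would use that each cycle $K$ of $M \triangle M'$ is switched in exactly one iteration, say $C_i = K$ with distinguished vertex $u_K = u_{C_i}$ (all $t$ cycles are switched since $M_t = M'$). I first observe that the edge of $P$ leaving $u_{C_i}$ toward $v_M$ cannot lie in $M_i$: otherwise it would be the unique $M_i$-edge at $u_{C_i}$, which on the alternating cycle $C_i$ is a cycle edge, forcing the next (and hence strictly closer to $v_M$) vertex of $P$ onto a cycle and contradicting that $u_{C_i}$ is the first cycle vertex encountered from $v_M$. Consequently a processing arc leaves $u_{K,\text{out}}$, and by the stopping rule the traversal halts precisely when a forward path to $v_M$ is completed. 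Tracing the unique forward walk out of $u_{K,\text{out}}$ — which alternates processing arcs and initialization ($M'$-)arcs along $P$ — then shows that $u_{K,\text{out}}$ itself reaches $v_M$, so $u_{K,\text{out}} \in \mathcal{C}$. Here, too, the nontrivial point is verifying that these arcs genuinely chain into a single forward path to $v_M$, which is the same combinatorial interplay flagged above.
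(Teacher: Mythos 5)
Your skeleton is sound, and on the parts you complete you are more careful than the paper itself, which states this corollary as an immediate consequence of \cref{lem:propertiesh} with no separate proof. Your reduction of forest-ness to acyclicity via the functional-digraph observation (out-degree at most one for out-vertices by the third bullet of \cref{lem:propertiesh}, and for in-vertices by $M'$ being a matching) is correct; so is the unique-sink argument for the directed paths to $v_M$ (using that $v_M$ is \exposed in $M'$), and so is your argument that the $P$-edge leaving $u_{C_i}$ towards $v_M$ cannot lie in $M_i$.

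However, the two steps you explicitly flag as obstacles are genuine gaps, and they are the actual mathematical content. The single missing ingredient that closes both is: every $M_i$-edge of $P$ strictly between $v_M$ and $u_{C_i}$ lies in $M_i \cap M'$, and dually every processed edge $ww' \notin M_i$ lies outside $M'$. This holds because, the \exposed vertices coinciding, $M_i \triangle M'$ is a disjoint union of cycles, so an edge of $P$ in exactly one of $M_i$, $M'$ would place a vertex strictly between $v_M$ and $u_{C_i}$ on a cycle, contradicting the choice of $u_{C_i}$ --- this is exactly the paper's proof of \cref{lem:alternating}, and it is \emph{not} a consequence of the stopping rule, which is the tool you propose. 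With this fact, the arcs added within one iteration chain through initialization arcs along $P$, as in $u_{C_i,\text{out}} \to w'^{1}_{\text{in}} \to w^{2}_{\text{out}} \to \cdots$, the stopping rule guarantees the last head reaches $v_M$, and induction over iterations yields the invariant (asserted without proof in the paper's argument for the third bullet of \cref{lem:propertiesh}) that at the end of each iteration every endpoint of every processing arc has a directed path to $v_M$; this completes your third assertion. It also closes acyclicity, where your sketch is additionally too narrow: the $2$-cycle is not the only case to exclude. A longer directed cycle through the new arc $(w_{\text{out}}, w'_{\text{in}})$ requires a pre-existing forward walk from $w'_{\text{in}}$ ending with the initialization arc from the $M'$-partner $z$ of $w$ into the fresh sink $w_{\text{out}}$; since $v_{M,\text{in}}$ is a permanent sink and forward walks are only ever extended at sinks, the invariant shows the last processing arc into $z_{\text{in}}$ can come neither from an earlier iteration (its head would then reach $v_M$, not end at $w_{\text{out}}$) nor from the current one (the $M'$-partner of a current head is the next tail on $P$, and tracing the chain back forces an arc into $c_{\text{in}}$ for the on-cycle $M'$-partner $c$ of $u_{C_i}$, which no processing arc enters). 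Your stopping-rule remark constrains only the most recently inserted head at the moment of insertion, so it alone rules out none of these configurations.
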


\begin{restatable}[$\star$]{lemma}{directedPathInAuxGraph}\label{lem:alternating}
If $(u_{\text{in}}, v_{\text{out}})$ is an arc in $\oH$, then $uv \in M'$. 
If $(u_{\text{out}},v_{\text{in}})$ is an arc in $\oH$, then $uv \in E(G) \setminus M'$.  
Hence, every directed path in $\oH$ defines an $M'$-alternating path in~$G$.  
\end{restatable}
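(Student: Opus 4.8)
The plan is to prove the three assertions in order, exploiting that every arc of $\oH$ is created in exactly one of two ways: in the initialization phase, or in a processing step. During initialization we add, for each edge $uv\in M'$, only the two arcs $(u_{\text{in}},v_{\text{out}})$ and $(v_{\text{in}},u_{\text{out}})$, both pointing from an in-vertex to an out-vertex; during processing we only ever add arcs $(w_{\text{out}},w'_{\text{in}})$, pointing from an out-vertex to an in-vertex. Since $H$ is bipartite with sides formed by the in- and out-vertices (\cref{lem:propertiesh}), these are the only two possibilities, so an arc is an initialization arc if and only if it goes from an in-vertex to an out-vertex. The first assertion is then immediate: an arc $(u_{\text{in}},v_{\text{out}})$ is an initialization arc, and such an arc is added precisely when $uv\in M'$.

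For the second assertion I would take a processing arc $(u_{\text{out}},v_{\text{in}})$, added while processing some $M_i$ along an $M_i$-alternating path $P$ from $v_M$ to a cycle. By construction $e=uv$ is an edge of $G$ with $e\notin M_i$, lying on the portion of $P$ between $v_M$ and the first cycle-vertex $u_{C_i}$. The key claim is that $e\notin M'$. Indeed, if $e\in M'$ then, since $e\notin M_i$, we would have $e\in M_i\triangle M'$, so $e$ would lie on a cycle of $M_i\triangle M'$ and both its endpoints would be cycle-vertices; but $e$ lies strictly before $u_{C_i}$ on $P$, so at least one endpoint of $e$ would be a cycle-vertex strictly preceding $u_{C_i}$, contradicting the choice of $u_{C_i}$ as the first cycle-vertex met along $P$. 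Hence $e\in E(G)\setminus M'$.

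For the ``hence'' part I would argue as follows. Because $H$ is bipartite, a directed path $x_0\to x_1\to\cdots\to x_k$ alternates between in- and out-vertices, so its arcs alternate between initialization and processing arcs, that is (by the first two assertions) between edges of $M'$ and edges of $E(G)\setminus M'$. Mapping each $\oH$-vertex to its underlying vertex of $G$ therefore turns the directed path into a walk whose consecutive edges alternate between $M'$ and $E(G)\setminus M'$, which is the desired $M'$-alternating structure. To see that this walk is a simple path I would use that $H$ is a forest (\cref{cor:forest}) and that every vertex of $\oH$ has out-degree at most one (out-vertices by \cref{lem:propertiesh}, in-vertices because each vertex of $G$ has a unique $M'$-partner); thus the component of $v_M$ is an in-tree and the directed path is a simple path of $\oH$.

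I expect the main obstacle to be the final point: upgrading ``simple path in $\oH$'' to ``simple path in $G$'', i.e.\ showing that the projection never uses both copies $u_{\text{in}}$ and $u_{\text{out}}$ of a vertex $u$. Since the $M'$-edge at each vertex is unique, a repetition in which $u_{\text{in}}$ precedes $u_{\text{out}}$ forces a strictly shorter, nested repetition of the $M'$-partner of $u$ (its two $\oH$-copies are pinned down by the two arcs of the edge $uu'$), so a shortest repetition must have $u_{\text{out}}$ before $u_{\text{in}}$. To rule out this remaining case I would maintain, throughout the construction, the invariant that at most one of $u_{\text{in}},u_{\text{out}}$ ever lies in the component of $v_M$, using that a newly added processing arc $(w_{\text{out}},w'_{\text{in}})$ attaches $w_{\text{out}}$ to the tree for the first time (by the out-degree bound) and that each traversal stops at the first in-vertex already connected to $v_M$. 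Verifying that this invariant survives every processing step is the delicate part of the argument.
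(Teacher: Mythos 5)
Your treatment of the two arc-classification claims coincides with the paper's proof: the first claim is by construction (arcs from in-vertices to out-vertices are precisely the initialization arcs, added exactly for the edges of $M'$), and the second is the same contradiction the paper uses --- if $ww'\in M'$ but $ww'\notin M_i$, then $ww'\in M_i\triangle M'$, which is a disjoint union of cycles (by \cref{lem:propertiesh}, $v_M$ is \exposed in both $M_i$ and $M'$), so both endpoints would be cycle vertices strictly before $u_{C_i}$ on $P$, contradicting the choice of $u_{C_i}$. For the ``hence'' clause, the paper stops exactly where the first half of your third paragraph stops: by the in/out bipartition, a directed path in $\oH$ alternates initialization arcs ($M'$-edges) and processing arcs (non-$M'$-edges), and this alternation statement is all that is asserted and all that is used later (the correctness proof of \cref{thm:linear} re-derives what it needs through its own invariant on the current matching, flip by flip).

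The last third of your proposal, which you yourself flag as unfinished, is where the genuine problem lies: the invariant you propose --- that at most one of $u_{\text{in}},u_{\text{out}}$ ever lies in the component $\mathcal{C}$ of $v_M$ --- is false. Both initialization arcs $(u_{\text{in}},v_{\text{out}})$ and $(v_{\text{in}},u_{\text{out}})$ of an edge $uv\in M'$ exist from the start, and nothing prevents a later processing path $P_j$ from traversing a happy edge $uv\in M_j\cap M'$ in the direction opposite to an earlier $P_i$: the earlier iteration attaches $u_{\text{in}}$ and $v_{\text{out}}$ to $\mathcal{C}$, and the later one attaches $v_{\text{in}}$ and $u_{\text{out}}$. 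The stopping rule does not block this, because when the processing arc into $v_{\text{in}}$ is added, the unique outgoing walk $v_{\text{in}}\to u_{\text{out}}$ may simply be stuck (no directed path to $v_M$ yet), so the traversal continues and the arc is kept; once the iteration ends, the whole new segment hangs off $\mathcal{C}$. Indeed, one can arrange $G$, $P_i$ and $P_j$ so that the two mirrored segments concatenate into a single directed path that visits $u_{\text{out}}$ before $u_{\text{in}}$, whose projection to $G$ then repeats both endpoints and traverses the $M'$-edge $uv$ twice. So the ``delicate part'' you defer is not merely delicate --- the route through that invariant cannot be completed as stated, and the vertex-simplicity upgrade is not available in general. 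The lemma should be proved, and read, in the weaker alternation sense, which your first two paragraphs (matching the paper) already establish.
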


\paragraph*{Description of the flipping procedure}

Using the graph $H$ above,
we now describe a flip sequence that transforms $M$ to $M'$.

Let $\mathcal{C}$ be the connected component of $H$ containing $v_M = v_{M, \text{in}}$.
We traverse~$\mathcal{C}$ in a depth-first search fashion, starting and ending at $v_M$. 
In parallel, we perform certain flip operations in $G$. We maintain the following property: whenever a vertex $v_{\text{in}}$ is visited in the traversal, the corresponding vertex $v$ is \exposed in $G$. 

Let $v_{\text{in}}$ be the currently visited vertex. Let $v_{1,\text{out}},\dots,v_{k,\text{out}}$ be all vertices of $\oH$ such that $(v_{i,\text{out}},v_{\text{in}})$ is an arc. Two cases can occur: 

\begin{enumerate}[(1)]
    \item All vertices $v_{1,\text{out}},\dots,v_{k,\text{out}}$ have already been visited. Then we backtrack. That is, if $v_{\text{in}}=v_M$, we terminate. Otherwise, let $w_{1,\text{out}}, w_{2,\text{in}}$ be the next vertices on the directed path from $v_{\text{in}}$ to $v_M$. 
    We will show that the edge $w_1 w_2$ is present in $G$. We flip it to $vw_1$, so that the \exposed vertex becomes $w_2$, and next visit $w_{2,\text{in}}$.
    \item There is an unvisited vertex $v_{i,\text{out}}$. In this case, we check whether $v_i$ is contained in a cycle~$C$ in $M\triangle M'$ that has not yet been switched, and if so, switch $C$.
    Subsequently, we consider the second neighbor $v_{\text{in}}'$ of $v_{i,\text{out}}$ in $H$. We will show that the edge $v'v_i$ is present in $G$. We flip it to $v_i v$, so that the \exposed vertex becomes $v'$. We mark $v_{i,\text{out}}$ as visited and then visit $v'_{\text{in}}$.
\end{enumerate}

\begin{restatable}[$\star$]{theorem}{thmCombinatorialDiameter}
    \label{thm:linear}
    Let $M$ and $M'$ be odd matchings of a graph $G$ such that $M$ can be reconfigured to $M'$. Using the above procedure, we can reconfigure $M$ to $M'$ using $\BigO(n)$ flips.
\end{restatable}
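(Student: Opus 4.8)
The plan is to prove the two halves of the statement separately: that the described traversal produces a legal flip sequence terminating in $M'$ (correctness), and that this sequence has length $\BigO(n)$. First I would discharge the standing assumption $v_M = v_{M'}$: flipping the even-length alternating path of $M \cup M'$ that joins the two \exposed vertices moves $v_M$ onto $v_{M'}$ in at most $n$ flips, after which $M \triangle M'$ is a disjoint union of cycles $C_1, \dots, C_t$. From this point the auxiliary digraph $\oH$ and the forest $H$ of \cref{cor:forest} are available, and the procedure is the depth-first traversal of the component $\mathcal{C}$ of $H$ containing $v_M$.

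Bounding the number of flips is the easier half. Each out-vertex of $\mathcal{C}$ is discovered by exactly one ``down'' flip in case~(2) and abandoned by exactly one ``back'' flip in case~(1), so the movement flips number at most twice the number of out-vertices, which is $\BigO(n)$. Each cycle $C_j$ is switched exactly once, at a cost of $|C_j| + 1$ flips; since the $C_j$ are vertex-disjoint we have $\sum_j |C_j| \le n$ and $t \le n/2$, so all switches together cost $\BigO(n)$. With the $\BigO(n)$ alignment flips from the first step, the full sequence has length $\BigO(n)$.

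The substance of the proof, and the step I expect to be the main obstacle, is correctness: certifying that every prescribed flip is legal while maintaining the invariant that whenever an in-vertex $v_{\text{in}}$ is visited, $v$ is \exposed in $G$. That the added edges exist in $G$ is immediate from \cref{lem:alternating}, since an in-to-out arc certifies an $M'$-edge and an out-to-in arc a non-$M'$ edge. The delicate point is that the removed edge genuinely lies in the current matching, and to control this I would carry along the stronger invariant that, at each visit to $v_{\text{in}}$, the current matching equals $M' \triangle Q \triangle \bigl(\bigcup_j C_j\bigr)$, where $Q$ is the $M'$-alternating path from $v$ to $v_M$ read off the DFS stack through \cref{lem:alternating} and the union runs over the cycles not yet switched. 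A short check shows this is indeed an odd matching whose \exposed vertex is $v$.

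It remains to verify that cases~(1) and~(2) preserve this invariant. In case~(2), since $v$ is \exposed and $C$ is still unswitched, the invariant forces the current matching to agree with $M$ on $C$ (the stack $Q$ enters only cycles that have already been switched), so every vertex of $C$ is matched and therefore $v \notin C$; hence the switch uses the non-$M'$ edge $v v_i$ lying outside $C$, touches only the edges of $C$, and returns $v$ as the \exposed vertex with the current matching now equal to $M'$ on $C$. The subsequent flip of $v' v_i \in M'$ to $v_i v$ removes the freshly installed cycle-edge at $v_i$---present by the invariant---and appends one rung to $Q$, while the matching ``back'' flip in case~(1) removes exactly this temporary edge and reinstates the $M'$-edge, shortening $Q$. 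Thus the down and back flips along each rung cancel, each cycle is switched precisely once, and \cref{cor:forest} ensures every $u_{C_j, \text{out}}$ lies in $\mathcal{C}$ so that no cycle is missed; when the traversal returns to $v_M$ the stack and hence $Q$ are empty and no cycle is outstanding, leaving the current matching equal to $M'$. The greatest care is needed precisely at this back-to-back switch-and-move in case~(2)---confirming that the switch disturbs none of the temporary stack edges above it, that removing $v' v_i$ is legal exactly because $C$ was just switched, and that later passes through out-vertices of an already-switched cycle still call for legal flips.
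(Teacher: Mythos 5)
Your proposal is correct and follows essentially the same route as the paper's proof: align the \exposed vertices along the path in $M \triangle M'$, then verify the DFS traversal of $\mathcal{C}$ via an invariant (using \cref{lem:alternating} for edge existence and \cref{cor:forest} to guarantee every cycle is reached), and count $\BigO(n)$ flips from the forest traversal plus the cycle switches. Your explicit invariant that the current matching equals $M' \triangle Q \triangle \bigl(\bigcup_j C_j\bigr)$ over unswitched cycles is just a sharper formulation of the paper's invariant that the directed path from the current in-vertex to $v_M$ is alternating in the current matching, and it correctly makes explicit the points the paper treats briefly (that $Q$ only meets already-switched cycles, hence switching never disturbs the stack edges).
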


\subsection{Convex point sets}

We start with a refinement of the notion of happy edges in the case of convex point sets. 
\emph{Good happy edges} are defined by the following recursive definition:
	
	\begin{enumerate}[(1)]
		\item Happy edges on the convex hull are necessarily good happy edges.
		\item If a happy edge $e$ splits the convex point set into two parts such that one part contains the \exposed vertex (possibly along with some edges) and all edges in the other part are happy edges (implying that they all are good happy edges), then $e$ is a good happy~edge.
	\end{enumerate}
	\begin{figure}[ht]
		\centering
		\includegraphics{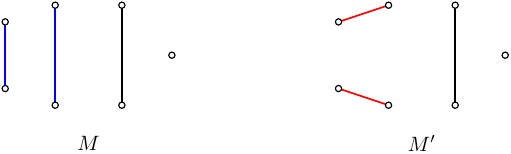}
		\caption{Example of a convex point set in which a happy edge needs to be flipped.}
		\label{fig:matchings_3}
	\end{figure}
    A happy edge that is not good is a \emph{bad} happy edge.
	\Cref{fig:matchings_3} provides an example in which a happy edge needs to be flipped. However, this edge is not a good happy edge.
    On the contrary, good happy edges are preserved in shortest flip sequences.
	
    \begin{restatable}[$\star$]{lemma}{Goodhap}
        \label{lem:goodhappy}
		No shortest flip sequence between any two odd matchings on a convex point set flips good happy edges.
    \end{restatable}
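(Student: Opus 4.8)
The goal is to show that a shortest flip sequence never flips a good happy edge. I would argue by contradiction: suppose some shortest flip sequence $\sigma$ from $M$ to $M'$ flips a good happy edge $e$ at some point, and among all good happy edges flipped by $\sigma$, pick one that is flipped \emph{first} (and, to break ties, an edge that is ``innermost'' in the recursive definition of goodness, i.e.\ one splitting off a region containing no other flipped good happy edge). The strategy is to produce a strictly shorter flip sequence, contradicting minimality. Since $e$ is a good happy edge, both $M$ and $M'$ contain $e$, so after the sequence ends, $e$ has been flipped an even number of times; in particular it is flipped at least twice.

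\textbf{Key steps.}
First I would use the recursive structure of good happy edges to localize the situation. By definition (2), $e$ splits the convex point set into two parts: a side $A$ whose edges are all happy (hence good happy by induction) and a side $B$ containing the \exposed vertex. Because we chose $e$ innermost, no good happy edge strictly inside $A$ is flipped by $\sigma$; combined with the fact that $e$ is the first good happy edge flipped, I would argue that at the moment $e$ is first flipped, the \exposed vertex must cross from side $B$ to side $A$ (a flip incident to the \exposed vertex removes $e=uv$ and the newly \exposed endpoint lies in $A$). The second step is to track what happens inside $A$: once the \exposed vertex enters $A$ through $e$, every subsequent flip that touches $A$ must keep all edges plane and, crucially, to restore $e\in M'$ the exposed vertex must later exit $A$ again through an edge incident to one of $e$'s endpoints. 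The third step is the surgery: I would delete from $\sigma$ the sub-sequence of flips that ``detour'' into $A$ and back out through $e$, replacing them with the identity (no flips), since $A$ already agrees with $M'$ on all its happy edges and nothing inside $A$ needed changing. This removes at least the two flips of $e$ itself, yielding a strictly shorter sequence that still transforms $M$ into $M'$, the desired contradiction.

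\textbf{Main obstacle.}
The delicate point is the surgery argument: I must be certain that the excised flips truly accomplish nothing net-useful and that removing them leaves a valid (plane, odd-matching-preserving) sequence. The exposed vertex may legitimately need to visit side $A$ to switch some alternating cycle there --- but the innermost/first-flipped choice of $e$ is exactly what rules this out, since any cycle strictly inside $A$ would force flipping a good happy edge inside $A$ earlier, contradicting our choice. I expect the hardest part to be formalizing ``the exposed vertex's excursion into $A$ can be removed without affecting the rest of the sequence,'' i.e.\ showing the flips on side $B$ are independent of the detour into $A$ and that splicing them together preserves planarity at every intermediate step. Handling case (1) of the definition (happy edges on the convex hull) should be the easy special case, since such an edge bounds a region containing no points other than its endpoints' side, making the excursion vacuous or trivially removable.
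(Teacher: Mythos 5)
There is a genuine gap, and it sits exactly where you predicted: the surgery. Your plan treats the flips between the removal and the re-insertion of $e=uv$ as a self-contained ``detour into $A$'' that can be excised. But the moment $e$ is removed from the matching, the chord $uv$ no longer separates the configuration: intermediate matchings may legitimately contain edges \emph{crossing} the segment $uv$, and the isolated vertex can travel back to side $B$ (by flipping an edge from a point of $A$ to a point of $B$ across the chord) without ever restoring $e$. Moreover, the endpoints $u$ and $v$ themselves can serve as useful stepping stones: after $e$ is removed, $u$ or $v$ can be matched to vertices of $B$, enabling net progress on alternating cycles in $B$. Consequently the block of flips you want to delete (i) is not confined to $A$, (ii) can accomplish work on side $B$ that the rest of the sequence depends on, and (iii) generally ends in a matching that differs on $B$ from the matching at the moment $e$ was removed --- so deleting the block does not splice into a valid flip sequence at all. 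Your ``first-flipped, innermost'' choice of $e$ also has a smaller tension (the first-flipped good happy edge need not be innermost, and a flip-order-based choice does not by itself prevent later flips inside $A$), but even granting a clean choice of $e$, the excision argument breaks for the reasons above.

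The paper's proof avoids tracking the excursion entirely. It first reduces to the case where $e$ is a convex hull edge (iteratively peeling off hull good happy edges together with their endpoints, which is where your ``easy special case'' actually does all the work), and then defines a \emph{normalization} $N$ that projects every intermediate matching $\bar M$ onto one containing $e$: if $e\in\bar M$ do nothing; if one endpoint of $e$ is isolated, reroute the single edge at the other endpoint; otherwise replace the two edges $e',e''$ at $u,v$ by $e$ and the edge $e^\ast$ joining their far endpoints. Because $e$ lies on the hull, $N(\bar M)$ stays plane, and one checks via a local case analysis that consecutive matchings map to matchings that are equal or adjacent in the flip graph, with the removal step and the re-insertion step of $e$ each collapsing to equality --- so the normalized sequence is a valid flip sequence from $\matin$ to $\mattar$ that is at least two flips shorter. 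In effect, $N$ reroutes the isolated vertex's path \emph{around} $e$ while retaining all the useful work your surgery would have thrown away; some such projection idea (rather than deletion) appears necessary here, and to repair your proof you would essentially be led to reinvent it.
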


    Our recursive definition of good happy edges allows us to construct an equivalent instance without happy edges in linear time.

    \begin{restatable}[$\star$]{remark}{GOODHAPPY}
    \label{rem:goodhappy}
        Let $\matin$ and $\mattar$ be two matchings. We can remove good happy edges and their incident vertices in time that is linear in the number of vertices.
    \end{restatable}

    \begin{theorem}\label{thm:FPT}
		The flip distance $k$ between two odd matchings of the same convex point set~$S$ is fixed-parameter tractable in $k$.
	\end{theorem}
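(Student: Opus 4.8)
The plan is to show that after preprocessing the instance is ``small'' relative to the parameter $k$, so that an exhaustive search over flip sequences runs in FPT time. The starting point is \cref{lem:goodhappy} together with \cref{rem:goodhappy}: since no shortest flip sequence ever touches a good happy edge, I would first remove all good happy edges and their incident vertices in linear time, obtaining an equivalent instance in which $\matin$ and $\mattar$ share only bad happy edges (or none at all). The key structural claim I would then establish is a kernelization-style bound: if the flip distance is at most $k$, then the number of points that are ``active'' in the reconfiguration is bounded by some function $g(k)$. Concretely, each flip changes exactly one matching edge and moves the \exposed vertex, so a sequence of $k$ flips can alter at most $\BigO(k)$ edges and hence involve at most $\BigO(k)$ vertices. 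The symmetric difference $\matin \triangle \mattar$ must therefore be coverable by these $\BigO(k)$ active points, which caps the size of $\matin \triangle \mattar$ by $\BigO(k)$ as well; otherwise we may immediately reject with ``distance $> k$''.

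The next step is to argue that on a convex point set, once the happy edges are stripped away, the remaining structure decomposes the instance into independent regions delimited by the bad happy edges, and within each region the number of relevant points is controlled by $k$. After preprocessing I would bound the total number of points still in play by $\BigO(k)$, producing a kernel. On this kernel I would then perform a bounded-depth search: from the current matching, branch over all possible flips (there are at most $\BigO(n)$ candidate edges incident to the \exposed vertex, but on the kernel only $\BigO(k)$ of them), recursing to depth $k$. This gives a search tree of size $k^{\BigO(k)}$, and checking at each node whether we have reached $\mattar$ costs polynomial time; combined with the linear-time preprocessing this yields an overall running time of the form $f(k)\cdot n^{\BigO(1)}$, establishing fixed-parameter tractability.

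I expect the main obstacle to be the kernelization bound, i.e. proving rigorously that distance at most $k$ forces all but $\BigO(k)$ of the points to be handled trivially. The subtlety is that a flip moves the \exposed vertex possibly far across the point set, so naively an edge far from $\matin \triangle \mattar$ could be touched; I must use \cref{lem:goodhappy} to rule out wasting flips on good happy edges, and the convex-position geometry to show that any edge flipped in a shortest sequence is ``charged'' to a component of $\matin \triangle \mattar$. Establishing that the number of distinct edges appearing across an optimal length-$k$ sequence is $\BigO(k)$ — and that the corresponding vertex set forms a self-contained subinstance after good-happy-edge removal — is the crux; the branching algorithm on the resulting kernel is then routine.
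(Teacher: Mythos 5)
Your overall architecture coincides with the paper's proof---strip good happy edges via \cref{rem:goodhappy}, argue that a yes-instance has only $\BigO(k)$ surviving edges, then brute-force over flip sequences with branching factor $\BigO(k)$ and depth $k$---but the kernel bound you explicitly leave open is precisely where your sketch breaks, and the paper closes it with a one-line observation you are missing. Bounding $\lvert\matin \triangle \mattar\rvert$ by $\BigO(k)$ is correct but insufficient: bad happy edges lie in \emph{both} matchings, so they are invisible to the symmetric difference, and nothing in your argument bounds how many of them survive the preprocessing. Worse, your structural picture of them is backwards: bad happy edges do not ``delimit independent regions.'' By definition, a happy edge is bad exactly when the side of its chord not containing the \exposed vertex contains a non-happy edge, and in convex position every segment from the \exposed vertex to a point on that side crosses the chord. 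Hence every bad happy edge must be removed---and, since it belongs to $\mattar$, re-added---in \emph{every} flip sequence; this is the observation recorded before \cref{thm:up} (edges in $B$ are flipped at least twice). That single fact is the entire kernelization: after good-happy removal, \emph{all} remaining edges need at least one flip, so if more than $k$ edges survive one rejects outright; otherwise at most $2k$ matched endpoints remain, each flip is determined by the endpoint to which the \exposed vertex attaches, and checking $(2k)^k$ sequences gives FPT.

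You also mis-state the crux in your closing paragraph: that an optimal length-$k$ sequence touches only $\BigO(k)$ distinct edges is trivial (each flip touches two). What is needed is the \emph{converse}---that every edge surviving the preprocessing is forced to appear in the sequence---for without it a large preprocessed instance cannot be rejected, your branching ranges over $\BigO(n)$ candidate flips per step, and the search costs $n^{\BigO(k)}$, which is XP rather than FPT. The forced-flip property of bad happy edges supplies exactly this converse; your proposed charging of flipped edges to components of $\matin \triangle \mattar$ points in the wrong direction and, as stated, would not exclude unboundedly many bad happy edges from the kernel.
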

	
	\begin{proof}
		We remove all good happy edges using~\cref{rem:goodhappy}.
		As all remaining edges need to be flipped at least once, the instance is a ``no'' instance, if there are more than $k$ edges left. 
        In each step, there are at most $2k$ endpoints of matching edges. Adding an edge between the \exposed vertex and an endpoint determines which edge gets removed. 
        Thus, we need to check at most $2k$ possible flips per step and at most $(2k)^k$~sequences.
	\end{proof}

    \begin{remark}
        Actually, \cref{thm:FPT} shows a slightly stronger property: 
        The problem of determining whether the flip distance between two odd matchings is at most $k$ admits a polynomial size kernel with at most $k$ edges and at most $2k+1$ vertices. 
        That is, for every instance consisting of two matchings $M$ and $M'$ along with a parameter $k$, there exists a corresponding instance whose size is polynomial in $k$, such that this new instance is a \texttt{YES} instance if and only if the original instance is a \texttt{YES} instance.
    \end{remark}
	
	Consider the union of an initial matching $\matin$ and a target matching $\mattar$.
    With respect to this, let $A$ be the set of all good happy edges, $B$ the set of all bad happy edges, $C$ the set of edges in $\matin$ that lie on even alternating cycles, $c$ the number of such cycles, and $D$ the set of edges in $\matin$ that lie on the alternating path. 
    We have observed that all edges in $C$ and~$D$ are flipped at least once, and all edges in $B$ are flipped at least twice. 
    Additionally, each even alternating cycle requires one extra flip to place the \exposed vertex onto the cycle.
    The final flip in a cycle returns the \exposed vertex to the component it originally came from. 
    This yields a lower bound on the flip distance, which is tight when the union of $\matin$ and $\mattar$ is crossing-free.
    
    \begin{restatable}[$\star$]{theorem}{UP}
    \label{thm:up}
		Let $\matin$, $\mattar$, $A$, $B$, $C$, $c$, and $D$ be as defined above, and let $\matin\cup \mattar$ be crossing-free. 
        Then the flip distance from $\matin$ to $\mattar$ is $2\lvert B\rvert+\lvert C\rvert+c+\lvert D\rvert$.
    \end{restatable}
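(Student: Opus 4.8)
The plan is to establish the two matching inequalities. The discussion preceding \cref{thm:up} already yields the lower bound $d(\matin,\mattar)\ge 2\lvert B\rvert+\lvert C\rvert+c+\lvert D\rvert$, so the substance of the proof is to construct a flip sequence of exactly this length. First I would use \cref{lem:goodhappy} and \cref{rem:goodhappy} to delete all good happy edges together with their endpoints: they are never flipped, and each separates off a region that is already matched exactly as in $\mattar$, so we may assume $A=\emptyset$. The structural engine is that, since $\matin\cup\mattar$ is crossing-free and the points are convex, its chords form a laminar family; in particular the happy edges cut the convex polygon into faces whose adjacency graph (two faces being adjacent when they share a happy edge) is a tree, which I root at the face containing the \exposed vertex. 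Crucially, the even alternating path cannot cross any chord, so it lies entirely in one part of every happy edge; consequently the far side of a happy edge (the side without the path) is all-happy exactly when the edge is good, and otherwise contains a cycle.

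I would then build the sequence in two stages. Stage one flips along the even alternating path from the \exposed vertex of $\matin$ to that of $\mattar$, costing exactly $\lvert D\rvert$ flips; every intermediate matching stays plane because all edges used lie in the crossing-free set $\matin\cup\mattar$, and afterwards the two \exposed vertices coincide and $\matin\triangle\mattar$ consists solely of the even alternating cycles. Stage two carries this common \exposed vertex as a token through the rooted face tree in depth-first order. When a cycle lies in the current face, switch it: as in \cref{sec:prelim-obs}, this costs one flip per $\matin$-edge of the cycle plus a single placement flip, hence $m_i+1$ flips if the $i$-th cycle has $m_i$ edges in $\matin$, for a total of $\sum_i(m_i+1)=\lvert C\rvert+c$ over all cycles. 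When the traversal must descend across a bad happy edge $uv$ into a child face, I remove $uv$ and match the token to $u$ in one flip (so that $v$ becomes \exposed on the far side), recurse, and on return restore $uv$ with the single flip that adds $uv$ and removes the edge at $u$ (re-exposing the token); since every bad happy edge has a cycle on its far side while good happy edges are never crossed, this opens and closes each bad happy edge exactly once, contributing $2\lvert B\rvert$.

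Summing gives a flip sequence of length $\lvert D\rvert+(\lvert C\rvert+c)+2\lvert B\rvert$, which together with the lower bound proves the equality in \cref{thm:up}. The delicate points---and the main obstacle---are validity and exact accounting: I must check that every intermediate configuration is a plane odd matching and that no flip beyond those counted is ever forced. Planarity of the global structure is immediate from crossing-freeness, but each temporary placement edge and each token-crossing edge must stay inside a single current face, which in turn forces me to fix the depth-first order carefully (resolving a face's own cycles and ordering the descents in a nesting-consistent way) so that the token can always reach the next cycle with a single non-crossing placement flip and each bad happy edge is traversed exactly twice. This is precisely where the laminar structure forced by crossing-freeness is indispensable; as \cref{fig:non-plane-alternating-cycle} illustrates, once crossings are allowed the token may fail to reach a cycle at all, and the clean per-feature accounting---hence the tightness asserted in \cref{thm:up}---breaks down.
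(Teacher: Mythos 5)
Your lower bound and your accounting per feature ($\lvert D\rvert$ for the path, $m_i+1$ per cycle, $2$ per bad happy edge) match the paper, and your bad-happy-edge traversal is exactly the paper's mechanism. But there is a genuine gap in the execution of stage two: you treat the path flip and each cycle switch as \emph{atomic} operations, with token descents happening only across bad happy edges, and you claim a suitable ``nesting-consistent'' ordering lets the token always reach the next cycle with a single placement flip. That claim is false, and no ordering of atomic operations repairs it. Concretely, take points in convex order $a,c,p_1,p_2,p_3,p_4,d,b$ (plus two extra points splitting the short arcs near $a,c$ and $d,b$), and let $C_1$ be a $6$-cycle containing the long chord $ab\in\matin$ and the nested long chord $cd\in\mattar$, joined by short alternating edges near $a,c$ and near $d,b$; let $C_2$ be a $4$-cycle on $p_1,\dots,p_4$, which lies inside \emph{both} chords. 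Before $C_1$ is switched, the current edge $ab$ separates $C_2$ from the \exposed vertex; after $C_1$ is switched, the edge $cd$ does. So $C_2$ is reachable by a single non-crossing placement flip only at a moment \emph{in the middle} of switching $C_1$ --- after $ab$ has been removed and before $cd$ is inserted. Your scheme has no such mid-switch detour, so on this instance it either cannot reach $C_2$ or must pay extra flips to reopen $cd$, destroying tightness. The same trap occurs with the alternating path: a cycle can be enclosed by an $\matin$-chord of the path before stage one and by an $\mattar$-chord of the path afterwards, and since you flip the entire path first and never revisit it, that cycle becomes unreachable within your budget.

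This is precisely what the paper's proof is engineered around, and why its decomposition differs from yours. Instead of the tree of faces of the happy-edge arrangement, the paper builds a plane adjacency graph on \emph{all} components of $\matin\cup\mattar$ of types B--D (bad happy edges, cycles, and the path), with two components adjacent when their vertices appear consecutively on the convex hull, takes a spanning tree $T$ rooted at the path component, and --- crucially --- interleaves the processing: while switching a cycle (case (3)) or while flipping the path edge by edge (case (4)), whenever the currently \exposed vertex comes to lie next to a child component, the procedure immediately detours into that child and returns, before completing the parent. These mid-component detours cost nothing extra (the parent still uses exactly $k+1$ resp.\ $k$ flips) and are exactly the missing mechanism in your proposal. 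Your laminar-face intuition and the $2\lvert B\rvert$ bookkeeping survive unchanged, but the unit of the depth-first traversal must be the component with detours launched mid-switch, not the face with atomic switches.
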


    We only give an intuition for the proof of \cref{thm:up}. The idea is to place the \exposed vertex once on every alternating cycle in $\matin\cup\mattar$ without making any unnecessary flips. In this step, it helps that in a convex point set if a vertex sees a connected component, it always sees two consecutive vertices. Therefore, we do not have to worry about parity constraints. Then, since the two matchings are together crossing-free, we can greedily switch every cycle in the optimal number of flips.

    \begin{corollary}\label{cor:radius}
        The radius of the flip graph of odd matchings for $n=2m+1$ points in convex point sets is at most $\frac{3m}{2}-\BigO(1)$. As a consequence, the diameter of the flip graph is at most $3m-\BigO(1) = \frac{3n}{2}-\BigO(1)$.
    \end{corollary}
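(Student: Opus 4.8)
The plan is to prove the radius bound by exhibiting a single ``central'' odd matching whose flip distance to every other odd matching is at most $\tfrac{3m}{2}-\BigO(1)$; the radius bound is then immediate, and the diameter bound follows from the general inequality $\mathrm{diam}\le 2\cdot\mathrm{rad}$ recorded in \cref{sec:prelim-obs} together with $n=2m+1$.

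For the center I would take the odd matching $\matin$ consisting of the $m$ convex-hull edges joining consecutive points, say $p_1p_2,\,p_3p_4,\dots,p_{2m-1}p_{2m}$, leaving $p_{2m+1}$ \exposed. The decisive feature of this choice is that a convex-hull edge cannot be crossed by any straight-line segment between points of $\ps$; hence for \emph{every} odd matching $\mattar$ the union $\matin\cup\mattar$ is crossing-free. This is exactly the hypothesis of \cref{thm:up}, so the flip distance is given \emph{exactly} by $d(\matin,\mattar)=2\lvert B\rvert+\lvert C\rvert+c+\lvert D\rvert$, where $A,B,C,c,D$ are taken with respect to $\matin\cup\mattar$.

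Next I would simplify this using the structure of $\matin$. Since every edge of $\matin$ lies on the convex hull, every happy edge is a hull edge and is therefore a \emph{good} happy edge by the first clause of the definition; consequently $B=\emptyset$. Moreover the $m$ edges of $\matin$ split into good happy edges, cycle edges, and path edges, so $\lvert A\rvert+\lvert C\rvert+\lvert D\rvert=m$. Substituting $B=\emptyset$ gives
\[
 d(\matin,\mattar)=\lvert C\rvert+c+\lvert D\rvert=m-\lvert A\rvert+c .
\]
Because every alternating cycle in $\matin\triangle\mattar$ alternates between $\matin$ and $\mattar$ and hence contains at least two edges of $\matin$, we have $2c\le\lvert C\rvert=m-\lvert A\rvert-\lvert D\rvert$, whence
\[
 d(\matin,\mattar)\le m-\lvert A\rvert+\frac{m-\lvert A\rvert-\lvert D\rvert}{2}=\frac{3m}{2}-\frac{3\lvert A\rvert+\lvert D\rvert}{2}\le\frac{3m}{2}.
\]

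The part I expect to be the main obstacle is squeezing out the additive $\BigO(1)$ improvement over the clean bound $\tfrac{3m}{2}$. Equality in the chain above forces $\lvert A\rvert=\lvert D\rvert=0$ together with $c=\lvert C\rvert/2$, that is, $\matin$ and $\mattar$ share no edge, have a common \exposed vertex, and $\matin\triangle\mattar$ decomposes into $4$-cycles only. A short parity argument rules this out whenever $m$ is odd: the $m=\lvert C\rvert$ edges of $\matin$ on cycles cannot split into blocks of size exactly two, so some cycle uses at least three edges of $\matin$, improving $2c\le\lvert C\rvert$ by one and giving $d(\matin,\mattar)\le\tfrac{3m-1}{2}$. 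For even $m$ this estimate is actually attained, by targets whose symmetric difference with $\matin$ is a disjoint union of local $4$-cycles, so extracting an honest constant in the even case is the real crux; I would attack it either through a finer, geometry-aware accounting of the cycle lengths or by optimizing the choice of center. Finally, taking the maximum over $\mattar$ bounds the eccentricity of $\matin$, hence the radius, by $\tfrac{3m}{2}-\BigO(1)$, and then $\mathrm{diam}\le 2\,\mathrm{rad}\le 3m-\BigO(1)=\tfrac{3n}{2}-\BigO(1)$.
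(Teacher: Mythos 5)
Your proposal is correct and takes essentially the same route as the paper: the paper's proof also centers the flip graph at the all-convex-hull-edges matching, observes that its union with any other odd matching is crossing-free with every happy edge good, plugs at most $\frac{m}{2}$ alternating cycles and one alternating path into \cref{thm:up}, and doubles the radius for the diameter bound. Your final worry about squeezing out a strictly positive additive constant is unnecessary---the paper's own calculation likewise only yields $\frac{3m}{2}$, with the $-\BigO(1)$ merely absorbing constant-order terms, so your argument through the bound $2c\le\lvert C\rvert$ is already complete.
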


    \begin{proof}
        For any given matching $M$ we show that we can flip $M$ into a given matching $M'$ that has all its edges on the convex hull. Observe that all happy edges of $M\cup M'$ are good happy edges, and that it contains at most $\frac{m}{2}$ alternating cycles and one alternating path. Plugging the parameters into~\cref{thm:up} we obtain the desired upper bound. 
        Since the diameter of a graph is bounded by twice its radius, we obtain an upper bound on the diameter of $3m-\BigO(1) = \frac{3n}{2}-\BigO(1)$.
 \end{proof}

    \begin{restatable}[$\star$]{remark}{lowerbound}
        \label{rem:radius-lower-bound}
        For any given odd matching $M$ there exists a matching $M'$ on $n=2m+1$ points in a convex point set such that $d(M,M') = \frac{3m}{2}-\BigO(1)$.
    \end{restatable}

With this result, we improve the previously known upper bound on the diameter from $4m - \BigO(1) = 2n - \BigO(1)$, as given in~\cite{oddmatchings}, to $3m - \BigO(1) = \frac{3n}{2} - \BigO(1)$. 
By also establishing a matching lower bound on the radius, we demonstrate that this upper bound is best possible among all approaches that proceed via a canonical intermediate structure. 
Any further improvement to the upper bound on the diameter will likely require more sophisticated techniques, potentially exploiting structural similarities between $\matin$ and $\mattar$.
    \section{Hardness results}
\label{sec:hardness}

In this section, we analyze the computational complexity of the problem in various settings, specifically for point sets in general position, grid graphs, and more broadly, planar graphs. 
We prove that deciding whether there exists a flip sequence of length at most~$k$ between two odd matchings is \NP-complete in each of these settings.

\subsection{\NP-completeness in point sets}
\label{sec:hardness_pointsets}

\begin{restatable}{theorem}{hardnessPointSet} \label{thm:general_hard}
    Let $\matin$ and $\mattar$ be two odd matchings on a set of $n$ points in the plane. 
    Deciding whether there is a flip sequence of length $k$ transforming $\matin$ into $\mattar$ is \NP-complete.
\end{restatable}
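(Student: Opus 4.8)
The two directions of \NP-completeness are handled separately, with essentially all of the work going into hardness. Membership in \NP\ follows from the connectivity results of~\cite{oddmatchings}: since the geometric flip graph on a point set in general position is connected with diameter $\BigO(n^2)$, the flip distance between any two odd matchings is $\BigO(n^2)$. Hence, if $k$ exceeds this bound the instance is trivially positive, and otherwise $k=\BigO(n^2)$ is polynomially bounded, so a flip sequence of length at most $k$ is a polynomial-size certificate. Each individual flip can be verified in polynomial time by checking that the inserted edge is incident to the current \exposed vertex, that it removes the unique matching edge at its other endpoint, and that the resulting matching is plane. Thus the problem lies in \NP.

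For hardness I would reduce from \textsf{3-SAT} and carry out the construction in two stages, matching the framework announced in the introduction. In the first, combinatorial stage I do not fix coordinates but instead work with a \emph{partial order type}, prescribing only the orientations of those point triples that decide whether the relevant gadget edges cross and leaving all other orientations free; this flexibility is what makes gadget design tractable. I would build a variable gadget whose associated alternating cycle in $\matin \cup \mattar$ admits two distinct shortest switching routes for the \exposed vertex, one for each truth value, together with a clause gadget, again realized as an alternating cycle, that can be switched within its optimal number of flips exactly when at least one incident literal has been routed so as to satisfy it; otherwise a crossing obstruction of the type depicted in~\cref{fig:non-plane-alternating-cycle} forces additional flips. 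Calibrating the threshold $k$ against the flip-distance accounting of~\cref{thm:up}—each bad happy edge costing two flips and each cycle one extra flip to place the \exposed vertex onto it—one arranges that the flip distance meets $k$ if and only if the formula is satisfiable.

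The second stage realizes this abstract arrangement as an honest point set in general position, using the partial-order-type-to-general-position framework with only polynomial overhead. The auxiliary points that the realization introduces are attached via edges that are \emph{good happy edges} of $\matin \cup \mattar$; by~\cref{lem:goodhappy} no shortest flip sequence ever touches them, so they contribute a fixed, predictable amount to every flip distance and the threshold $k$ transfers unchanged.

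The hard part will be guaranteeing the faithfulness of this realization. I must ensure that passing to general position introduces no spurious crossings that would open up unintended flip shortcuts below the threshold, and, conversely, that every crossing on which the clause gadgets rely to block cheap switching survives the perturbation. Controlling both of these simultaneously while keeping the blow-up polynomial is precisely what the framework is designed to deliver, and verifying that it preserves the matching upper and lower bounds on the flip distance—so that the satisfiability equivalence is exact rather than merely approximate—is where I expect the bulk of the technical effort to lie.
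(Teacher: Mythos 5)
Your membership argument and your overall two-stage architecture (gadget reduction built on a grid-like partial order type, then \cref{lem:to_general} to reach general position) match the paper, but there is a genuine gap at the very start of the hardness direction: you reduce from plain \textsc{3-SAT}. The paper reduces from \textsc{Planar Monotone 3SAT}, and this choice is load-bearing, not cosmetic. The gadgets must be laid out in the plane: the rectilinear embedding places all variable gadgets on a horizontal line, clauses with only positive literals above, and clauses with only negative literals below, so that every clause--variable connection is a vertical visibility corridor and no two connections interfere. For an arbitrary \textsc{3-SAT} instance the variable--clause incidence graph need not be planar, so two such corridors must cross, and your sketch provides no crossover gadget and no argument that crossing corridors do not create unintended visibilities (hence flip shortcuts below the threshold). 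As written, the reduction fails at the layout stage; you would either have to design and verify a crossover gadget or, as the paper does, start from a planar (and monotone, for the above/below placement) variant.

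Two further steps are miscalibrated. First, you invoke \cref{thm:up} for the flip-distance accounting and \cref{lem:goodhappy} to protect auxiliary structure, but both statements are proved only for \emph{convex} point sets (and \cref{thm:up} additionally requires $\matin \cup \mattar$ to be crossing-free), whereas the hardness instance is far from convex. The paper instead uses only the elementary lower bound from \cref{sec:prelim-obs}: every edge of $\matin \triangle \mattar$ is flipped at least once, each alternating cycle costs one extra flip to place the \exposed vertex on it, and any happy edge that is touched costs at least two extra flips; this gives the threshold $\frac{1}{2}\lvert \matin \triangle \mattar \rvert + V + C$ of \cref{prop:general_hard}, attained exactly when the formula is satisfiable. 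Second, your description of the general-position stage is inaccurate: \cref{lem:to_general} does not introduce auxiliary points attached by good happy edges. It perturbs the \emph{existing} grid points into general position while preserving the orientation of every non-collinear triple (conditions (1)--(3)), which is precisely what guarantees that all blocking crossings survive and no new visibilities appear; the faithfulness you flag as the remaining ``hard part'' is delivered by these orientation-preservation conditions directly, and no flip-distance bookkeeping for added points is needed because no points are added.
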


We reduce from the \NP-complete problem \textsc{Planar Monotone 3SAT}~\cite{planarmonotone3sat}.
Membership in \NP\ follows easily from the fact that a valid flip sequence of length at most $k$ serves as a certificate.
As the flip graph of odd matchings in this setting has diameter~$\mathcal{O}(n^2)$ as shown in~\cite{oddmatchings}, a valid flip sequence has polynomial length.

\subparagraph*{Planar Monotone 3SAT.} 
This is a variant of \textsc{3SAT} where each clause has at most three literals, either all positive or all negative. 
The variable-clause incidence graph is planar and must be embedded with variables on a horizontal line, positive clauses above, and negative clauses below. 
Moreover, every such instance can be represented in a rectilinear form~\cite{knuth1992problemcompatiblerepresentatives}, where each variable and clause corresponds to a rectangle, and their relationship is depicted by a vertical line segment connecting them; we refer to~\cref{fig:3sat} for an illustration.

\begin{figure}[ht]
	\centering
	\includegraphics{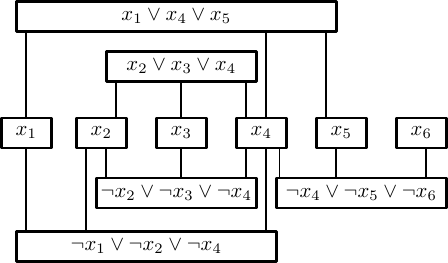}
	\caption{The clause-variable incidence graph of the Boolean formula $\Phi$ in conjunctive normal form with $\Phi = (x_1\vee x_4 \vee x_5)\wedge(x_2\vee x_3 \vee x_4)\wedge(\neg x_1\vee \neg x_2 \vee \neg x_4)\wedge(\neg x_2\vee \neg x_3 \vee \neg x_4)\wedge(\neg x_4\vee \neg x_5 \vee \neg x_6)$ as an instance of \textsc{Planar Monotone 3SAT}.}
	\label{fig:3sat}
\end{figure}

\subparagraph*{Variable gadget.} 
The variable gadget is shown in~\cref{fig:var}. 
While the black edge is happy, the blue and red edges are only contained in the initial and target configuration, respectively.
    
    \begin{figure}[htb]
		\centering
		\includegraphics{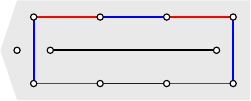}
		\caption{The variable gadget with the isolated vertex on the left.}
		\label{fig:var}
	\end{figure}

There are two optimal flip sequences that transform the variable gadget from its initial state to its target state. 
Two representative intermediate configurations are shown in~\cref{fig:var2}.
The direction of traversal encodes the truth value assigned to the variable.

    \begin{figure}[htb]
		\centering
		\includegraphics{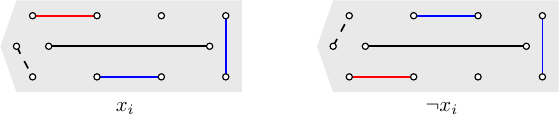}
		\caption{The variable gadget with a truth assignment: true on the left, false on the right.}
		\label{fig:var2}
	\end{figure}
	
\subparagraph*{Clause gadget.}
The clause gadget is depicted in~\cref{fig:clause}; note that the figure also includes three variable gadgets to illustrate the overall embedding.
As before, blue and red edges indicate the initial and target state, respectively. 
If a clause contains only positive literals, we place it above the corresponding variable gadgets, and if it contains only negative literals, we place it below.
Additionally, we introduce edges, shown in black, which are part of both matchings; they serve to obstruct visibility between the vertices of the variable and clause gadgets, permitting interaction only through a single designated pair.

\begin{figure}[ht]
	\centering
	\includegraphics{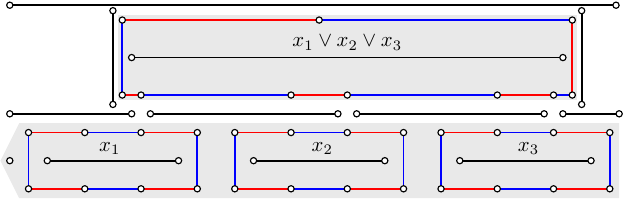}
	\caption{Clause gadget corresponding to $(x_1 \vee x_2 \vee x_3)$, with the associated variable gadgets included for clarity.}
	\label{fig:clause}
\end{figure}

\Cref{fig:clause2} illustrates how a valid truth assignment enables interaction with a clause gadget. 
During an optimal reconfiguration of a variable gadget, there is a point in the sequence where an \exposed vertex becomes visible to a vertex in the clause gadget if and only if the chosen sequence reflects a valid truth assignment. 
Moreover, when the clause gadget is reconfigured optimally, no \exposed vertex within the clause gadget will ever have visibility to a vertex in any other variable gadget.
\pagebreak
\subparagraph*{Rectilinear representation.} 
We complete the construction using the rectilinear representation of the \textsc{Planar Monotone 3SAT} instance. 
Each variable and clause rectangle is replaced by the corresponding gadget. 
Happy edges are added to block unintended visibility, while vertical segments in the rectilinear layout define visibility gaps between designated vertices of variable and clause gadgets.
We refer to~\cref{fig:3sat,fig:3satr} for illustrations.

\begin{figure}[htb]
	\centering
	\includegraphics{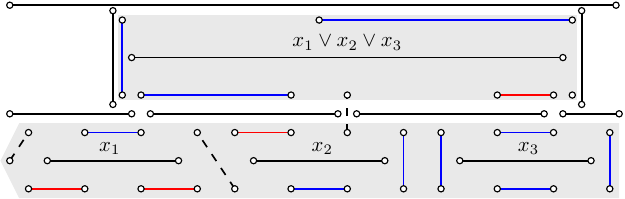}
	\caption{Clause gadget in an intermediate state. The clause $(x_1\vee x_2 \vee x_3)$ is satisfied because the variable $x_2$ is assigned \texttt{TRUE}.
    In contrast, $x_1$ is assigned \texttt{FALSE} and therefore does not contribute to reconfiguring the clause.}
	\label{fig:clause2}
\end{figure}

\begin{figure}[htb]
	\centering
	\includegraphics{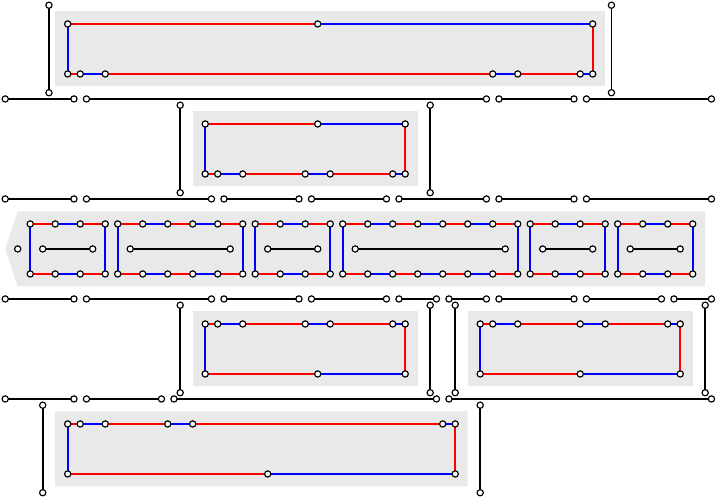}
	\caption{An instance of the odd matching reconfiguration problem, derived from the \textsc{Planar Monotone 3SAT} instance depicted in~\cref{fig:3sat}.}
	\label{fig:3satr}
\end{figure}

\begin{restatable}[$\star$]{proposition}{hardnessPointSets} \label{prop:general_hard}
	Let $\Phi$ be an instance of \textsc{Planar Monotone 3SAT} with $C$ clauses and $V$ variables. 
    There exists a flip sequence from $\matin$ to $\mattar$ (the corresponding odd matching instances) of length $\frac{1}{2}\lvert \matin \triangle \mattar \rvert+V+C$ flips if and only if $\Phi$ has a valid truth assignment.
\end{restatable}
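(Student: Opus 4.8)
The plan is to establish the two directions of the biconditional separately, using the fact that $\frac{1}{2}\lvert \matin \triangle \mattar\rvert$ is an unavoidable lower bound on the number of flips (each edge of the symmetric difference must be flipped at least once, and each flip affects exactly one matching edge), so the additive term $V+C$ must account for exactly the extra flips needed to ``activate'' each variable gadget once and each clause gadget once. First I would pin down the local flip accounting inside a single gadget. For a variable gadget, I would argue that transforming its initial (blue) configuration into its target (red) configuration requires flipping every edge of its symmetric difference exactly once, \emph{plus} exactly one additional flip to first place the \exposed vertex onto the alternating cycle of the gadget; this is precisely the $+1$ per cycle bookkeeping from the switch operation described before \cref{fig:mats}. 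The two optimal directions of traversal (\cref{fig:var2}) correspond to the two truth values, and the key geometric property to verify is that in each direction there is exactly one intermediate moment at which an \exposed vertex of the gadget is visible to the designated clause vertex, and that visibility occurs for the ``true'' orientation at the clause's designated pair.

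For the forward direction (satisfiable $\Rightarrow$ short sequence), I would fix a satisfying assignment and, for each variable, traverse its gadget in the direction encoding the assigned truth value. This costs $\frac{1}{2}\lvert\matin\triangle\mattar\rvert$ flips on the variable/clause edges plus one activation flip per variable gadget, giving the $+V$ term. Since every clause is satisfied, at least one of its literals is true, so during that literal's variable traversal an \exposed vertex becomes visible to the clause gadget at the right moment; I would use that window to switch the clause cycle with a single extra activation flip, contributing the $+C$ term. The crux here is scheduling: the variable traversals and the clause switches must be interleaved so that each clause switch happens exactly when a satisfying variable exposes the needed vertex, and so that no two gadgets interfere. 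The black happy edges, which \cref{lem:goodhappy} guarantees are never flipped in a shortest sequence, enforce that the only visibility interactions are through the single designated pairs (\cref{fig:clause}), so the gadgets can be handled essentially independently and the flips simply sum.

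For the reverse direction (short sequence $\Rightarrow$ satisfiable), I would argue contrapositively from the lower bound. Any flip sequence must pay $\frac{1}{2}\lvert\matin\triangle\mattar\rvert$ on symmetric-difference edges and at least one activation flip per gadget cycle; to meet the bound $\frac{1}{2}\lvert\matin\triangle\mattar\rvert+V+C$ with equality, each variable gadget may be activated only once (forcing a single consistent truth value, rather than being traversed back and forth) and each clause gadget may be activated only once. I would then show that the only way to switch a clause gadget with a single flip is to have an \exposed vertex supplied through its designated pair at the right moment, which by the geometry (\cref{fig:clause2}) happens precisely when one of its literals is set true by the corresponding variable traversal. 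Reading off the chosen traversal direction of each variable therefore yields a truth assignment satisfying every clause.

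The main obstacle I anticipate is the careful verification of the visibility and non-interference claims: namely that the black happy edges obstruct \emph{all} unintended visibility (so no clause can be switched ``for free'' by an unrelated variable, and no clause \exposed vertex can ever reach a foreign variable gadget, as asserted after \cref{fig:clause}), and that the single designated visibility window in each variable gadget opens exactly once per optimal traversal and only in the orientation corresponding to the literal's satisfying value. Establishing these geometric invariants rigorously, together with an exchange argument showing that any sequence beating or meeting the bound cannot afford to deviate from single-activation gadget traversals, is where the bulk of the technical work lies; the flip-counting arithmetic itself is then routine.
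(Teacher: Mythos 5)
Your proposal matches the paper's proof essentially step for step: the same lower bound of $\frac{1}{2}\lvert \matin \triangle \mattar\rvert + V + C$ obtained by charging one extra flip for placing the \exposed vertex on each of the $V+C$ alternating cycles, the same forward direction via assignment-guided gadget traversals with detours to switch satisfied clause gadgets, and the same tightness argument in the reverse direction (any flip touching a happy edge forces a removal and re-insertion, i.e., two unbudgeted flips, so only the two optimal variable traversals are affordable and each clause switch certifies a satisfying literal). One minor correction: \cref{lem:goodhappy} applies only to convex point sets and cannot be invoked here; the paper instead uses exactly the elementary two-extra-flips observation that your own exchange argument already supplies, so your citation is superfluous rather than a gap.
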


\Cref{thm:general_hard} follows from \cref{prop:general_hard}. 
In our reduction, we construct point sets that are not in general position, as illustrated in the figures. 
To address this, \cref{lem:to_general} shows how to transform the point set into general position while preserving the essential structure.

\subsection{Generalizing to point sets in general position}
\label{subsec:general-point-sets}
This section is dedicated to show the following lemma.

\begin{lemma}\label{lem:to_general}
Let $\ps$ be a point set that can be embedded in an $h \times w$ grid.
Then there exists a mapping $\phi$ that maps each point of $\ps$ to a point on an $f(h,w) \times g(h,w)$ grid where $f$ and~$g$ are polynomial functions in $h$ and $w$ such that 
\begin{enumerate}
    \item for every three points $p, q, r$ of $\ps$, $\phi(p), \phi(q)$, and $\phi(r)$ are not collinear,
    \item if $p,q$, and $r$ are three non-collinear points of $\ps$, then the orientation of the triplet $(p,q,r)$ is the same as that of $\big(\phi(p), \phi(q), \phi(r)\big)$, and 
    \item for every two points $p$ and $q$ of $\ps$, $\phi(p)$ and $\phi(q)$ have different $x$- and $y$-coordinates.
\end{enumerate}
\end{lemma}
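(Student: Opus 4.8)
The plan is to exhibit a single explicit map $\phi$, built from a shear followed by a parabolic lift, and to verify the three conditions by one direct determinant computation. Recall that the orientation of a triple $(p,q,r)$ is the sign of $D(p,q,r)=\det\begin{pmatrix} q_x-p_x & q_y-p_y \\ r_x-p_x & r_y-p_y\end{pmatrix}$, an \emph{integer} that is nonzero precisely for non-collinear triples, so $|D|\ge 1$ whenever $D\neq 0$. This integrality is the crucial leverage: any perturbation whose effect on $D$ is strictly smaller than the scaled leading term cannot flip a genuine orientation, while it can freely resolve the degenerate ($D=0$) triples.

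First I would apply the shear $(x,y)\mapsto(hx+y,\,y)$, where $h$ is the grid height (so every $y$-coordinate lies in $\{0,\dots,h-1\}$). Reading $X_p:=hp_x+p_y$ as a base-$h$ number with high-order digit $p_x$ and low-order digit $p_y$ shows that these values are pairwise distinct; this already secures the ``distinct $x$-coordinates'' half of condition (3). Being a linear map with positive determinant $h$, the shear multiplies every $D(p,q,r)$ by exactly $h$, hence preserves all orientations and keeps collinear triples collinear. The only tasks left are to break the surviving collinearities and to separate the $y$-coordinates, both handled by the lift.

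Next I would compose with $(X,y)\mapsto\big(X,\;Ny+X^2\big)$ for a large integer $N$, yielding $\phi(x,y)=\big(hx+y,\;Ny+(hx+y)^2\big)$. A short expansion shows that the orientation determinant of $\big(\phi(p),\phi(q),\phi(r)\big)$ equals $N\,h\,D(p,q,r)+R$, where the perturbation term collapses to the Vandermonde-type product $R=(X_q-X_p)(X_r-X_p)(X_r-X_q)$. Since the $X$-values are pairwise distinct, $R\neq 0$ for every triple. For an originally collinear triple ($D=0$) the determinant is exactly $R\neq 0$, giving condition (1); for a non-collinear triple $|NhD|\ge Nh$ dominates $|R|<(hw)^3$ once $N>h^2w^3$, so the sign equals that of $D$, giving condition (2). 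An analogous size comparison (using $N>(hw)^2$) shows the lifted $y$-coordinates $Np_y+X_p^2$ are pairwise distinct, completing condition (3). Taking $N=(hw)^3$ keeps the first coordinate below $hw$ and the second below $h^4w^3$, so $\phi$ lands in a grid whose dimensions are polynomial in $h$ and $w$, as required.

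The step I expect to be the crux is the lift: one needs a perturbation simultaneously strong enough to destroy \emph{every} collinearity yet weak enough never to reverse a \emph{true} orientation. The resolution is the exact factorization of the perturbation term as the product $R$, which guarantees $R\neq 0$ unconditionally (so all collinearities break) while keeping $|R|$ polynomially bounded (so the scaled leading term $NhD$ always wins for genuine triples). The remaining work is routine bookkeeping: verifying the determinant expansion, confirming the two distinctness claims, and checking the polynomial bounds on $N$ and on the image coordinates.
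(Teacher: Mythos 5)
Your proof is correct, and it takes a genuinely different route from the paper's. The paper works by local perturbation: it surrounds each point with an $\varepsilon$-square (for $\varepsilon = \frac{1}{16h^3w^3}$), refines the grid so that each square contains $(n^4+n)\times(n^4+n)$ grid points, and places the images $\phi(p_1),\dots,\phi(p_n)$ greedily, at each step avoiding the fewer than $n^4$ grid points of $B_{\varepsilon}(p_{i+1})$ hit by lines spanned by already-placed points, as well as the horizontal and vertical lines through them; two auxiliary lemmas (a $\frac{1}{2hw}$ lower bound on the distance from a grid point to a spanned line, and a stability lemma saying $\varepsilon$-perturbations cannot flip a genuine orientation) then yield conditions (1)--(3). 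Your construction is instead a single closed-form map, a shear followed by a moment-curve lift, verified by one determinant identity: the shear has determinant $h>0$, so it scales every orientation determinant by $h$, and the base-$h$ encoding $X_p = hp_x+p_y$ makes the $X$-coordinates pairwise distinct; the lift contributes exactly the Vandermonde term $R=(X_q-X_p)(X_r-X_p)(X_r-X_q)\neq 0$, which resolves all collinear triples, while $|X|\le hw-1$ gives $|R|\le (hw-1)^3 < (hw)^3 = N \le Nh$, so the leading term $NhD$ dictates the sign whenever $D\neq 0$ (note this sharper bound even covers $h=1$, where your stated threshold $N>h^2w^3$ would be tight but condition (2) is vacuous anyway), and $N>(hw-1)^2$ separates the lifted $y$-coordinates. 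Each approach buys something: yours is deterministic, needs no auxiliary distance lemmas, is computable in constant time per point, and lands in a smaller grid (roughly $hw\times h^4w^3$, versus the paper's $16h^4w^3(n^4+n)\times 16h^3w^4(n^4+n)$); the paper's perturbation framework, on the other hand, keeps every image inside a tiny square around its preimage, so the drawing of the hardness reduction is preserved essentially verbatim --- a feature the authors explicitly advertise as reusable --- whereas your shear-and-lift distorts the picture globally, which is harmless for the lemma as stated (only orientations and coordinate distinctness matter) but forfeits that metric locality.
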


An $\varepsilon$-square is a square with side length $\varepsilon$. 
Our main idea is to place small $\varepsilon$-squares~$B_{\varepsilon}(v)$ centered at each vertex $v$.
These $\varepsilon$-squares are small enough such that for any triple of non-collinear points $p,q,r \in S$, a line $p'q'$ with $p' \in B_{\varepsilon}(p)$ and $q' \in B_{\varepsilon}(q)$ does not cross~$B_{\varepsilon}(r)$.
Further, we blow up our grid such that there are enough points inside each $\varepsilon$-square, and hence, there is at least one grid point that shares no $x$- or $y$- coordinate with any other point and that is not on a line spanned by two points of $S$.

\begin{restatable}[$\star$]{lemma}{distToGrid}
\label{lem:dist_to_grid}
 Let $\ell$ be a line through two points of an $h\times w$ square grid $G$.
 For every point $p$ of $G$ not on $\ell$, let $p^x$ and $p^y$ be the point on $\ell$ that shares the same $x$- and $y$-coordinate with $p$, respectively. 
 Then the distance from $p$ to each of $p^x$ and $p^y$ is at least $\frac{1}{2hw}$.
\end{restatable}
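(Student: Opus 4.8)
The plan is to bound the distance from a grid point $p$ to the line $\ell$, then observe that the axis-parallel distances $|pp^x|$ and $|pp^y|$ are at least as large as this perpendicular distance. First I would set up coordinates: since $\ell$ passes through two grid points $a=(a_1,a_2)$ and $b=(b_1,b_2)$ of the $h\times w$ grid, I can write $\ell$ in the form $\alpha x + \beta y = \gamma$ where $\alpha = b_2 - a_2$, $\beta = -(b_1 - a_1)$, and $\gamma = \alpha a_1 + \beta a_2$. Crucially, $\alpha$, $\beta$, and $\gamma$ are all \emph{integers}, and $(\alpha,\beta)\neq(0,0)$.

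The key step is the following integrality argument. For any grid point $p=(p_1,p_2)$ not on $\ell$, the quantity $\alpha p_1 + \beta p_2 - \gamma$ is a nonzero integer, so its absolute value is at least $1$. The Euclidean perpendicular distance from $p$ to $\ell$ is then
\[
    \operatorname{dist}(p,\ell) = \frac{|\alpha p_1 + \beta p_2 - \gamma|}{\sqrt{\alpha^2+\beta^2}} \ge \frac{1}{\sqrt{\alpha^2+\beta^2}}.
\]
Since $a$ and $b$ lie in the $h\times w$ grid, the coordinate differences satisfy $|\alpha| = |b_2-a_2| \le h-1 < h$ and $|\beta| = |b_1-a_1| \le w-1 < w$, hence $\sqrt{\alpha^2+\beta^2} \le \sqrt{h^2 + w^2}$. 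A clean (if slightly lossy) bound is $\sqrt{\alpha^2+\beta^2} \le |\alpha| + |\beta| < h + w \le 2hw$ (the last inequality holding whenever $h,w\ge 2$, which we may assume since otherwise the grid is degenerate), giving $\operatorname{dist}(p,\ell) \ge \tfrac{1}{2hw}$.

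It remains to transfer this from the perpendicular distance to the axis-parallel distances. The point $p^x$ lies on $\ell$ and shares the $x$-coordinate of $p$, so the segment $pp^x$ is vertical; its length $|pp^x|$ is the vertical gap between $p$ and $\ell$. Since the perpendicular distance is the shortest distance from $p$ to any point of $\ell$, we have $|pp^x| \ge \operatorname{dist}(p,\ell)$, and likewise $|pp^y| \ge \operatorname{dist}(p,\ell)$. Combining with the bound above yields $|pp^x|, |pp^y| \ge \tfrac{1}{2hw}$, as claimed. (In the edge cases where $\ell$ is vertical or horizontal, one of $p^x,p^y$ is undefined, but the statement is vacuous for that coordinate and the remaining axis-parallel distance is an integer $\ge 1$, which is trivially at least $\tfrac{1}{2hw}$.)

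I expect the main subtlety to be the constant in the final inequality rather than any conceptual difficulty: one must choose the bound on $\sqrt{\alpha^2+\beta^2}$ carefully so that the resulting lower bound is exactly $\tfrac{1}{2hw}$ as stated, and verify that this holds across all grid dimensions including the small degenerate cases. The heart of the argument — that a nonzero integer combination is at least $1$ in absolute value — is elementary, so the work lies entirely in bookkeeping the geometry and the constants.
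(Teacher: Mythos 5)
Your proof is correct, but it takes a genuinely different route from the paper's. You pass through the perpendicular distance: writing $\ell$ as $\alpha x + \beta y = \gamma$ with integer coefficients bounded by the grid dimensions, you get $\operatorname{dist}(p,\ell) \ge \frac{1}{\sqrt{\alpha^2+\beta^2}} \ge \frac{1}{h+w} \ge \frac{1}{2hw}$, and then observe that since $p^x, p^y \in \ell$, the axis-parallel distances dominate the perpendicular one. The paper instead bounds each axis-parallel distance directly: a non-vertical line through two grid points has slope $\frac{a}{b}$ with $|a| \le h$ and $0 < b \le w$, so at an integer abscissa the line's ordinate is either an integer (forcing $|pp^x| \ge 1$) or at distance at least $\frac{1}{b} \ge \frac{1}{w}$ from every integer, and symmetrically $|pp^y| \ge \frac{1}{h}$. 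Both arguments hinge on the same integrality fact --- a nonzero integer combination has absolute value at least $1$, which in the paper's phrasing becomes ``a non-integral rational with denominator $b$ is at least $\frac{1}{b}$ from any integer.'' The trade-off: the paper's per-axis computation yields the stronger bounds $\frac{1}{w}$ and $\frac{1}{h}$, of which $\frac{1}{2hw}$ is a deliberately lossy common consequence, whereas your detour through $\operatorname{dist}(p,\ell)$ gives only the weaker $\frac{1}{h+w}$ but handles $p^x$ and $p^y$ symmetrically in one stroke and absorbs vertical and horizontal lines into the same formula (your separate treatment of the undefined point in those cases matches the paper's). One bookkeeping note: your restriction to $h, w \ge 2$ is unnecessary, since $h + w \le 2hw$ is equivalent to $\frac{1}{h} + \frac{1}{w} \le 2$ and thus holds for all $h, w \ge 1$; the constant in the statement is safe in all cases.
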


The lemma above is used to show the following lemma.

\begin{restatable}[$\star$]{lemma}{smallwiggle}    
\label{lem:small_wiggle}
 Let $p, q$ be two points of an $h \times w$ grid $G$. Let $p'$ and $q'$ be points in the $\varepsilon$-squares centered at $p$ and $q$, respectively.
 If $\varepsilon \leq \frac{1}{16w^3h^3}$, then for every grid point $r$ that does not lie on the line through $p$ and $q$, the orientation of the triplet $(p,q,r)$ is the same as $(p',q',r')$ for every point $r'$ in the $\varepsilon$-square centered at $r$.
\end{restatable}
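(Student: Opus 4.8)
The plan is to bound the perturbation of each point and then show that such a small perturbation cannot change the sign of any orientation determinant. First I would fix the meaning of the orientation of a triplet $(p,q,r)$ via the sign of the determinant
\[
\Delta(p,q,r) = \det\begin{pmatrix} q_x - p_x & q_y - p_y \\ r_x - p_x & r_y - p_y \end{pmatrix},
\]
which is positive, negative, or zero according to whether $r$ lies left of, right of, or on the directed line through $p$ and $q$. The goal is to show that if $r$ is a grid point not on the line through the grid points $p$ and $q$, then $\Delta(p',q',r')$ keeps the sign of $\Delta(p,q,r)$ whenever $p',q',r'$ lie in the respective $\varepsilon$-squares.

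The key step is a quantitative separation: since $p,q,r$ are grid points with $r$ off the line $\ell$ through $p$ and $q$, the value $\lvert\Delta(p,q,r)\rvert$ is a nonzero integer determinant, hence at least $1$. Equivalently, by \cref{lem:dist_to_grid}, the distance from $r$ to $\ell$ is at least $\frac{1}{2hw}$, and since $\lvert\Delta(p,q,r)\rvert$ equals this distance times $\lvert pq\rvert \ge 1$, we again get a concrete lower bound on the ``unperturbed'' determinant. I would then estimate the change $\lvert\Delta(p',q',r') - \Delta(p,q,r)\rvert$ caused by moving each of the three points by at most $\varepsilon\sqrt{2}$ in each coordinate. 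Expanding the determinant multilinearly in the displacements $\delta_p = p'-p$, $\delta_q = q'-q$, $\delta_r = r'-r$, the error is a sum of terms each of which is a product of displacement magnitudes (at most order $\varepsilon$) with grid-coordinate differences (at most order $\max(h,w)$), plus quadratic and cubic terms in $\varepsilon$. Using the crude bounds $\lvert p_x - q_x\rvert, \lvert p_y - q_y\rvert \le w$ and analogous bounds with $h$, together with $\varepsilon \le \frac{1}{16 w^3 h^3}$, I would show the total error is strictly smaller than the lower bound on $\lvert\Delta(p,q,r)\rvert$, so the sign is preserved.

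The main obstacle is purely bookkeeping: bounding the multilinear expansion carefully enough that the final inequality closes with the stated threshold $\varepsilon \le \frac{1}{16 w^3 h^3}$. Concretely, the dominant error terms are the linear ones like $\delta_{q,x}(r_y - p_y)$, which are bounded by $\varepsilon \cdot \max(h,w)$; summing a constant number of such terms and controlling the higher-order $\varepsilon^2$ and $\varepsilon^3$ contributions gives a bound of the form $c\,\varepsilon\,\max(h,w)^{?}$ for a small constant, and one checks that with the chosen $\varepsilon$ this stays below $\frac{1}{2hw}$ (or below $1$, using the integer-determinant viewpoint). I expect the integer-determinant argument to give the cleanest route, since it sidesteps \cref{lem:dist_to_grid} and directly yields $\lvert\Delta(p,q,r)\rvert \ge 1$; \cref{lem:dist_to_grid} is then the natural tool if one prefers to argue geometrically via distance to the line. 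Either way, the polynomial slack in the definition of $\varepsilon$ is generous enough that no delicate estimate is needed, only a safe over-approximation of each term.
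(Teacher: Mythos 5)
Your proposal is correct, and it takes a genuinely different route from the paper. The paper argues geometrically: it invokes \cref{lem:dist_to_grid} to get that $r$ has distance at least $\frac{1}{2hw}$ from the line $pq$, then performs a case distinction (vertical versus non-vertical $pq$), bounds how much the slope of $p'q'$ can deviate from that of $pq$, and via an auxiliary claim about ``$y$-distances'' shows the perturbed line cannot cross over $r'$. Your argument instead works with the orientation determinant $\Delta(p,q,r)$ directly: since $p,q,r$ are grid points (integer coordinates), $\Delta(p,q,r)$ is a nonzero integer, so $\lvert\Delta(p,q,r)\rvert \ge 1$, and expanding $\Delta(p',q',r')$ multilinearly in the displacements gives an error of at most $2(w+h)\varepsilon + O(\varepsilon^2)$ (writing $\Delta$ as the cross product $(q-p)\times(r-p)$, each linear term pairs a coordinate difference bounded by $w$ or $h$ with a displacement bounded by $O(\varepsilon)$), which is far below $1$ for $\varepsilon \le \frac{1}{16w^3h^3}$. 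This integrality argument is cleaner: it needs no case analysis, does not use \cref{lem:dist_to_grid} at all, and in fact closes with a much weaker bound on $\varepsilon$ (anything below roughly $\frac{1}{4(w+h)}$ suffices), whereas the paper's route keeps the geometric ``distance to line'' picture and reuses a lemma it has already established. Two minor points of bookkeeping in your sketch: the per-coordinate displacement of a point within an $\varepsilon$-square of side length $\varepsilon$ is $\varepsilon/2$, not $\varepsilon\sqrt{2}$ (your figure is an over-approximation, so the inequality still closes, but the stated constant is off); and your alternative lower bound via $\lvert\Delta(p,q,r)\rvert = \mathrm{dist}(r,\ell)\cdot\lvert pq\rvert \ge \frac{1}{2hw}$ is also valid, though, as you say, the integer viewpoint makes it unnecessary.
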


We are now ready to prove \cref{lem:to_general}.

\begin{proof}[Proof of~\cref{lem:to_general}]
Let $\ps$ be a set of $n$ points, and denote them by $p_1, \dots, p_n$.
Further let $\varepsilon = \frac{1}{16h^3w^3}$. 
We construct the new grid $G'$ such that for every vertex $p$ of the point set, the $\varepsilon$-square $B_{\varepsilon}(p)$ centered at $p$ contains $(n^4 + n) \times (n^4 + n)$ grid points.
The resulting grid~$G'$ has now size $h\cdot 16h^3w^3 \cdot (n^4 + n) \times w \cdot 16h^3w^3 \cdot (n^4 + n)$ which is polynomial in $h$ and $w$.

We inductively define the mapping $\phi$.
We maintain that for $i = 1, \dots, n$, the set $\ps_i := \{\phi(p_1), \dots, \phi(p_i)\}$ satisfies the conditions (1), (2), and (3) of the lemma.

We define $\phi(p_1) = p_1$.
The set $\ps_1$ trivially satisfies (1)--(3).
Suppose we have constructed the set $\ps_i$ for some $i \in \{1, \dots, n-1\}$.
We now construct the set $\ps_{i+1}$.
Consider ${i \choose 2}$ lines spanned by any two points in $\ps_i$ and the $i$ vertical lines and $i$ horizontal lines passing through a point in $\ps_i$.
The former ${i \choose 2}$ lines intersect in less than $(i^2)^2 < n^4$ grid points in $B_{\varepsilon}(p_{i+1})$.
Hence, even when we avoid the $i$ vertical and $i$ horizontal lines, we can choose one point in $B_{\varepsilon}(p_{i+1})$ to be $\phi(p_{i+1})$ such that $\ps_{i+1}$ satisfies (1) and (3).
It remains to prove that $\ps_{i+1}$ also satisfies (2).
It is sufficient to consider a triplet $(p_a, p_b, p_{i+1})$ for $1 \leq a < b \leq i$.
By construction, $\phi(p_a) \in B_{\varepsilon}(p_a)$ and $\phi(p_b) \in B_{\varepsilon}(p_b)$.
Hence, by \cref{lem:small_wiggle}, the orientation of $(p_a, p_b, p_{i+1})$ is the same as that of $\big(\phi(p_a), \phi(p_b), \phi(p_{i+1})\big)$.
The lemma then follows.
\end{proof}
\pagebreak
\subsection{\NP-completeness in square grid graphs}
\label{sec:hardness_graphs}

A \emph{rectilinear Steiner tree} is a tree that connects a given set $K$ of points in the plane by only using horizontal and vertical line segments.
The \textsc{Rectilinear Steiner Tree Problem}~\cite{Steinertree} asks whether there exists a tree that connects a given set of vertices using only horizontal and vertical line segments, such that the total edge length does not exceed a specified bound.

The problem is known to be strongly \NP-hard; that is, it remains \NP-hard even when all vertices lie on a grid of polynomial size. 
Moreover, there always exists an optimal rectilinear Steiner tree contained within the Hanan grid~\cite{Hanan}, which is constructed by extending horizontal and vertical lines from each vertex in the input set. 
This result also implies that the grid can be arbitrarily refined if needed, without affecting the edges of an optimal solution.

We reduce from the \textsc{Rectilinear Steiner Tree Problem} to obtain the following.

\begin{theorem}\label{thm:grid_hard}
    Deciding whether there exists a flip sequence of length $\ell$ between two odd matchings in grid graphs is \NP-hard (and consequently in planar and general graphs as well).
\end{theorem}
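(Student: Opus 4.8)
The plan is to reduce from the \textsc{Rectilinear Steiner Tree Problem}, which is strongly \NP-hard and admits optimal solutions on the Hanan grid, so that refining the grid does not change the essential structure of an optimal tree. Given an instance with terminal set $K$ and length bound $L$, I would embed everything into a sufficiently refined square grid graph $G$ (which we may assume by the Hanan grid property), and design two odd matchings $\matin$ and $\mattar$ whose symmetric difference encodes the connectivity requirement among the terminals. The intuition is that a flip sequence must route the \exposed vertex so as to ``visit'' and reconfigure each terminal's local gadget, and the number of flips needed corresponds to the total length of a subtree connecting the terminals. The key conceptual link is the following: switching an alternating cycle costs one extra flip to first bring the \exposed vertex onto the cycle (as established in \cref{sec:prelim-obs}), and moving the \exposed vertex across the grid to reach a distant gadget costs a number of flips proportional to the grid distance travelled. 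Thus \emph{travel cost} becomes \emph{Steiner length}.

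The concrete construction I would carry out in order is as follows. First, at each terminal in $K$ I place a small gadget where $\matin$ and $\mattar$ differ (a short alternating cycle, or an edge present in $\matin$ but not $\mattar$), so that reconfiguring that gadget is mandatory and requires the \exposed vertex to be present locally. Second, away from the terminals, I make $\matin$ and $\mattar$ agree via happy edges tiling the grid, so that the only freedom in a flip sequence is how the single \exposed vertex is shuttled through the grid; since every interior flip merely slides the \exposed vertex along a matching edge, the \exposed vertex performs a walk on $G$ whose length is the number of ``travel'' flips. Third, I argue that a flip sequence of length $\ell$ exists if and only if the \exposed vertex can start at its source, visit every terminal gadget (performing the constant number of mandatory switching flips at each), and return, where the walk must connect all terminals. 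The total flip count then splits as a fixed additive term (the mandatory per-gadget flips, summing over $|K|$) plus the length of the walk. A walk visiting all terminals in a grid projects onto a connected subgraph spanning $K$, i.e., a rectilinear Steiner tree, and a closed walk traverses each tree edge essentially twice; calibrating the bound $\ell = 2L + (\text{fixed per-terminal cost})$ makes the equivalence precise.

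For correctness I would prove both directions. For the forward direction, a Steiner tree of length at most $L$ yields a closed walk of length at most $2L$ (a depth-first traversal of the tree) that reaches every terminal, which I translate into a flip sequence of the claimed length using the path-flipping lemma (\cref{lem:alternating_paths}) to move the \exposed vertex along tree edges and to switch each terminal gadget. For the reverse direction, given a flip sequence of length $\ell$, I extract the trajectory of the \exposed vertex and argue it must form a connected walk touching every terminal, whose underlying edge set is a connected subgraph spanning $K \cup \{\text{source}\}$; contracting and shortcutting yields a rectilinear Steiner tree of length at most $L$. The main obstacle I anticipate is the reverse direction: I must show that no ``clever'' flip sequence can reconfigure a distant gadget without the \exposed vertex physically travelling there, and that the double-counting of tree edges ($2L$ versus $L$) is tight rather than merely an upper bound. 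Controlling this requires arguing that happy edges in the agreeing region cannot be profitably flipped and unflipped to save travel, which I would enforce by making the grid fine enough (via Hanan refinement) that any shortcut would violate the length budget; additionally, I would need to handle the parity and return-to-source constraints of the closed walk carefully so that the factor of two is exactly matched by the flip bound $\ell$.
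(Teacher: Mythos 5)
Your reduction skeleton is the same as the paper's: reduce from the \textsc{Rectilinear Steiner Tree Problem}, put mandatory alternating-cycle work at each terminal, fill the rest with happy edges so that the only freedom is how the single \exposed vertex travels, and charge travel flips against Steiner length with a doubling for the return trips. However, there is a genuine gap exactly where you flag one: your correctness criterion is an \emph{exact} calibration $\ell = 2L + (\text{fixed per-terminal cost})$, and that equivalence fails. The flip cost of the \exposed vertex's walk is not an exact affine function of tree length: turning a corner costs more flips than a straight traversal (in the paper's construction, up to $6$ instead of $4$ per doubly-traversed unit), and at branch points of the tree the alternating-path counting allows small savings (the paper's accounting loses an additive $2$ per consecutive pair of branch vertices). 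So ``length-$\ell$ flip sequence iff length-$L$ tree'' with your exact constant cannot be proven in either direction, and you offer no mechanism to close it. The paper's missing idea is to avoid tightness altogether by creating multiplicative slack: terminal coordinates are first rescaled by $8n$ (\cref{lem:hanan}), so that every candidate RST length in the refined grid is a multiple of $8n$; the budget is then set to $4\cdot 8n\cdot \ell + 8k$ (\cref{prop:grid-reconfiguration-length}), and in the reverse direction one only needs to extract a tree of length at most $8n\ell + \frac{7}{4}k < 8n(\ell+1)$, which rounds down to $\ell$ because $k<n$. All corner overheads, junction savings, and per-gadget costs are absorbed into the slack; no factor-of-two tightness is ever needed. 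Your remark that the grid should be ``fine enough that any shortcut would violate the length budget'' gestures at this, but the rounding argument is the actual load-bearing step and must be carried out.

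A second, smaller gap is in your construction itself. ``Happy edges tiling the grid'' is underspecified: on a bare grid a matching tiling is anisotropic, so the number of flips needed to move the \exposed vertex one unit depends on the direction of travel relative to the tiling, and the odd-order and parity constraints of the tiling interact with where the \exposed vertex can go. The paper sidesteps this by blowing up every grid vertex into an eight-vertex gadget --- four happy edges at non-terminals, an alternating $4$-cycle of $\matin$- and $\mattar$-edges at terminals (\cref{fig:steiner_1}) --- which makes each unit of horizontal or vertical travel cost exactly $4$ flips (two out, two back) and localizes the mandatory work. Finally, your reverse-direction extraction (``the trajectory of the \exposed vertex spans the terminals'') needs an explicit lower-bound argument: the paper proves that every happy edge displaced along the route between two branch points must be restored, yielding at least $4|L_H| + 4|L_V| - 2$ flips per such path, and only then sums over the at most $2k$ branch points. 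Without that counting lemma, ``no clever flip sequence can save travel'' remains an assertion rather than a proof.
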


\subparagraph*{High-level overview.}
Let $K$ be a set of $k$ points in an $n\times n$ grid.
We assume, without loss of generality, that $n>k$. 
We aim to construct two odd matchings on the grid graph such that any short flip sequence between them corresponds to a rectilinear Steiner tree of short total edge length.
Our reduction works in two phases: 
(1) By leveraging the structure of the Hanan grid, we restrict the input to point sets whose coordinates are all congruent to~$1\bmod 8n$, and (2) by carefully constructing two odd matchings on a sufficiently large grid graph, we ensure that short flip sequences between them correspond to short trees.

\subparagraph*{Construction.}
Let $v_{low}$ be a point in $K$ with the smallest $y$-coordinate (break ties arbitrarily). 
Without loss of generality, assume that its $y$-coordinate is $0$.
We embed the point set $K$ into an $8n\cdot(n + 4n^2)\times 8n\cdot(n + 4n^2)$ grid graph $G'$; in particular, each point $v_i=(x_i,y_i)\in K$ is mapped to the vertex $(8n\cdot x_i+1,\,8n\cdot y_i+1)$. 
We obtain a new set of points denoted by~$K'$.

\begin{restatable}[$\star$]{lemma}{HANAN}
\label{lem:hanan}
$K$ can be connected by an RST of length $\ell$ if and only if $K'$ can be connected by an RST of length $8n\ell$.
\end{restatable}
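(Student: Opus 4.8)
\textbf{Proof proposal for \cref{lem:hanan}.}

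The plan is to establish the claimed length correspondence by showing that the scaling map embedding $K$ into $K'$ induces a length-multiplying bijection between rectilinear Steiner trees on the two point sets, up to the factor $8n$. The key structural fact I would invoke is the Hanan grid property: there always exists an optimal RST whose edges lie on the Hanan grid of the input, so it suffices to reason about trees confined to this grid and about how refining the grid interacts with the scaling.

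First I would argue the easy direction. Given an RST connecting $K$ of length $\ell$, I scale the entire tree by the factor $8n$, mapping each point $(x_i,y_i)$ to $(8n\cdot x_i + 1,\, 8n\cdot y_i + 1)$; since rectilinear length is positively homogeneous under scaling (each horizontal or vertical segment has its length multiplied by $8n$) and the additive shift by $(1,1)$ is a translation that preserves lengths, the image is an RST connecting $K'$ of length exactly $8n\ell$. Here I would note that all Steiner points of the original tree on the Hanan grid of $K$ map to lattice points of $G'$, because the Hanan grid coordinates of $K$ are integers and $8n\cdot(\text{integer}) + 1$ is again an integer lattice coordinate; this guarantees that the scaled tree genuinely lives in the grid graph $G'$.

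For the converse, I would take an optimal RST connecting $K'$ and use the Hanan grid result to assume its edges lie on the Hanan grid of $K'$. The crucial observation is that the Hanan grid of $K'$ consists precisely of the horizontal and vertical lines at coordinates $8n\cdot x_i + 1$ and $8n\cdot y_i + 1$, which is the image under the scaling-plus-shift map of the Hanan grid of $K$. Consequently any Hanan-grid RST for $K'$ is the exact image of a rectilinear tree structure on the Hanan grid of $K$, and dividing all coordinates by $8n$ (after subtracting the $(1,1)$ shift) recovers an RST connecting $K$ whose length is the $K'$-length divided by $8n$. Thus an RST of length $8n\ell$ on $K'$ yields one of length $\ell$ on $K$, completing the equivalence.

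The main obstacle I anticipate is making the converse direction fully rigorous: a priori an optimal RST for $K'$ could route segments through lattice points of $G'$ that do \emph{not} lie on the coarse Hanan grid inherited from $K$, since $G'$ is far finer than that grid. The Hanan grid theorem resolves this by letting us restrict to an optimal tree supported on the Hanan grid of $K'$, but I would need to confirm that this grid is exactly the scaled image of the Hanan grid of $K$ and that the refinement remark quoted earlier (\enquote{the grid can be arbitrarily refined without affecting the edges of an optimal solution}) legitimately lets us pass between the fine grid $G'$ and the coarse scaled grid without changing the optimal length. Once this grid-identification is pinned down, the length bookkeeping is routine, so I expect the write-up to be short.
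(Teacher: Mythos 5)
Your proposal is correct and follows essentially the same route as the paper's proof: the forward direction by scaling (the paper phrases it as subdividing each unit segment into $8n$ pieces), and the converse by invoking the Hanan grid theorem for $K'$ and observing that its Hanan grid is exactly the scaled-and-shifted Hanan grid of $K$, so segment lengths are multiples of $8n$ and the tree contracts back to an RST for $K$ of length at most $\ell$. The worry you raise about the fine grid $G'$ is resolved precisely as you suggest and as the paper does---the Hanan grid theorem alone lets one replace any RST for $K'$ by one of no greater length supported on the coarse (scaled) grid, so no separate refinement argument is needed.
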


Next, we replace every vertex in $G'$ as follows: if a vertex belongs to $K$, we replace it by eight vertices arranged in an alternating cycle of four edges from $\matin$ and $\mattar$; otherwise, we replace it with eight vertices and four happy edges; see~\cref{fig:steiner_1} for an illustration. 

\begin{figure}[htb]
	\centering
	\includegraphics{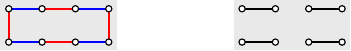}
	\caption{Vertex gadgets for vertices in $K$ (left), and not in~$K$ (right).}
	\label{fig:steiner_1}
\end{figure}
Then we add an additional row and an additional column to the bottom and to the right. We place the \exposed vertex of both $\matin$ and $\mattar$ below the third column of the gadget that corresponds to $v_{low}$. For the remaining points there is a unique way to connect them to an odd matching. The construction is illustrated in \cref{fig:steiner_2}. We remark that in order to keep the instance smaller, we do not use the refined version of the grid (i.e., $G'$) in the illustration, we also do not add the additional $4n^2$ vertices to the right and top of the relevant area.

\begin{figure}[ht]
	\centering
	\includegraphics{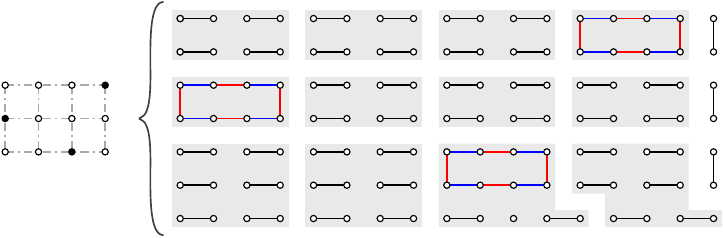}
	\caption{Reduction from rectilinear Steiner tree, massively scaled down for illustration.}
	\label{fig:steiner_2}
\end{figure}

\begin{restatable}[$\star$]{proposition}{hardnessFlipSequenceGrid}
\label{prop:grid-reconfiguration-length}
For any instance of RSTP with $k$ vertices on an $n\times n$ grid, there exists an RST of total edge length at most $\ell$ if and only if there exists a flip sequence for two corresponding odd matchings $\matin, \mattar$ transforming $\matin$ into $\mattar$ of length at most $4\cdot 8n\cdot \ell + 8k$.
\end{restatable}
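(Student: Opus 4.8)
The plan is to prove \Cref{prop:grid-reconfiguration-length} by establishing a tight correspondence between rectilinear Steiner trees on $K'$ and flip sequences between $\matin$ and $\mattar$ in the gadget construction. The key insight is that the \exposed vertex must ``travel'' from its starting position near $v_{low}$ to visit every vertex gadget corresponding to a point of $K$ (in order to switch the four-edge alternating cycles there), and the path it traces through the grid graph corresponds to a connected subgraph spanning $K'$. First I would analyze the two types of vertex gadgets: a gadget for a point in $K$ contains a four-edge alternating cycle (with edges alternating between $\matin$ and $\mattar$), so by the discussion preceding \Cref{thm:linear}, switching it requires the \exposed vertex to be present on the cycle, and this costs a bounded number of flips once the \exposed vertex arrives. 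A gadget not in $K$ contains only happy edges, which need never be flipped in an optimal sequence; they merely serve as a ``highway'' along which the \exposed vertex can be shuttled without making progress on the symmetric difference.

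The core of the argument is a two-directional length accounting. For the forward direction, given an RST $T$ of length at most $\ell$ connecting $K$, I would translate it via \Cref{lem:hanan} into an RST of length $8n\ell$ on $K'$, then route the \exposed vertex along the edges of this tree in a depth-first traversal. Each unit of length in the (refined) grid corresponds to shuttling the \exposed vertex one step, and because the grid is subdivided by the factor $8n$ and each original edge is realized by the gadget highways, traversing the tree and returning costs roughly $4 \cdot 8n \cdot \ell$ flips (the factor $4$ absorbing the per-edge gadget cost and the back-and-forth of the DFS), while switching each of the $k$ cycle gadgets contributes the additive $8k$ term. For the reverse direction, given a flip sequence of length at most $4 \cdot 8n \cdot \ell + 8k$, I would extract from it the set of grid edges traversed by the \exposed vertex; since every $K$-gadget must be visited to resolve its alternating cycle, these edges contain a connected subgraph spanning all of $K'$, hence a spanning tree of total length bounded in terms of the flip count, which back-translates through \Cref{lem:hanan} to an RST of length at most $\ell$ on $K$.

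The main obstacle I anticipate is the reverse direction's length extraction: I must argue that no flip sequence can ``cheat'' by resolving the cycle gadgets more cheaply than the Steiner-tree bound allows, and in particular that the \exposed vertex cannot take shortcuts that fail to correspond to genuine grid paths. This is exactly why the construction refines the grid by the factor $8n$ and places the vertex gadgets with eight vertices each: the refinement guarantees that the cost of any deviation or detour is forced to scale with the true rectilinear distance, so that a flip sequence beating the Steiner bound would yield a spanning connected structure on $K'$ shorter than the optimal RST, a contradiction. I would also need to verify that the additive $8k$ term exactly captures the unavoidable per-gadget switching cost independently of the routing, and that happy-edge gadgets genuinely never need to be flipped in an optimal sequence (so they contribute nothing beyond serving as highways); this relies on the same logic as the happy-edge analysis in the convex case but must be re-established for the grid gadgets. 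Once these accounting bounds are pinned down in both directions, the equivalence follows, and \Cref{thm:grid_hard} is immediate from the strong \NP-hardness of the \textsc{Rectilinear Steiner Tree Problem}.
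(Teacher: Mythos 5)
Your forward direction matches the paper's: DFS traversal of the Hanan-grid tree, roughly $4$ flips per unit of refined length (with a small surcharge for the at most $k$ corners), plus an additive $O(k)$ for switching the alternating cycles in the $K$-gadgets; this part is fine. The genuine gap is in the reverse direction, where you correctly locate the difficulty but then assert rather than prove the key quantitative step. Saying that the traversed gadgets yield ``a spanning tree of total length bounded in terms of the flip count'' and that a cheaper flip sequence ``would yield a spanning connected structure on $K'$ shorter than the optimal RST, a contradiction'' skips the two ingredients that make this accounting work. First, the paper does not bound the extracted tree $T$ directly: it normalizes it into a tree $T'$ in which consecutive branch/terminal vertices (the set $S$ of vertices in $K'$ or of degree at least three, $|S| \leq 2k$) are joined by one horizontal and one vertical segment, and then proves a per-segment lower bound of $4|L_H| + 4|L_V| - 2$ flips. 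That lower bound comes from a concrete structural argument you never make: the alternating path realizing such a segment must use ``relevant'' happy edges outside the endpoint gadgets, consecutive relevant edges differ by $2$ in $x$ or $1$ in $y$ (so there are at least $2|L_H| + 2|L_V| - 1$ of them), and each happy edge must be flipped out and later restored, hence traversed twice. Without this, there is no justification that each unit of rectilinear length costs at least $4$ flips, which is the heart of the ``no cheating'' claim.

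Second, even with that bound, the extraction only yields $|T'| \leq 8n\ell + \tfrac{7}{4}k$ (the $-2$ slack per segment, summed over at most $2k$ segments, plus the $O(k)$ gadget-switching flips). This does \emph{not} immediately give an RST of length $8n\ell$ on $K'$; one needs the rounding argument that the optimum of the refined instance lies on the Hanan grid and is therefore a multiple of $8n$, while the slack $\tfrac{7}{4}k < 8n$ (using the assumption $n > k$), so the additive error is absorbed. This divisibility step is the actual reason the construction refines coordinates to be congruent to $1 \bmod 8n$ --- not, as you suggest, that the refinement ``forces the cost of any deviation to scale with true rectilinear distance.'' Your proposal would at best deliver an RST of length at most $\ell$ plus a fractional additive term, which does not close the reduction; to repair it you would need to supply the happy-edge double-traversal counting, the normalization to L-shaped segments bounding the number of slack terms by $O(k)$, and the mod-$8n$ integrality argument.
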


With this, \Cref{thm:grid_hard} can be easily obtained from~\cref{prop:grid-reconfiguration-length}.
    \section{Conclusions and future work}
\label{sec:conclusions}

In this work, we investigated the reconfiguration of odd matchings and provided a complete characterization of when the associated flip graph is connected.
This complements a recent result that showed connectedness of the flip graph in the geometric setting~\cite{oddmatchings}.
We further showed that, in the combinatorial setting, each connected component of the flip graph has diameter linear in the cardinality of the matchings.
We established that the problem is \NP-complete, both in general graphs (in particular, even in rectangular grid graphs) and in geometric settings. 
For the geometric setting, we developed a framework to embed partial order types into general position. We believe that our framework can be applied in several similar proofs.
Lastly, for point sets in convex position, we provided a fixed-parameter tractable algorithm and improved the upper bound on the diameter of the flip graph.

\medskip
The following questions are related to our research. 
\smallskip
\begin{itemize}
\item Considering that planarity constraints make things harder: Can the linear bound for the diameter of the flip graph in the combinatorial setting be generalized to the geometric setting? Or can it be used to derive bounds on the diameter of the flip graph in the geometric setting?
\item What is the complexity of finding shortest flip sequences between odd matchings in convex point sets? Closing the gap between the upper bound and the lower bound of the diameter can be seen as a first promising direction.
\item Now that we established hardness of the flip distance problem: What can be said about fixed-parameter tractability and approximability of flip distances in both the combinatorial and the general geometric setting?
\end{itemize}

    \bibliography{references}    

\begin{thebibliography}{10}

\bibitem{oddmatchings}
Oswin Aichholzer, Anna Brötzner, Daniel Perz, and Patrick Schnider.
\newblock Flips in odd matchings.
\newblock {\em Computational Geometry}, 129:102184, 2025.
\newblock \href {https://doi.org/10.1016/j.comgeo.2025.102184}
  {\path{doi:10.1016/j.comgeo.2025.102184}}.

\bibitem{assadi2024vizingstheoremnearlineartime}
Sepehr Assadi, Soheil Behnezhad, Sayan Bhattacharya, Mart{\'{\i}}n Costa, Shay
  Solomon, and Tianyi Zhang.
\newblock Vizing's theorem in near-linear time.
\newblock In {\em Symposium on Theory of Computing (STOC)}, pages 24--35, 2025.
\newblock \href {https://doi.org/10.1145/3717823.3718265}
  {\path{doi:10.1145/3717823.3718265}}.

\bibitem{avis1996reverse}
David Avis and Komei Fukuda.
\newblock Reverse search for enumeration.
\newblock {\em Discrete Applied Mathematics}, 65(1-3):21--46, 1996.
\newblock \href {https://doi.org/10.1016/0166-218X(95)00026-N}
  {\path{doi:10.1016/0166-218X(95)00026-N}}.

\bibitem{MR94811}
Claude Berge.
\newblock Two theorems in graph theory.
\newblock {\em Proceedings of the National Academy of Sciences of the United
  States of America}, 43:842--844, 1957.
\newblock \href {https://doi.org/10.1073/pnas.43.9.842}
  {\path{doi:10.1073/pnas.43.9.842}}.

\bibitem{binucci2025flippingmatchingshard}
Carla Binucci, Fabrizio Montecchiani, Daniel Perz, and Alessandra Tappini.
\newblock Flipping matchings is hard, 2025.
\newblock \href {https://arxiv.org/abs/2503.02842} {\path{arXiv:2503.02842}}.

\bibitem{bonamy2019perfectmatchingreconfigurationproblem}
Marthe Bonamy, Nicolas Bousquet, Marc Heinrich, Takehiro Ito, Yusuke Kobayashi,
  Arnaud Mary, Moritz M{\"{u}}hlenthaler, and Kunihiro Wasa.
\newblock The perfect matching reconfiguration problem.
\newblock In {\em International Symposium on Mathematical Foundations of
  Computer Science (MFCS)}, pages 80:1--80:14, 2019.
\newblock \href {https://doi.org/10.4230/LIPICS.MFCS.2019.80}
  {\path{doi:10.4230/LIPICS.MFCS.2019.80}}.

\bibitem{BOSE200960}
Prosenjit Bose and Ferran Hurtado.
\newblock Flips in planar graphs.
\newblock {\em Computational Geometry}, 42(1):60--80, 2009.
\newblock \href {https://doi.org/10.1016/j.comgeo.2008.04.001}
  {\path{doi:10.1016/j.comgeo.2008.04.001}}.

\bibitem{planarmonotone3sat}
Mark de~Berg and Amirali Khosravi.
\newblock Optimal binary space partitions for segments in the plane.
\newblock {\em International Journal of Computational Geometry \&
  Applications}, 22(3):187--206, 2012.
\newblock \href {https://doi.org/10.1142/S0218195912500045}
  {\path{doi:10.1142/S0218195912500045}}.

\bibitem{Steinertree}
M.~R. Garey and David~S. Johnson.
\newblock The rectilinear {S}teiner tree problem is {NP}-complete.
\newblock {\em SIAM Journal on Applied Mathematics}, 32(4):826--834, 1977.
\newblock \href {https://doi.org/10.1137/0132071} {\path{doi:10.1137/0132071}}.

\bibitem{hamersma2020gourds}
Joep Hamersma, Marc~J. van Kreveld, Yushi Uno, and Tom~C. van~der Zanden.
\newblock Gourds: {A} sliding-block puzzle with turning.
\newblock In {\em International Symposium on Algorithms and Computation
  (ISAAC)}, pages 33:1--33:16, 2020.
\newblock \href {https://doi.org/10.4230/LIPICS.ISAAC.2020.33}
  {\path{doi:10.4230/LIPICS.ISAAC.2020.33}}.

\bibitem{Hanan}
Maurice Hanan.
\newblock On {S}teiner's problem with rectilinear distance.
\newblock {\em SIAM Journal on Applied Mathematics}, 14(2):255--265, 1966.
\newblock \href {https://doi.org/10.1137/0114025} {\path{doi:10.1137/0114025}}.

\bibitem{MR30735}
Percy~J. Heawood.
\newblock Map-colour theorem.
\newblock {\em Proceedings of the London Mathematical Society}, 51:161--175,
  1949.
\newblock \href {https://doi.org/10.1112/plms/s2-51.3.161}
  {\path{doi:10.1112/plms/s2-51.3.161}}.

\bibitem{hhmmn-gtgpcp-99}
M.~Carmen Hernando, Ferran Hurtado, Alberto M{\'{a}}rquez, Merc{\`{e}} Mora,
  and Marc Noy.
\newblock Geometric tree graphs of points in convex position.
\newblock {\em Discrete Applied Mathematics}, 93(1):51--66, 1999.
\newblock \href {https://doi.org/10.1016/S0166-218X(99)00006-2}
  {\path{doi:10.1016/S0166-218X(99)00006-2}}.

\bibitem{articlematchings}
M.~Carmen Hernando, Ferran Hurtado, and Marc Noy.
\newblock Graphs of non-crossing perfect matchings.
\newblock {\em Graphs and Combinatorics}, 18:517--532, 2002.
\newblock \href {https://doi.org/10.1007/s003730200038}
  {\path{doi:10.1007/s003730200038}}.

\bibitem{articlematchings_2}
Michael~E. Houle, Ferran Hurtado, Marc Noy, and Eduardo Rivera-Campo.
\newblock Graphs of triangulations and perfect matchings.
\newblock {\em Graphs and Combinatorics}, 21:325--331, 2005.
\newblock \href {https://doi.org/10.1007/s00373-005-0615-2}
  {\path{doi:10.1007/s00373-005-0615-2}}.

\bibitem{HURTADO1999179}
Ferran Hurtado and Marc Noy.
\newblock Graph of triangulations of a convex polygon and tree of
  triangulations.
\newblock {\em Computational Geometry}, 13(3):179--188, 1999.
\newblock \href {https://doi.org/10.1016/S0925-7721(99)00016-4}
  {\path{doi:10.1016/S0925-7721(99)00016-4}}.

\bibitem{ItoDHPSUU11}
Takehiro Ito, Erik~D. Demaine, Nicholas J.~A. Harvey, Christos~H.
  Papadimitriou, Martha Sideri, Ryuhei Uehara, and Yushi Uno.
\newblock On the complexity of reconfiguration problems.
\newblock {\em Theoretical Computer Science}, 412(12-14):1054--1065, 2011.
\newblock \href {https://doi.org/10.1016/J.TCS.2010.12.005}
  {\path{doi:10.1016/J.TCS.2010.12.005}}.

\bibitem{kakimura2024reconfiguration}
Naonori Kakimura and Yuta Mishima.
\newblock Reconfiguration of labeled matchings in triangular grid graphs.
\newblock In {\em International Symposium on Algorithms and Computation
  (ISAAC)}, pages 43:1--43:16, 2024.
\newblock \href {https://doi.org/10.4230/LIPICS.ISAAC.2024.43}
  {\path{doi:10.4230/LIPICS.ISAAC.2024.43}}.

\bibitem{knuth1992problemcompatiblerepresentatives}
Donald~E. Knuth and Arvind Raghunathan.
\newblock The problem of compatible representatives, 1992.
\newblock \href {https://doi.org/10.48550/arXiv.cs/9301116}
  {\path{doi:10.48550/arXiv.cs/9301116}}.

\bibitem{lin1973effective}
Shen Lin and Brian~W. Kernighan.
\newblock An effective heuristic algorithm for the traveling-salesman problem.
\newblock {\em Operations Research}, 21(2):498--516, 1973.
\newblock \href {https://doi.org/10.1287/OPRE.21.2.498}
  {\path{doi:10.1287/OPRE.21.2.498}}.

\bibitem{matchingdeterministic}
Silvio Micali and Vijay~V. Vazirani.
\newblock An {$O(\sqrt{|V|}\cdot |E|)$} algorithm for finding maximum matching
  in general graphs.
\newblock In {\em Symposium on Foundations of Computer Science (FOCS)}, pages
  17--27, 1980.
\newblock \href {https://doi.org/10.1109/SFCS.1980.12}
  {\path{doi:10.1109/SFCS.1980.12}}.

\bibitem{mütze2024combinatorialgraycodesanupdated}
Torsten Mütze.
\newblock Combinatorial gray codes -- an updated survey.
\newblock {\em Electronic Journal of Combinatorics}, 2024.
\newblock Dynamic survey.
\newblock \href {https://doi.org/10.37236/11023} {\path{doi:10.37236/11023}}.

\bibitem{Nishimura18}
Naomi Nishimura.
\newblock Introduction to reconfiguration.
\newblock {\em Algorithms}, 11(4):52, 2018.
\newblock \href {https://doi.org/10.3390/A11040052}
  {\path{doi:10.3390/A11040052}}.

\bibitem{rivera2001hamilton}
Eduardo Rivera-Campo and Virginia Urrutia-Galicia.
\newblock Hamilton cycles in the path graph of a set of points in convex
  position.
\newblock {\em Computational Geometry}, 18(2):65--72, 2001.
\newblock \href {https://doi.org/10.1016/S0925-7721(00)00026-2}
  {\path{doi:10.1016/S0925-7721(00)00026-2}}.

\bibitem{MR928904}
Daniel~D. Sleator, Robert~E. Tarjan, and William~P. Thurston.
\newblock Rotation distance, triangulations, and hyperbolic geometry.
\newblock {\em Journal of the American Mathematical Society}, 1(3):647--681,
  1988.
\newblock \href {https://doi.org/10.2307/1990951} {\path{doi:10.2307/1990951}}.

\bibitem{Heuvel13}
Jan van~den Heuvel.
\newblock The complexity of change.
\newblock In {\em Surveys in Combinatorics 2013}, volume 409 of {\em London
  Mathematical Society Lecture Note Series}, pages 127--160. 2013.
\newblock \href {https://doi.org/10.1017/CBO9781139506748.005}
  {\path{doi:10.1017/CBO9781139506748.005}}.

\bibitem{vizing1965critical}
Vadim~G. Vizing.
\newblock Critical graphs with given chromatic class (in russian).
\newblock {\em Metody Discret. Analiz.}, 5:9--17, 1965.

\end{thebibliography}
    
    \appendix
    \section{Omitted details for~\cref{sec:flipgraph}}
\label{app:flipgraph}

\rectangularGridReconfigurable*

\begin{proof}
    If either dimension is equal to $1$, the grid graph is isomorphic to a path with an odd number of vertices, which can be easily reconfigured.
    Now assume both dimensions are at least $3$. 
    It is straightforward to verify that in a $3\times 3$ grid graph, removing any one of the four corner vertices or the center vertex results in a graph that still contains a Hamiltonian cycle.
    Since the $3\times 3$ grid is an induced subgraph of any larger grid with dimensions $w,h \geq 3$, we can shift this subgraph around the grid while preserving vertex parity.
    Doing so partitions the grid into smaller rectangular components, each with an even number of vertices, guaranteeing the existence of a Hamiltonian cycle in each.
    These local cycles can then be merged into a single global Hamiltonian cycle, demonstrating that for any vertex $v$ of the correct parity, that is, a vertex from the partition class containing exactly one more vertex than the other, the graph $G-v$ admits a Hamiltonian cycle.
\end{proof}

\theoremCharacterizationAlgorithm*

\begin{proof}
    We use as a subroutine an algorithm that returns a maximum cardinality matching algorithm of a given graph in time $\BigO(n^{2.5})$~\cite{matchingdeterministic}.
    In particular, this subroutine can be used to check if the graph has an odd matching or a perfect matching.
    
    We start by checking whether $G$ admits an odd matching, i.e., whether the flip graph is empty or not.
    Then for each edge in $G$, we check the conditions of~\cref{thm:reconfigurable}, as follows.
    Firstly, note that $e = uv$ lies in every odd matching of $G$ if and only if the graph $G - e$ contains no odd matching. Similarly, $e$ does not belong to any odd matching of $G$ if and only if $G - u - v$ has no odd matching.
    Hence, we can check \cref{thm:reconfigurable} by checking if at least one of $G-e$, $G-u-v$, $G-u$, or $G-v$ contains a matching of maximum size.
    Since there are $\BigO(n^2)$ edges, the overall runtime is $\BigO(n^{4.5})$.
\end{proof}
    \section{Omitted details for~\cref{sec:easy_setting}}
\label{app:easy_setting}

\lemObsAuxGraph*
\begin{proof}
For $i \in \{0, \dots, t\}$, $v_M$ is the \exposed vertex of $M_i$ as this holds for $M_0 = M$ and $M_{i+1}$ arises from $M_i$ by flipping the edges on an alternating cycle. 
Note that $M_{i+1} \triangle M'$ contains at least one cycle less than $M_i \triangle M'$, namely $C_i$.
Hence $M_t \triangle M'$ is empty, so $M_t = M'$. The second claim follows by construction. 
For the last, let $\oH_i$ be the graph $\oH$ at the point when we define $M_i$. Suppose that when processing $M_i$, we add an outgoing arc to $v_{\text{out}}$ for some $v \in V(G)$. By construction, there is a directed path from $v_{\text{out}}$ to $v_M$ in $\oH_{i+1}$, so we will never add another outgoing arc to $v_{\text{out}}$.
\end{proof}

\directedPathInAuxGraph*
\begin{proof}
The first assertion is by construction.
Now let $(w_{\text{out}}, w'_{\text{in}})$ be an arc that is added to $\oH$ while processing $M_i$. Suppose for the sake of contradiction that $ww' \in M'$.
By construction, $ww' \notin M_i$.
This implies that $ww' \in M_i \triangle M'$. 
As $M_i \triangle M'$ is a disjoint union of cycles,
$w$ and $w'$ lie on some cycle in $M_i \triangle M'$, which contradicts the choice of $u_{C_i}$. 
\end{proof}

\thmCombinatorialDiameter*
\begin{proof}
The symmetric difference $M \triangle M'$ is a union of cycles and a path $P$, whose endpoints are the \exposed vertices $v_M$ and $v_{M'}$. 
Flipping the edges along $P$ as before moves $v_M$ to $v_{M'}$. This requires $\BigO(n)$ flips. 
Hence, we can assume $v_M = v_{M'}$ and apply the above procedure.

\subparagraph*{Correctness.} 
We show that the operations performed in the algorithm are feasible.

Suppose that Case {\sffamily{\textbf{(2)}}} applies, i.e., we are visiting $v_{\text{in}}$, and there is an unvisited vertex $v_{i,\text{out}}$, with $v'_{\text{in}}$ its second neighbor. 
If $v_i$ belongs to a cycle $C$ in $M\triangle M'$ that was not yet switched, it is possible to switch $C$ since the current exposed vertex $v$ is adjacent to $v_{i}$.
Suppose that $v'v_i$ is not present in the current matching of $G$. 
Then $v',v_i$ are contained in the same cycle of $M\triangle M'$, which is then switched, so that $v'v_i$ is present afterwards.

For the correctness of the backtracking in {\sffamily{\textbf{(1)}}}, we check that the following property is maintained: when visiting $v_{\text{in}}$, the directed path from $v_{\text{in}}$ to $v_M$ corresponds to an alternating path in the current matching of $G$.
At the start of the algorithm the condition trivially holds since $v=v_M$.
Now let $v_{\text{in}}$ be the currently visited vertex, and assume that the above condition holds.
It is easily seen that switching a cycle does not affect the alternating path. Furthermore, performing an edge switch, either forwards or by backtracking, increases or decreases the length of this path by $2$, respectively, but maintains the condition. Thus it is maintained throughout, and thus the backtracking step is feasible. 

Every vertex in $\mathcal{C}$ is visited at least once, as the algorithm essentially performs depth-first search. 
Since every cycle in $M\triangle M'$ has at least one vertex in $\mathcal{C}$, it is eventually switched.
Furthermore, if an edge in $M'$ is flipped due to visiting a new vertex, it is restored when backtracking.
Thus, at the end of the algorithm the resulting odd matching on $G$ is~$M'$.

\subparagraph*{Number of flips.} The total number of edges contained in cycles in $M \triangle M'$ is linear in $n$. 
Switching a cycle of length $2k$ requires $k+1$ flips. Since the number of cycles is linear in $n$, the total number of flips required for all cycle switchings is $\BigO(n)$.
Finally, $H$ has $\BigO(n)$ vertices, and since it is a forest (\cref{cor:forest}), it also has $\BigO(n)$ edges.
In the DFS tour, each edge of $\mathcal{C}$ is traversed exactly twice. 
Hence, the overall number of flips performed is $\BigO(n)$.
\end{proof}

\Goodhap*
    \begin{proof}
		It is sufficient to show the statement for good happy edges on the convex hull. For other good happy edges, iteratively consider new instances in which all good happy edges from the convex hull and the vertices incident to them have been removed and repeat the proof. By our recursive definition it is easy to see that every good happy edge will be on the convex hull eventually.
		
		For proving the case that $e$ is a convex hull edge, we introduce a normalization technique which describes a more refined version of the normalization that was used in \cite{articlematchings} for a similar proof involving crossing-free perfect matchings.
		
		For a plane odd matching $\bar{M}$, we define the \emph{normalization} $N(\bar{M})$ with respect to the edge $e=uv$ as follows:
		
		\begin{itemize}
			\item If $e \in \bar{M}$, then $N(\bar{M}) = \bar{M}$
			\item If $e$ is incident to the vertex that is not matched in $\bar{M}$ (without loss of generality $v$), let~$e'$ be the edge that is incident to $u$ in $\bar{M}$. Then $N(\bar{M}) = (M\setminus \{e'\}) \cup e$.
			\item If $u$ and $v$ are incident to edges $e'$ and $e''$ in $
            \bar{M}$ respectively let $e^\ast$ be the edge that connects the two endpoints of $e'$ and $e''$ that are different from $u$ and $v$. Then ${N(\bar{M}) = (\bar{M}\setminus\{e',e''\}) \cup \{e,e^\ast\}}$.
		\end{itemize}
		
		Because $e$ is an edge of the convex hull, $N(\bar{M})$ will always be a plane odd matching. 
        Further, if two matchings $\bar{M}$ and $\bar{M}'$ only differ by a single flip $N(\bar{M})$ and $N(\bar{M}')$ will either coincide or only differ by one flip. 
        This can be seen in the commutative diagrams in~\cref{fig:mat}.

        \begin{figure}[htb]
            \begin{minipage}[b]{0.5\textwidth}
                \includegraphics[page=1]{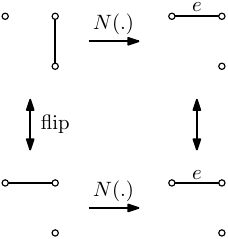}
                \subcaption{}
                \label{fig:mat-1}
            \end{minipage}
            \begin{minipage}[b]{0.5\textwidth}
                \includegraphics[page=2]{figures/normalize__new.pdf}
                \subcaption{}
                \label{fig:mat-2}
            \end{minipage}
            \caption{Commutative diagrams that illustrate that if two matchings are adjacent in the flip graph, so are their normalizations (a) $e$ is added or removed (b) an edge incident to a vertex of $e$ is added or removed.
            }
            \label{fig:mat}
        \end{figure}
		
		For a given flip sequence $\matin = M_0$, $M_1,\dots,M_{k-1}$, $M_{k} = \mattar$ for which $e$ is removed at some point (without loss of generality from $M_0$ to $M_1$) and later added back (without loss of generality from $M_{k-1}$ to $M_k$) observe that $N(M_0) = M_0$, $N(M_1) = N(M_0)$, $N(M_k) = M_k$, and $N(M_{k-1}) = N(M_k)$. 
        This yields that the flip sequence $N(M_0$), $N(M_1),\dots,N(M_{k-1})$, $N(M_{k})$ is again a flip sequence from $\matin$ to $\mattar$ that is a least two flips shorter than the original one.
	\end{proof}

\GOODHAPPY*

\begin{proof}
        We start by assigning numbers $0,\ldots,n-1$ to the vertices of the convex point set in clockwise order, beginning at the \exposed vertex of $\matin$. For both given matchings, we create an array of size $n$, say $I[0,\ldots,n-1]$ for the initial matching, and $T[0,\ldots,n-1]$ for the target matching. For each array an entry at index $i$ contains the index of the vertex that is matched to the vertex with index $i$ in the matching or an indication that the vertex is \exposed. Further, for every $i$ we save the two indices of the vertices that lie (index wise) next to $i$ on the current subset of the convex hull. At the start all neighboring vertices are just $i-1$ and $i+1$ taken modulo $n$. The arrays $I[]$ and $T[]$ can be initialized in linear time by simply traversing the matchings and storing the indices.
    	Our recursive definition of good happy edges allows us to remove good happy edges using a stack $\Sigma$. We add the vertices from $0$ to $n-1$ in that order to $\Sigma$. Every time when the top two vertices form a good happy edge, we remove both of them from $\Sigma$. The top two vertices form a good happy edge if they are connected to one another in both matchings. This can be checked by accessing the respective entries for the vertices in $I$ and $T$ and checking the matched vertex. This step can be executed in constant time for every pair of vertices. Further, we update the entries in the arrays that correspond to neighboring vertices such that now the vertices before and after the good happy edge lie next to one another. Since every vertex is inserted in $\Sigma$ and removed from $\Sigma$ at most once, all good happy edges can be removed in linear time. This leaves us with a set $S'$ without good happy edges.
\end{proof}

\UP*
\begin{proof}
    	First, omit all good happy edges and their incident vertices to obtain a reduced set of vertices $V'$. For all remaining connected components in $E = \matin \cup \mattar$ of Types B-D we construct a plane graph $G$ where every vertex describes a component and two vertices are connected by an edge if their corresponding components have vertices that appear consecutively on the convex hull of $V'$. Let $T$ be a spanning tree of $G$. We root $T$ at the component of type D and traverse $T$ starting from there. All the leaves (except possibly the root) correspond to components of Type C, because if they were happy edges, they would have to be good happy edges. See~\cref{fig:tree} for an illustration. By flipping the components in the order that is determined by $T$, we make sure to reconfigure every component in the amount of flips given in the above case distinction without additional overhead.

        \begin{figure}[ht]
			\centering
			\includegraphics{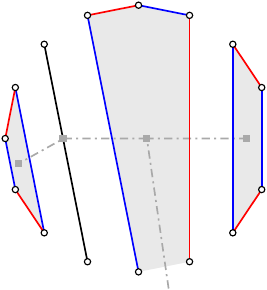}
			\caption{Dual graph of the union of two matchings.}
			\label{fig:tree}
		\end{figure}
		
		We modify each component $H$ according to the following case distinction, assuming that the \exposed vertex $v$ has already been placed such that some vertex of $H$ adjacent to it on the convex hull:
        
        \begin{enumerate}[(1)]
			\item If $H$ is a leaf vertex of $T$, then it corresponds to an even cycle with $k$ edges from $\matin$. We~place the \exposed vertex and switch $H$ in $k+1$ flips. The \exposed vertex is again $v$.
			\item If $H$ is an internal vertex of $T$ that corresponds to a bad happy edge $h$, we add an edge~$f$ from $v$ to the vertex of $h$ that is next to $v$ on the convex hull $v$. In the process, we create a new \exposed vertex $v'$ at one the other end of $H$. We use $v'$ to add edges that place the \exposed vertex in connected components that are children of $H$ in $T$. After fixing performing all relevant flips in the child components, we add $h$ back making $v$ the \exposed vertex again. We performed two flips in $H$.
			\item $H$ is an internal vertex of $T$ that corresponds to a cycle we switch the cycle almost the same way as in Case {\sffamily{\textbf{(1)}}}, with one difference. Whenever we create a new \exposed vertex~$v'$ that lies next to a component that corresponds to a subtree of $H$ in $T$ on the convex hull, we use $v'$ to place the \exposed vertex on connected components that correspond to children of $H$ in $T$. A total of $k+1$ flips is performed in $H$.
			\item $H$ is the root vertex of $T$ that corresponds to the path of Type D with $k$ edges from the initial matching of $T$. One by one, we replace edges from the initial matching with edges from the target matching. Whenever we create a new \exposed vertex $v'$ that lies next to a component that corresponds to a subtree of $H$ in $T$ on the convex hull, we use $v'$ to perform all the required steps in those components before we continue with traversing~$H$. In total $k$ flips are charged towards Case {\sffamily{\textbf{(4)}}}.
		\end{enumerate}
			
			By traversing all components in the order that was determined by $T$, we make sure that all cycles of Type C as well as the path of Type D is fixed. Summing up over all the flips required in every case, we get a flip sequence of length $2\lvert B\rvert+\lvert C\rvert+c+\lvert D\rvert$.
\end{proof}

\lowerbound*
\begin{proof}
        For the lower bound we show that for any given odd matching we can construct a second matching that has flip distance at least $\frac{3m}{2}-\BigO(1)$.

        Let $M$ be an arbitrary odd matching on a convex point set. We show that there exists a matching $M'$ that is compatible with $M$ such that the union of $M$ contains $\big\lfloor\frac{\lvert M \rvert}{2}\big\rfloor$ cycles with $4$ edges and possible one happy edge. The required lower bound then follows from~\cref{thm:up}.

        First, remove the \exposed vertex from $M$. Now, if all edges of $M$ are on the convex hull, construct $M'$ such that it connects two arbitrary consecutive convex hull edges via the convex hull edge between them and add the remaining edge to close the $4$-cycle. Repeat the process until only no (no more steps to do) or one edge (add the edge to $M'$ as well) is left.

        We are also done, if $M$ only has two edges, because in this setting all edges are on the convex hull (recall that we removed the \exposed vertex).

        Now let $M$ contain non convex hull edge $e$. This edge splits the point set into two parts. If one of the parts contains only one edge, said edge has to be a convex hull edge $h$ and we connect $e$ and $h$ in $M'$ to form a non-crossing 4-cycle. If both parts contain more than one edge, we split $M$ along $e$ and apply our approaches on both parts separately. We have to add $e$ to one of the two parts. We do this in such a way that not both resulting parts contain an odd amount of edges.
\end{proof}
    \section{Omitted details for~\cref{sec:hardness}}
\label{app:hardness}
We now provide the omitted technical details underlying the hardness results.

\subsection{\NP-hardness for point sets}
\hardnessPointSets*

\begin{proof}
    Given two odd matchings $\matin$ and $\mattar$ on a set of points, we aim to show that there is a flip sequence of length $k = \frac{1}{2} \lvert\matin \triangle \mattar\rvert + V + C$ transforming $\matin$ into $\mattar$ if and only if there is a satisfying assignment for a corresponding Boolean formula with $V$ many variables and $C$ many clauses as an instance of \textsc{Planar Monotone 3SAT}.

    Given an instance of \textsc{Planar Monotone 3SAT}, we construct an instance of the odd matching reconfiguration problem as follows:
    For each variable, we introduce a variable gadget, arranged horizontally in a single row as illustrated in~\cref{fig:var}.
    For every clause consisting solely of positive literals, we place a clause gadget above the variable row, guaranteeing that the interaction between vertices from the clause and vertices from the respective variables is restricted to a single designated pair, as shown in~\cref{fig:clause}.
    For clauses containing only negated literals, we place a clause gadget below the variable row in an analogous manner.
    Finally, we add an additional \exposed vertex to the left of the leftmost variable gadget to complete the construction.
    
    We take a close look at every at $E=M\cup M'$. 
    $E$ consists of the following connected components: 
    the alternating path that connects the two \exposed vertices reduces to a single vertex as $v_M=v_{M'}$; 
    $C$ many alternating cycles, one corresponding to each clause gadget; 
    $V$ many alternating cycles, one corresponding to each variable gadget;
    all edges introduced to block visibilities are happy edges, i.e., they are contained in both matchings.
    From the observations in \cref{sec:prelim-obs}, we derive a lower bound on the flip distance of $\frac{1}{2} \lvert\matin \triangle \mattar\rvert + V + C$. 
    
    We now argue that this lower bound is tight exactly when the Boolean formula is satisfiable.
    In particular, flips cannot be charged to components consisting solely of a single happy edge.
    Moreover, each gadget must be transformed using the minimum number of flips required for its reconfiguration.

    \begin{claim}\label{clm:point-set-hardness-assignment}
        If $\Phi$ is satisfiable, then there is a flip sequence of length exactly $\frac{1}{2} \lvert\matin \triangle \mattar\rvert + V + C$ reconfiguring $\matin$ into $\mattar$.
    \end{claim}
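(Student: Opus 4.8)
The plan is to use the satisfying assignment to schedule an optimal flip sequence that meets the already-established lower bound with equality. Recall that $E=\matin\cup\mattar$ decomposes into a trivial path (a single vertex, since $v_{\matin}=v_{\mattar}$), $V$ variable cycles, $C$ clause cycles, and happy edges, and that the lower bound $\tfrac{1}{2}\lvert\matin\triangle\mattar\rvert+V+C$ is exactly the sum over the $V+C$ alternating cycles of the minimum switching cost $k_i+1$ of a cycle with $k_i$ edges from $\matin$. Hence it suffices to exhibit a flip sequence in which (i) no happy edge is ever flipped, (ii) each variable cycle and each clause cycle is switched in precisely its minimum number $k_i+1$ of flips, and (iii) the \exposed vertex returns to its position after each gadget, so that the sequence is consistent and terminates at $\mattar$.

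Given a satisfying assignment $\alpha$, I first fix, for each variable $x$, the traversal direction of its gadget according to whether $\alpha(x)$ is \texttt{TRUE} or \texttt{FALSE}; by the construction (\cref{fig:var2}) each of the two directions realizes an optimal reconfiguration of the variable cycle, and the two directions differ exactly in which intermediate configuration makes the running \exposed vertex visible to an incident clause gadget. I then process the variable gadgets from left to right, starting from the initially placed \exposed vertex, switching each variable cycle optimally in the chosen direction.

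The coupling with the clauses is the heart of the construction. Whenever, during the chosen-direction traversal of a variable gadget, the \exposed vertex reaches the intermediate configuration that (by $\alpha$) makes it visible to the designated vertex of an as-yet-unswitched incident clause gadget, I pause the variable traversal, switch that clause cycle optimally in $k+1$ flips using the plane cycle-switching procedure of \cref{sec:prelim-obs} (valid, since the \exposed vertex is now adjacent to a vertex of the clause cycle), and then resume the variable traversal. Because $\alpha$ satisfies $\Phi$, every clause has at least one true literal, so each clause cycle is reached through at least one of its variables and is switched exactly once in this nested fashion.

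The main obstacle is the geometric visibility and non-interference analysis. I must verify that (a) during the chosen-direction traversal of a variable gadget, the \exposed vertex becomes visible to the correct clause vertex precisely when $\alpha$ satisfies that clause, and is visible to no other clause or variable gadget at any other moment (so that the happy edges indeed block all unintended interaction and no spurious flip is ever forced), as sketched in \cref{fig:clause2}; and (b) the inserted clause switch is crossing-free and, upon completion, restores both the partial state of the variable gadget and the position of the \exposed vertex, so the two nested reconfigurations charge no extra flips to one another. Granting this, every cycle is switched at its minimum cost $k_i+1$, the happy edges are never touched, and the \exposed vertex returns to its origin after each gadget; summing over all $V+C$ cycles yields exactly $\tfrac{1}{2}\lvert\matin\triangle\mattar\rvert+V+C$ flips, proving the claim.
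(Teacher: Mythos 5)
Your proposal is correct and follows essentially the same route as the paper's proof: traverse each variable gadget in the direction prescribed by the satisfying assignment, and detour to switch each as-yet-unswitched clause cycle when the \exposed vertex becomes visible to it, so that every cycle is switched at its minimum cost $k_i+1$ and no happy edge is touched. The visibility and non-interference properties you flag as needing verification are likewise left to the gadget construction and figures in the paper's own (much terser) proof, so there is no substantive difference.
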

    
	\begin{claimproof}
        If $\Phi$ has a satisfying truth assignment, traverse each variable gadget in an order that reflects the truth value assigned to its corresponding variable.
        For every clause that is satisfied by some variable $x_i$, make a detour---as illustrated in~\cref{fig:clause2}---to reconfigure the corresponding clause gadget. 
        The resulting flip sequence has length $\frac{1}{2} \lvert\matin \triangle \mattar\rvert + V + C$.
    \end{claimproof}

    \begin{claim}\label{clm:point-set-hardness-sequence}
        If there is a flip sequence of length exactly $\frac{1}{2} \lvert\matin \triangle \mattar\rvert + V + C$ reconfiguring $\matin$ into $\mattar$, then $\Phi$ is satisfiable.
    \end{claim}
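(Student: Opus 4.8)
The plan is to exploit, contrapositively, the \emph{rigidity} of a length-optimal sequence: a sequence meeting the lower bound $\tfrac{1}{2}\lvert\matin\triangle\mattar\rvert+V+C$ can afford no wasted flip, and this forces it to encode a satisfying assignment. First I would record and prove carefully the two structural consequences of optimality already hinted at above. A happy edge lies in both $\matin$ and $\mattar$, so if some flip ever removed such an edge, a later flip would have to reinsert it; these two flips add $2$ to the length but nothing to the $\tfrac{1}{2}\lvert\matin\triangle\mattar\rvert$ flips needed to switch the $V+C$ alternating cycles, which already exhausts the budget. Hence in an optimal sequence \emph{no happy edge is ever flipped}; in particular every visibility-blocking edge stays in place for the entire reconfiguration. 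Likewise, since each of the $V+C$ cycles needs its own placement flip to bring the \exposed vertex onto it, and the budget grants exactly one extra flip per cycle, each variable and clause gadget must be switched by an optimal traversal (the minimum number of flips, one more than its number of $\matin$-edges), with the \exposed vertex entering and leaving it exactly once and ultimately returning to the fixed start position $v_M=v_{M'}$.

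Next I would read off a candidate assignment. By the analysis of the variable gadget (\cref{fig:var2}), its alternating cycle admits exactly two optimal switches, which traverse the gadget in opposite directions; I assign each variable the truth value corresponding to the direction actually used. This is well defined precisely because each variable cycle is switched exactly once.

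It remains to show the assignment satisfies every clause. Fix a clause gadget; by the first step its cycle is switched using the minimum number of flips, so at some moment the \exposed vertex must be placed on a vertex of this gadget. Because all blocking happy edges remain intact throughout, the \exposed vertex can become visible to a clause-gadget vertex only through the single designated pair between the clause and one of its literals, and---by the geometry of \cref{fig:clause2}---this visibility window opens only while the corresponding variable gadget is mid-traversal in the direction matching the clause's polarity, i.e.\ exactly when that variable is set so as to satisfy the clause. Consequently at least one literal of each clause is true, so $\Phi$ is satisfiable.

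The crux is this last visibility argument, where the geometry does the real work. The hard part will be proving rigorously that there is no cheaper or alternative way to place the \exposed vertex onto a clause gadget: one must verify that, with the happy edges frozen, every straight segment from a would-be \exposed vertex to a clause vertex is blocked except through the designated pair, and that this pair is unobstructed only during the satisfying traversal phase of the incident variable. A secondary technical point is the bookkeeping over the interleaving of the $V+C$ switches---showing that placement flips cannot be shared across gadgets and that the \exposed vertex genuinely returns to $v_M$---so that the global count stays exactly at the lower bound and none of the local optimality arguments can be circumvented.
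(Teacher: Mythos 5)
Your proposal is correct and follows essentially the same route as the paper's proof: exact optimality forbids any flip of a happy edge (so the blocking edges stay frozen and only gadget vertices participate), each variable gadget is then switched in one of its two optimal traversals, which defines the assignment, and the optimal reconfiguration of each clause gadget forces the isolated vertex to reach it through the single designated visibility pair, certifying a true literal per clause. Your writeup is in fact more explicit than the paper's about the visibility rigidity and the flip bookkeeping, which the paper delegates to the gadget construction and figures.
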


    \begin{claimproof}
	Now assume that there exists a flip sequence of length exactly $\frac{1}{2} \lvert\matin \triangle \mattar\rvert + V + C$. 
    Then no vertex other than vertices in the variable gadgets and the clause gadgets can be involved in any flips; otherwise, we would have removed and re-added a happy edge, which would incur at least two additional flips. 
    Furthermore, each variable gadget is traversed in one of two optimal ways: 
    if the gadget corresponding to variable~$x_i$ is traversed as shown on the left of~\cref{fig:var2}, we assign the value \texttt{TRUE}; otherwise, we assign \texttt{FALSE}.
    Because all clause gadgets are reconfigured optimally, the traversal of variable gadgets must ensure that, for each clause gadget, there exists at least one \exposed vertex in a variable gadget that connects to it.
    This guarantees that the traversal of the variable gadgets corresponds to a truth assignment that satisfies all clauses of $\Phi$ simultaneously, and therefore $\Phi$.
    \end{claimproof}

    This concludes the proof of~\cref{prop:general_hard}.
\end{proof}

\subsection{Generalizing to point sets in general position}

\distToGrid*

\begin{proof}

First note if $\ell$ is a vertical line then any point of $G$ that is not on $\ell$ has at least distance $1$ to $\ell$.
If $\ell$ is not a vertical line, any line $\ell$ spanned by two points of $G$ is of the form $y=\frac{a}{b}x+q$ where $-h\leq a \leq h$ and $0 < b \leq w$ and $q$ is a point of $G$.
This means that for integer values of $x$, the corresponding $y$-value is an integer or its distance to any integer is at least $\frac{1}{w}$. If the corresponding $y$-value is an integer, then it is a point of $G\cap \ell$ and has distance one to any point of $G$ that is not on $\ell$.
Similarly, if $y$ is an integer, the corresponding $x$-value is an integer or its distance to any integer is at least~$\frac{1}{h}$.
Since $h,w \geq 1$, the lemma then follows.
\end{proof}

\smallwiggle*

\begin{proof}
Let $p=(x_p,y_p), q=(x_q,y_q), p'=(x_{p'},y_{p'}), q=(x_{q'},y_{q'})$.
Note that $|x_p-x_{p'}|, |x_q-x_{q'}|, |y_p-y_{p'}|, |y_q-y_{q'}|  < \varepsilon$. 
Let $d_x=x_p-x_q, d_y = y_p-y_q$.
Let $r$ be a grid point that does not lie on the line $pq$ and $r'$ be a point inside the $\varepsilon$-square centered at $r$. Note that the distance from $r$ to $r'$ is at most $\varepsilon<\frac{1}{16hw}$.

If $d_x = 0$, $pq$ is a vertical line.
Note that by the assumption of $\varepsilon$, $p'q'$ then cannot be a horizontal line.
$|x_{p'}-x_{q'}|\leq \varepsilon$ and $|y_{p'}-y_{q'}|\geq |y_p-y_q| - \varepsilon \geq 1- \varepsilon > \frac{1}{2}$.
So the slope of~$p'q'$ is greater than $\frac{1}{2\varepsilon}$ or smaller than $-\frac{1}{2\varepsilon}$. 
This implies that within the area of the grid, the difference between the $x$-coordinates of any point $a'$ on $p'q'$ and that of $p'$ is at most $2\varepsilon h$.
Hence, the distance from $a'$ to $pq$ is at most $2\varepsilon h + \frac{\varepsilon}{2} < \frac{1}{4hw}$, since $\varepsilon < \frac{1}{16h^3w^3}$.
By \cref{lem:dist_to_grid}, the distance from $r$ to $pq$ is at least $\frac{1}{2hw} > \frac{1}{4hw}$. Therefore the distance from $r'$ to $p'q'$ is at least $\frac{1}{2hw}-\frac{1}{4hw} - \frac{1}{16hw}=\frac{3}{16hw}$.

This implies that $r'$ is on the right of $p'q'$ if and only if it is on the right of $p'q'$.
Hence, the lemma holds in this case.

Now consider the case when $d_x \neq 0$.
Then by the assumption of $\varepsilon$, $p'q'$ cannot be a vertical line.
We call the \emph{$y$-distance} from a point $s$ to the line $pq$ is the distance between $s$ and the point $s^x$ on $pq$ that shares the same $x$-coordinate with $s$.

\begin{claim}
\label{claim:y_distance}
Let $s$ be a point on $p'q'$ in the area of the grid $G$.
Then the $y$-distance from $s$ to the line $pq$ is at most $\frac{1}{4hw}$.
\end{claim}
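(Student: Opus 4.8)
The plan is to bound the $y$-distance from an arbitrary point $s$ on the line $p'q'$ to the original line $pq$, treating this as a perturbation estimate. Since we are in the case $d_x \neq 0$, the line $pq$ is non-vertical with slope $d_y/d_x$, and by the assumption on $\varepsilon$ the perturbed line $p'q'$ is also non-vertical, so the $y$-distance is well-defined everywhere in the grid area. First I would parametrize a point $s=(x_s,y_s)$ on $p'q'$ and compute $s^x$, the point on $pq$ with the same $x$-coordinate, so that the $y$-distance is $\lvert y_s - y_{s^x}\rvert$ where $y_{s^x} = y_p + \frac{d_y}{d_x}(x_s - x_p)$.

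The key step is to control how much the line can drift vertically as a consequence of moving each defining endpoint by at most $\varepsilon$ in each coordinate. I would write the deviation $y_s - y_{s^x}$ as the difference between the value of the perturbed affine function and the original affine function at $x=x_s$. Since the slopes are $\frac{y_{p'}-y_{q'}}{x_{p'}-x_{q'}}$ and $\frac{d_y}{d_x}$, and each numerator and denominator differs from its unperturbed counterpart by at most $2\varepsilon$, I would bound the slope difference using that $\lvert d_x\rvert \geq 1$ (grid points have integer coordinates, so nonzero horizontal separation is at least $1$) and that $\lvert d_y\rvert, \lvert d_x\rvert \leq h, w$ respectively. This gives a slope deviation of order $\varepsilon h w$ (roughly $\frac{2\varepsilon(w+h)}{1-2\varepsilon}$ after clearing denominators), and multiplying by the horizontal span of the grid, which is at most $w\cdot 16h^3w^3\cdot(n^4+n)$, together with the $O(\varepsilon)$ intercept shift, yields the total $y$-deviation. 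I would then verify that plugging in $\varepsilon \leq \frac{1}{16w^3h^3}$ makes this at most $\frac{1}{4hw}$, matching the constant already used in the $d_x=0$ case of the surrounding lemma.

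The main obstacle I anticipate is the bookkeeping of the grid's actual width: the relevant span here is not the original $h\times w$ grid but the blown-up grid $G'$ from \cref{lem:to_general}, whose dimensions are polynomial in $h,w,n$. Thus the naive product ``slope error times horizontal span'' could exceed $\frac{1}{4hw}$ unless the error in the slope is genuinely of order $\varepsilon$ with a very small constant, or unless one restricts attention to the comparatively tiny $\varepsilon$-square neighborhoods where the analysis is actually applied (as in the proof of \cref{lem:to_general}, where only triples with two endpoints in $\varepsilon$-squares centered at the \emph{same} original grid points $p,q$ are tested). I would therefore take care to phrase the claim so that $x_s$ ranges only over the portion of the line inside the grid that is relevant, and to exploit that the slope comparison can be done additively: the difference of the two affine functions at a common abscissa telescopes into an error controlled by the endpoint perturbations rather than by the full slope times full span, keeping the dominant term linear in $\varepsilon$. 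Once the $y$-distance bound $\frac{1}{4hw}$ is in hand, combining it with \cref{lem:dist_to_grid} (which lower-bounds the $y$-distance from the unperturbed grid point $r$ to $pq$ by $\frac{1}{2hw}$) and the $O(\varepsilon)$ slack for $r'$ establishes that $r'$ lies strictly on the same side of $p'q'$ as $r$ does of $pq$, completing the orientation-preservation argument exactly as in the vertical case.
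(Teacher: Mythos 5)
Your core decomposition---writing the $y$-deviation at abscissa $x_s$ as an intercept shift of order $\varepsilon$ plus a slope error times a horizontal distance, with the slope error bounded via $\lvert d_x\rvert \geq 1$, $\lvert d_y\rvert \leq h$, $\lvert d_x\rvert \leq w$---is exactly the paper's argument, and that part is sound. The genuine gap is your identification of the horizontal span. You multiply the slope error by the width of the blown-up grid $G'$ from \cref{lem:to_general}, namely $w\cdot 16h^3w^3\cdot(n^4+n)$, observe (correctly) that this product overwhelms $\frac{1}{4hw}$, and then try to repair the argument by restricting $x_s$ to the vicinity of the $\varepsilon$-squares and by an appeal to ``telescoping.'' But \cref{lem:small_wiggle} and the claim live entirely in the coordinate frame of the original $h\times w$ grid $G$: ``the area of the grid $G$'' is the $h\times w$ region, so the horizontal distance from the anchor $p'$ to $s$ is at most $w$ up to $\BigO(\varepsilon)$. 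The refined grid $G'$ plays no geometric role here; in the proof of \cref{lem:to_general} it only guarantees that enough candidate points exist inside each $\varepsilon$-square, while \cref{lem:small_wiggle} is invoked with the original parameters $h,w$. With span $w$ the bound is immediate and there is no tension to resolve: the deviation is at most $2\varepsilon + 4\varepsilon hw < \frac{1}{4hw}$ for $\varepsilon \leq \frac{1}{16h^3w^3}$, and your own constants, roughly $\frac{2\varepsilon(w+h)}{1-2\varepsilon}\cdot w + 2\varepsilon$, clear the same threshold.

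Your proposed fix would moreover break the downstream application even if it rescued the arithmetic. The claim is applied at the abscissa of $r'$, where $r$ is an \emph{arbitrary} grid point of $G$ off the line $pq$, possibly far from both $p$ and $q$; combined with \cref{lem:dist_to_grid} this is what forces $r$ and $r'$ to the same side. So $x_s$ must range over the entire $h\times w$ region, not merely near the two $\varepsilon$-squares ``where the analysis is applied.'' Finally, no telescoping eliminates the slope-times-span term: the difference of the two affine functions at abscissa $x$ is exactly the shift at the anchor plus (slope difference)$\cdot(x-x_{p'})$, so a term linear in the span is intrinsic---it is small because the span is $w$, not because it cancels.
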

\begin{claimproof}
The slope of the line $pq$ is $\frac{d_y}{d_x}$.
Note that $d_x-2\varepsilon\leq x_{p'}-x_{q'}\leq d_x + 2\varepsilon$.
Furthermore, $d_y-2\varepsilon\leq y_{p'}-y_{q'}\leq d_y + 2\varepsilon$.
Hence, $\frac{d_y-2\varepsilon}{d_x+2\varepsilon}\leq \frac{y_{p'}-y_{q'}}{x_{p'}-x_{q'}} \leq \frac{d_y+2\varepsilon}{d_x-2\varepsilon}$.

Note that $\left\lvert \frac{d_y}{d_x}-\frac{y_{p'}-y_{q'}}{x_{p'}-x_{q'}}\right\lvert \leq \frac{d_y+2\varepsilon}{d_x-2\varepsilon} - \frac{d_y}{d_x}$ since $\frac{d_y}{d_x}-\frac{d_y-2\varepsilon}{d_x+2\varepsilon}\leq \frac{d_y+2\varepsilon}{d_x-2\varepsilon} - \frac{d_y}{d_x}$.  
Further,

\begin{align*}
    \frac{d_y+2\varepsilon}{d_x-2\varepsilon} - \frac{d_y}{d_x} &= \frac{d_y+2\varepsilon}{d_x (1-\frac{2\varepsilon}{d_x})} - \frac{d_y}{d_x} \\
    &\leq \frac{d_y(1-\frac{2\varepsilon}{w}) + d_y\frac{2\varepsilon}{w}+2\varepsilon}{d_x (1-\frac{2\varepsilon}{w})} - \frac{d_y}{d_x} \\
    &\leq 4\varepsilon \left(\frac{d_y}{w} + 1\right) \\
    &< 4\varepsilon h
\end{align*}

Hence, the slope changes by at most $4\varepsilon h$.

Since $p'$ is in $B_{\varepsilon}(p)$ the distance between $p$ and $p'$ is at most $2\varepsilon$.
Therefore, inside the grid $G$, for any point $\bar{r}$ on the line $p'q'$, its distance to the point $\bar{r}^x$ on $pq$ that shares the $x$-coordinate with $\bar{r}$ is at most $2\varepsilon + 4\varepsilon h w < \frac{1}{4hw}$.
\end{claimproof}

By \cref{lem:dist_to_grid}, the $y$-distance from $r$ to $pq$ is at least $\frac{1}{2hw}$.
Combined with \cref{claim:y_distance} above and that the distance from $r$ to $r'$ is at most $\varepsilon$, we deduce that $r$ is above $pq$ if and only if $r'$ is above $p'q'$.
The lemma then follows.
\end{proof}

\subsection{NP-hardness for grid graphs}

\HANAN*

\begin{proof}
    Assume $K$ admits an RST of length at most $\ell$. Subdividing every line segment of length $1$ into $8n$ and embedding it in the finer grid yields an RST that connects all vertices of $K'$ and has length at most $8n\ell$.

    Now let $K'$ admit an RST of length at most $8n\ell$. Then, again, there exits a Steiner tree in the Hanan grid of length at most $8n\ell$. Each line segment then has a length that is a multiple of $8n$. If we remove all vertices that are not $1~mod~8n$ and replacing vertical or horizontal line segments with $8n$ points on them with a single vertical or horizontal line segment gives us an RST that connects $K$ and has at most $\ell$ line segments.
\end{proof}

\hardnessFlipSequenceGrid*
\begin{proof}
     First we transform $K$ into $K'$ according to \cref{lem:hanan}.

    \begin{claim}\label{clm:grid-hardness-sequence}
        If there exists an RST of length at most $8n\ell$ in the refined grid $G'$, then there exists a flip sequence of length at most $4\cdot(8n)\cdot \ell + 8k$
    \end{claim}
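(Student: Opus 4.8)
The plan is to convert the given rectilinear Steiner tree $T$ of length $8n\ell$ in $G'$ into an Euler-tour traversal of the \exposed vertex through the gadget grid, switching each of the $k$ alternating $8$-cycles exactly once. First I would use (as in the Hanan-grid discussion preceding \cref{lem:hanan}) that we may assume $T$ lies in the Hanan grid and is arbitrarily refinable, so that every edge of $T$ runs along grid lines of $G'$ and every point of $K'$ as well as every bend and Steiner point of $T$ is a vertex of $G'$, hence corresponds to one of the $8$-vertex gadgets of \cref{fig:steiner_1}. I root $T$ at the gadget of $v_{low}$, where the \exposed vertex of both $\matin$ and $\mattar$ sits, and separate the $K'$-gadgets (each carrying an alternating $8$-cycle, four edges of $\matin$ and four of $\mattar$) from the remaining gadgets on $T$ (each carrying four happy edges).

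Next I would realize the flip sequence as a depth-first (Euler-tour) traversal of $T$. Crossing a non-$K'$ gadget forward means pushing the \exposed vertex across its happy edges: this removes two matching edges and re-routes them, which I would show costs $2$ flips, and the matching backtracking crossing restores those two happy edges with $2$ further flips. Whenever the traversal first enters a $K'$-gadget, the \exposed vertex arrives adjacent to the gadget's $8$-cycle, so I switch that cycle by the procedure of \cref{sec:prelim-obs}; this exchanges the four $\matin$-edges for the four $\mattar$-edges and returns the \exposed vertex to the entry vertex. A switch of an $8$-cycle uses five flips, with at most a constant additional overhead when the gadget is an internal node of $T$ and the traversal must still descend into child subtrees from it, so I would bound the cost per $K'$-gadget by $8$. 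Since each link of $T$ is crossed once in each direction and each $8$-cycle is switched exactly once, every temporarily removed happy edge is restored and the final matching is exactly $\mattar$.

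For the flip count, each of the $8n\ell$ links of $T$ is crossed twice in the Euler tour at $2$ flips per crossing, contributing $4\cdot 8n\ell$ flips, while the $k$ cycle-gadgets contribute at most $8k$ flips; summing yields the claimed bound $4\cdot(8n)\cdot\ell + 8k$.

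I expect the main obstacle to be the feasibility-and-restoration bookkeeping rather than the counting: verifying that every single flip is legal (the added edge is an edge of the grid graph incident to the current \exposed vertex, and the result is again an odd matching), that the $2$-flips-per-crossing accounting is exact for the concrete gadget wiring of \cref{fig:steiner_1,fig:steiner_2} together with the added bottom row and right column, and above all that an internal $K'$-gadget can be switched while the \exposed vertex still reaches and returns from every child subtree attached to it, without overrunning the $8$-flip budget or leaving any happy edge unrestored.
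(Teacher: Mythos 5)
Your overall plan coincides with the paper's proof: assume via the Hanan-grid property that the tree lies in the Hanan grid of $G'$, traverse it in depth-first (Euler-tour) order, cross each non-$K'$ gadget at two flips per direction, and switch each alternating $8$-cycle gadget with five flips. The one genuine gap is your treatment of bends. The two-flips-per-crossing accounting is only valid when the \exposed vertex passes through a non-$K'$ gadget straight, horizontally or vertically; at a gadget where the tree turns --- and bends and Steiner points of $T$ are in general \emph{not} points of $K'$, so your $8k$ budget for cycle gadgets does not cover them --- the route through the four happy edges is different and more expensive: the paper charges up to $6$ flips for the two crossings of a corner gadget instead of $4$. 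Consequently the count $4\cdot 8n\ell + 8k$ is not established by your bookkeeping unless you also bound the number of bends. The paper does exactly this, observing that a solution in the Hanan grid can be chosen with at most $k$ corners, and charging $2k$ extra flips to corners, arriving at $4\cdot 8n\ell + 2k + 5k < 4\cdot 8n\ell + 8k$. Without that corner bound, your estimate can be exceeded by two flips per bend, and the number of bends of an arbitrary tree in $G'$ is not a priori $O(k)$.

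A smaller divergence: you switch a $K'$-gadget's cycle on first entry and budget a ``constant additional overhead'' for internal nodes, whereas the paper replaces the remaining $\matin$-edges of a $K'$-gadget only when it is traversed for the \emph{last} time, so every switch costs exactly $5$ flips and no overhead term arises. Your variant can be repaired (after a switch, later crossings of that gadget flip and restore $\mattar$-edges at the same $2+2$ cost as happy edges), and the $3k$ of slack this frees in your $8k$ budget is precisely what could absorb the $2k$ corner surcharge --- but only once you have proved both the bend cost and the at-most-$k$-corners property, neither of which appears in your argument as written.
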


    \begin{claimproof}
    If there exists a solution to RSTP with length at most $8n\cdot\ell$, then there also exists a solution that lies entirely in the Hanan grid and, in particular, has at most $k$ corners.

    Take said tree in the Hanan grid and traverse it in depth first search order. Let each traversed edge in the tree correspond to an alternating path between two vertex in $K'$ gadgets that walk along vertices in the refined grid graph $G'$ along the edge. Traversing a vertex not in $K'$ gadget horizontally or vertically takes two flips to traverse them once and another two flips to go back, so a total of $4$ flips. If a vertex corresponds to a corner, it might need more flips when being traversed twice, but never more than $6$. When a vertex in $K$ gadget is traversed for the last time, we replace all the remaining edges from $\matin$ in it with edges in $\mattar$.

    In total we invest up to $4\cdot (8n) \cdot \ell$ flips for vertical and horizontal traversal of vertex gadgets, $2k$ additional flips for possible corners and $5k$ to turn the vertex in $K$ gadgets from their initial state to their target state. This adds up to less than $4\cdot (8n) \cdot \ell +8k$ flips.
    \end{claimproof}

    \begin{claim}\label{clm:grid-hardness-tree}
        If there exists a flip sequence of length at most $4\cdot(8n)\cdot \ell + 8k$, then there exists an RST of length at most $8n\ell$ in the refined grid $G'$ and consequently an RST of length $\ell$ in the original grid.
    \end{claim}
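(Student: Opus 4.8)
The plan is to reconstruct a rectilinear Steiner tree directly from the trajectory of the \exposed vertex, and then to use the $8n$-refinement to absorb an additive $\BigO(k)$ error. Fix a flip sequence $\sigma$ of length at most $4\cdot 8n\cdot\ell+8k$ reconfiguring $\matin$ into $\mattar$. Each flip moves the \exposed vertex to a nearby vertex of $G'$, so $\sigma$ induces a walk of the \exposed vertex through the vertex gadgets; since $\matin$ and $\mattar$ share the same \exposed vertex (placed below the gadget of $v_{low}$), this walk is \emph{closed}. Let $R$ be the set of gadgets in which $\sigma$ flips at least one edge (the \emph{disturbed} gadgets). Because the walk only moves between gadgets adjacent in $G'$, the set $R$ induces a connected subgraph of $G'$, and every $K'$-gadget lies in $R$: its restrictions of $\matin$ and $\mattar$ differ, so it must be switched, which is impossible without disturbing it. Hence $R$ is a connected subgraph of $G'$ containing the start gadget and all $k$ terminals of $K'$, so any spanning tree of $R$ is a rectilinear Steiner tree connecting $K'$; write $m\le|R|-1$ for its length in $G'$-edges, and let $c=|R|-k$ be the number of disturbed non-$K'$ gadgets, so that $m\le c+k-1$.

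The next step is to bound $c$ from below by the length of $\sigma$. I would charge to each disturbed non-$K'$ gadget the flips that it forces. Such a gadget consists solely of happy edges (present in both $\matin$ and $\mattar$), so it starts and ends in the same state; letting the \exposed vertex pass through it therefore requires first removing happy edges and then, since the final matching is $\mattar$, restoring them. The gadget of~\cref{fig:steiner_1} is designed so that a single pass of the \exposed vertex cannot by itself restore the happy configuration, so each distinct disturbed non-$K'$ gadget is charged at least four flips (two to open it, two to close it). As these charges are over distinct gadgets, they do not double-count, and we obtain
\[
4c \;\le\; |\sigma| \;\le\; 4\cdot 8n\cdot\ell + 8k ,
\]
whence $c\le 8n\ell+2k$ and consequently $m\le c+k-1\le 8n\ell+3k-1$.

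Finally I would invoke the refinement to delete the $\BigO(k)$ slack. All coordinates of $K'$ are congruent to $1\bmod 8n$, so an \emph{optimal} rectilinear Steiner tree of $K'$ lies in the associated Hanan grid, in which every segment length, and hence the total length, is a multiple of $8n$ (this is exactly the mechanism behind \cref{lem:hanan}). The minimum rectilinear Steiner tree of $K'$ has length at most the length $m$ extracted above, so it is at most $8n\ell+3k-1$; using the standing assumption $n>k$ we have $3k-1<3n<8n$, hence this minimum is strictly below $8n(\ell+1)$. A positive multiple of $8n$ that is smaller than $8n(\ell+1)$ is at most $8n\ell$, so $K'$ admits a rectilinear Steiner tree of length at most $8n\ell$ in $G'$. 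Applying \cref{lem:hanan} then yields a rectilinear Steiner tree of length at most $\ell$ for $K$ in the original grid, as claimed.

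The main obstacle is the per-gadget lower bound $4c\le|\sigma|$. The delicate point is to exclude ``cheap'' flip sequences that route the \exposed vertex along cycles rather than back and forth along tree branches: a naive crossing count only gives a factor of two and would not suffice. The clean way to rule this out is the restoration argument sketched above, namely that a disturbed non-$K'$ gadget can be returned to its happy state only by the \exposed vertex retracing its entry, which both pins the charge at four flips per distinct conduit gadget and forces the footprint to behave as a tree. Verifying this behaviour rigorously from the explicit gadget of~\cref{fig:steiner_1} (including the extra cost incurred at turns) is the heart of the proof; once it is in place, the counting and the multiple-of-$8n$ rounding conclude the argument.
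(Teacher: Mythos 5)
Your outline mirrors the paper's proof at the architectural level (extract a connected ``footprint'' of disturbed gadgets from the flip sequence, pass to a spanning tree, lower-bound the number of flips by roughly four per unit of tree length with $\BigO(k)$ slack, then absorb the slack via the multiple-of-$8n$ rounding and $n>k$), but the step you yourself flag as the heart of the proof --- the charge of four flips per disturbed non-$K'$ gadget, i.e.\ $4c\le|\sigma|$ --- is not just unverified, it is false as stated. A non-$K'$ gadget can be disturbed and fully restored by a two-flip ``poke'': the \exposed vertex in an adjacent gadget adds a crossing edge to a vertex $u$ of the gadget, removing the happy edge $uv$ (so $v$ becomes \exposed), and the very next flip adds $uv$ back while removing the crossing edge. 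Such a gadget enters your set $R$ but costs only two flips, so with your tree (a spanning tree of \emph{all} of $R$, with no pruning) the bound $m\le c+k-1$ combined with the true cost of pokes only yields $m\lesssim 2\cdot 8n\ell$, which the rounding argument cannot repair. Your ``no double-counting'' assertion is likewise unjustified: each flip has one added and one removed edge, and these can be internal to or incident with different gadgets, so a per-gadget charge needs an explicit assignment rule. Your claim that restoration forces the \exposed vertex to retrace its entry (hence a tree-like footprint) is also not quite right: traversing a closed alternating cycle of gadgets twice restores all happy edges without any back-and-forth along branches --- it happens to still cost about four flips per gadget, but it shows the footprint need not behave as a tree, so the argument cannot rest on that claim.

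The paper circumvents exactly these difficulties by never charging flips to individual gadgets. It prunes all leaves of the footprint tree not in $K'$, takes the set $S$ of terminals and branch vertices ($|S|\le 2k$), reroutes each tree path between consecutive members of $S$ into one horizontal and one vertical segment (yielding $T'$ with $|T'|\le|T|$), and then lower-bounds the flips along each such path by counting the \emph{relevant} happy edges on the corresponding $M$-alternating path: by the coordinate structure of the gadgets of \cref{fig:steiner_1}, consecutive relevant edges differ by two in the $x$-coordinate or one in the $y$-coordinate, so there are at least $2|L_H|+2|L_V|-1$ of them, and each, being happy, must be traversed at least twice, giving $4|L_H|+4|L_V|-2$ flips per segment. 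This path-based count is immune to pokes (they occur off the pruned tree), handles corners without case analysis, and avoids double-counting by construction. To salvage your version you would need, at minimum, to prune non-terminal leaves before measuring the tree and to prove a per-unit-length (not per-disturbed-gadget) lower bound with a well-defined flip-to-gadget assignment --- which is essentially what the paper's relevant-edge argument accomplishes.
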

    
    \begin{claimproof}
   If there exists a flip sequence of length less than $4\cdot (8n) \cdot \ell +8 k$, we draw a graph in the grid graph of the refined instance of RSTP (with the point set $K'$). The graph contains every vertex whose corresponding gadget had an edge flipped in the flip sequence. We add an edge between two vertices if and only if during the flip sequence, there exists an edge between the two corresponding vertex gadgets. The constructed graph connects all vertices in $K'$. First, we take a spanning tree of the constructed graph to eliminate cycles. The total length of the spanning tree is at most the length of the spanning graph. Second, we exhaustively remove vertices that are leaf vertices and do not lie in $K'$. Again, the resulting tree $T$ only gets shorter.

    Last, we construct the final solution $T'$ to the refined instance of RSTP as follows.
    Let~$S$ be the set of vertices of $T$ that are either in $K'$ or have degree at least three.
    Since the leaves are in $K'$ and there are $k$ points in $K'$, there are at most $2k$ vertices in $S$.
    For any two vertices $u$ and $v$ in~$S$ such that the path $P$ between them in $T$ does not contain any other vertices in~$S$, we add to $T'$ one vertical line segment $L_V$ and one horizontal line segment $L_H$ to connect the two vertices.
    It is easy to see that the total length $|T'|$ of $T'$ is at most that of~$T$.
    We now consider the flip sequence corresponding to the path $P$.
    Without loss of generality, assume that the leftmost bottommost point in the vertex gadget $R_u$ of $u$ is~$(0,0)$ and that in the vertex gadget $R_v$ of $v$ is $(4|L_H|, 2|L_V|)$.
    Note that the flip sequence corresponds to an $M$-alternating path.
    We say an edge on this path is \emph{relevant} if it is in~$M$ and not in $R_u$ nor in $R_v$.
    By the choice of $u$ and $v$, a relevant edge must be happy.
    In other words, their incident vertices of a relevant edge are of the form $(2\alpha,\beta)$ and $(2\alpha+1,\beta)$; we label such an edge by the vertex of the lower coordinate (i.e, $(2\alpha,\beta)$).
    Observe that for two consecutive relevant edges, either only the $x$-coordinates differ by two or only the $y$-coordinates differ by one.
    This implies that the relevant edges is at least $2|L_H| + 2|L_V| -1$.
    Since these edges are happy, we need to traverse at least twice.
    This implies that the number of flips associated with the pair $u, v$ is $4|L_H| + 4|L_V| - 2$.
    Summing over all possible pairs $u,v$ and taking into account the additional $5k$ flips to flip all the alternating cycles, the length of entire flip sequence has to be at least $5k + 4|T'| -\sum_{(u,v)} 2$.
    Since $T$ is a tree, the number of pairs $u,v$ is at most $|S| \leq 2k$.
    Combining the preceding two sentences with the fact that the length of the flip sequence is at most $4\cdot (8n) \cdot \ell +8 k$, we have
    \[
        |T'| \leq (8n) \cdot \ell + \frac{7}{4}k.
    \]
    
    Since the refined instance has an optimal solution on the Hanan grid, the optimal value has to be a multiple of $8n$.
    Combining the above with the fact that the optimal value is at most $|T'|$, we conclude that the optimal value is at most $\ell$.
    \end{claimproof}

    This concludes the proof of~\cref{prop:grid-reconfiguration-length}.
\end{proof}
\end{document}